\pgfplotsset{compat=1.14}
\newtheorem{theorem}{Theorem}
\newtheorem{remark}{Remark}
\newtheorem{proposition}{Proposition}
\newtheorem{example}{Example}
\newtheorem{lemma}{Lemma}
\DeclareSymbolFont{bbold}{U}{bbold}{m}{n}
\DeclareSymbolFontAlphabet{\mathbbold}{bbold}
\newcommand{\1}{\mathbbold{1}}
\newcommand\blfootnote[1]{%
  \begingroup
  \renewcommand\thefootnote{}\footnote{#1}%
  \addtocounter{footnote}{-1}%
  \endgroup
}
\newcommand{\cD}{\mathcal{D}}
\newcommand{\cE}{\mathcal{E}}
\newcommand{\boldc}{\mathbf{c}}
\newcommand{\boldd}{\mathbf{d}}
\newcommand{\bolde}{\mathbf{e}}
\newcommand{\boldm}{\mathbf{m}}
\newcommand{\bolds}{\mathbf{s}}
\newcommand{\boldv}{\mathbf{v}}
\newcommand{\boldx}{\mathbf{x}}
\newcommand{\boldy}{\mathbf{y}}
\begin{document}
\title{Two Deletion Correcting Codes\\from Indicator Vectors}

\author{{Jin Sima}, \IEEEauthorblockN{{Netanel Raviv}, and {Jehoshua Bruck}}}
\vspace{-0.13ex}

\maketitle

\begin{abstract}
Construction of capacity achieving deletion correcting codes has been a baffling challenge for decades. A recent breakthrough by Brakensiek \textit{et al}., alongside novel applications in DNA storage, have reignited the interest in this longstanding open problem. In spite of recent advances, the amount of redundancy in existing codes is still orders of magnitude away from being optimal. In this paper, a novel approach for constructing binary two-deletion correcting codes is proposed. By this approach, parity symbols are computed from indicator vectors (i.e., vectors that indicate the positions of certain patterns) of the encoded message, rather than from the message itself. Most interestingly, the parity symbols and the proof of correctness are a direct generalization of their counterparts in the Varshamov-Tenengolts construction. Our techniques require $7\log(n)+o(\log(n)$ redundant bits to encode an~$n$-bit message, which is near-optimal.
\end{abstract}

\blfootnote{The work was presented in part at the IEEE International Symposium on Information Theory, July 2018. 
The work was supported in part by NSF grant CCF-1717884.
The work of Netanel Raviv was supported in part by the postdoctoral fellowship of the Center for the Mathematics of Information (CMI), Caltech, and in part by the Lester-Deutsch postdoctoral fellowship.

Jin Sima is with the Electrical Engineering Department, California Institute of Technology, Pasadena, CA, 91125, Email: jsima@caltech.edu. 

Netanel Raviv is with the Electrical Engineering Department, California Institute of Technology, Pasadena, CA, 91125, Email: netanel.raviv@gmail.com. 

Jehoshua Bruck is with the Electrical Engineering Department, California Institute of Technology, Pasadena, CA, 91125, Email: bruck@caltech.edu. 
}


\section{Introduction}
A \textit{deletion} in a binary sequence $\boldsymbol{c}=(c_1,\ldots,c_n)\in\{0,1\}^n$ is the case where a symbol is removed from~$\boldsymbol{c}$, which results in a subsequence length~$n-1$. Similarly, the result of a~$k$-deletion is a subsequence of~$\boldsymbol{c}$ of length~$n-k$.
A $k$-deletion code $\mathcal{C}$ is a set of~$n$-bit sequences, no two of which share a common subsequence of length $n-k$; and clearly, such a code can correct any~$k$-deletion.

It has been proved in \cite{levenshtein1966binary} that the largest size $L_k(n)$ of a~$k$-deletion code satisfies
\begin{equation}\label{eq1}
\frac{2^k(k!)^22^n}{n^{2k}}\lesssim L_k(n) \lesssim \frac{k!2^n}{n^k},
\end{equation}
which implies the existence of a~$k$-deletion code with at most $2k\log(n)+o(\log n)$ bits of redundancy for a constant~$k$.
However, to this day no explicit construction of such code is known beyond the case~$k=1$.


For~$k=1$, the well-known Varshamov-Tenengolts (VT)~\cite{vt1965} construction
\begin{equation}\label{eq2}
\left\{\boldsymbol{c}: \sum^n_{i=1}ic_i= 0 \bmod (n+1)\right\}
\end{equation}
can correct one deletion with not more than $\log (n+1)$ bits of redundancy~\cite{levenshtein1966binary}. 
Several attempts to generalize the VT construction to~$k>1$ have been made. In the construction of~\cite{helberg2002multiple}, a modified Fibonacci sequence is used as weights instead of~$(1,2,\ldots,n)$ in~\eqref{eq2}. In~\cite{paluncic2012multiple}, number-theoretic arguments are used to obtain~$k$-deletion correction in run-length limited sequences. Yet, both~\cite{helberg2002multiple} and~\cite{paluncic2012multiple} have rates that are asymptotically bounded away from~$1$.

The problem of finding an explicit~$k$-deletion code of rate that approaches~$1$ as~$n$ grows has long been unsettled. Only recently, a code with $O(k^2\log k \log n)$ redundancy bits and encoding/decoding complexity\footnote{Here~$O_k$ denotes \textit{parameterized complexity}, i.e., $O_k(n\log^4 n)=f(k)O(n\log^4 n)$ for some function~$f$.} of $O_k(n\log^4 n)$
was proposed in~\cite{brakensiek2016efficient}. This code is based on a~$k$-deletion code of length~$\log n$, which is constructed using computer search. Nevertheless, the constants that are involved in the work of~\cite{brakensiek2016efficient} are orders of magnitude away from the lower bound in~\eqref{eq1} even for~$k=2$, and the code is not systematic. Moreover, finding a~$k$-deletion correcting code with an asymptotic rate 1 as an extension of the VT construction remains widely open\footnote{For~$k=2$,~\cite{Ryan} has very recently improved the redundancy up to~$8\log n$ using techniques similar to~\cite{brakensiek2016efficient}, our techniques incur lower redundancy and complexity, and use a fundamentally different approach.}.


One such potential extension is using higher order parity checks
$\sum^n_{i=1}i^jc_i=0 \bmod( n^j+1)$ for $j=1,\ldots,t$, but counterexamples are easily constructible even for~$k=2$.
In this paper, we find that similar higher order parity checks work when $t=3$, given that we restrict our attention to sequences with no consecutive ones. Consequently, applying these parity checks on certain \textit{indicator vectors} yields the desired result. For~$a$ and~$b$ in~$\{0,1\}$ and a binary sequence~$\boldsymbol{c}$, the~$ab$-indicator~$\1_{ab}(\boldsymbol{c})\in\{0,1\}^{n-1}$ of~$\boldsymbol{c}$ is
	\begin{align*}
		\1_{ab}(c)_i=\begin{cases}
			1 & \mbox{if }c_i=a \mbox{ and }c_{i+1}=b\\
			0 &\mbox{else}.
		\end{cases}.
	\end{align*} 
Since any two~$10$ or~$01$ patterns are at least two positions apart, the~$10$- and~$01$-indicators of any~$n$-bit sequence do not contain consecutive ones, and hence higher order parity checks can be applied.

The parity checks in the proposed code rely on the following integer vectors.
\begin{align*}
    \boldm^{(0)} &\triangleq (1,2,\ldots,n-1)\\
	\boldm^{(1)} &\triangleq \left(1,1+2,1+2+3,\ldots,\frac{n(n-1)}{2}\right)\\
	\boldm^{(2)} &\triangleq \left(1^2,1^2+2^2,1^2+2^2+3^2,\ldots,\frac{n(n-1)(2n-1)}{6}\right).
\end{align*}
Further, for~$\boldc\in\{ 0,1 \}^n$ let
\begin{alignat}{3}\label{equation:fgfunction}
 f(\boldc)&\triangleq (&&\1_{10}(\boldc) \cdot \boldm^{(0)} \bmod 2n,\nonumber\\
       &~           &&\1_{10}(\boldc) \cdot \boldm^{(1)} \bmod n^2,\nonumber\\
       &~           &&\1_{10}(\boldc) \cdot \boldm^{(2)} \bmod n^3),\mbox{ and}\nonumber\\
 h(\boldc)&\triangleq (&&\1_{01}(\boldc)\cdot \1 \bmod 3,\1_{01}(\boldc)\cdot \boldm^{(1)}\bmod 2n),
\end{alignat}
where~$\cdot$ denotes inner product over the integers, and~$\1$ denotes the all~$1$'s vector.

For any integer~$k$ let~$B_k(\boldc)$ be the \textit{k-deletion ball} of~$\boldc$, i.e., the set of~$n$-bit sequences that share a common~$n-k$ subsequence with~$\boldc$. The main result of this paper, from which a code construction is immediate, is as follows.

\begin{theorem}\label{theorem:maincoding}
    For~any integer~$n \ge 3$ 
and~$N=n+7\log n +o(\log n)$, there exists an encoding function~$\mathcal{E}:\{0,1\}^n\rightarrow \{0,1\}^{N}$ and a decoding function~$\mathcal{D}:\{0,1\}^{N-2}\rightarrow \{0,1\}^{n}$ such for any~$\boldc\in\{0,1\}^n$ and subsequence~$\boldc'\in \{0,1\}^{N-2}$ of~$\mathcal{E}(\boldc)$, we have~$\mathcal{D}(\boldc')=\boldc$. In addition, functions~$\mathcal{E}$ and~$\mathcal{D}$ can be computed in~$O(n)$ time. 
\end{theorem}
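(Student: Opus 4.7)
The plan is to define $\mathcal{E}(\boldc)$ as the concatenation of $\boldc$ with a short suffix of length $7\log n + o(\log n)$ that stores binary representations of $f(\mathcal{E}(\boldc))$ and $h(\mathcal{E}(\boldc))$, protected against two deletions by a tiny inner $2$-deletion code (of length $o(\log n)$, obtained recursively or by brute force) and separated from $\boldc$ by a guard pattern such as $0^{a}10^{a}$ with $a=o(\log n)$ that no two-deletion can erase. Consistency between the stored parities and those of the whole word $\mathcal{E}(\boldc)$ is enforced by a standard fixed-point iteration on the suffix bits. The total redundancy splits as $\log(2n)+\log(n^{2})+\log(n^{3})$ for $f$, $\log 3+\log(2n)$ for $h$, and $o(\log n)$ for the guard and inner protection, giving $7\log n+o(\log n)$ as claimed.

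To decode a received word $\boldc'\in\{0,1\}^{N-2}$, first locate the guard (which survives two deletions by design), read off $f(\mathcal{E}(\boldc))$ and $h(\mathcal{E}(\boldc))$ from the protected suffix, and extract the prefix $\boldc''$ that is a two-subsequence of $\mathcal{E}(\boldc)$. Compute $\1_{10}(\boldc'')$ and $\1_{01}(\boldc'')$. Because every pair of $10$-patterns (and every pair of $01$-patterns) in $\mathcal{E}(\boldc)$ are at least two indices apart, both $\1_{10}(\mathcal{E}(\boldc))$ and $\1_{01}(\mathcal{E}(\boldc))$ lie in the family of binary sequences with no two consecutive ones; the core technical claim of the paper is a key lemma asserting that on this family the triple of residues in $f$ (respectively the pair in $h$) pins down a vector uniquely within its two-deletion ball. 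Applying this lemma twice recovers $\1_{10}(\mathcal{E}(\boldc))$ and $\1_{01}(\mathcal{E}(\boldc))$, which together reveal every bit-transition of $\mathcal{E}(\boldc)$; a single designated bit reserved in the suffix then fixes $\mathcal{E}(\boldc)$ and hence $\boldc$.

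The technical heart is the key lemma, which is the promised generalisation of the Varshamov--Tenengolts argument. Parametrising the discrepancy between two candidate preimages by the shifts $s_1,s_2$ of the two affected ones in the indicator vector, the no-consecutive-ones constraint pins $s_1,s_2$ into ranges of size $O(n)$; the three moment identities inherited from $f$ then form a Newton-style system whose integer solution is unique, and the moduli $2n$, $n^{2}$, $n^{3}$ are calibrated so that the observed residues reveal the integer values themselves. I expect the hardest step to be the case analysis translating two deletions in $\mathcal{E}(\boldc)$ into the corresponding effect on each indicator vector: deleting a single bit of $\mathcal{E}(\boldc)$ can alter both of its adjacent length-$2$ windows, so one must enumerate the local patterns around each deleted bit and show that, after cancellation, the net effect on $\1_{10}$ (and separately on $\1_{01}$) amounts to displacing at most two ones by bounded amounts, keeping us inside the scope of the lemma. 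Once these reductions are in place, linear-time encoding and decoding follow immediately since all arithmetic is on integers of magnitude polynomial in $n$.
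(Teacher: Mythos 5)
Your high-level architecture (append $f$- and $h$-type parities built from the indicator vectors, protect the short suffix separately, and invoke a generalized VT lemma) matches the paper's intent, but three steps in your plan do not go through as stated. First, you compute $f(\mathcal{E}(\boldc))$ and $h(\mathcal{E}(\boldc))$ on the \emph{whole} codeword and propose to resolve the resulting circularity by ``a standard fixed-point iteration on the suffix bits.'' There is no such standard iteration, and no argument that a fixed point exists: changing the suffix bits changes the indicator vectors and hence the parities. The paper avoids this entirely by setting $\mathcal{E}(\boldc)=(\boldc,f(\boldc),h(\boldc),r_3(f(f(\boldc),h(\boldc))),r_3(h(f(\boldc),h(\boldc))))$ -- the parities are functions of $\boldc$ alone, and the suffix is protected by a second layer of the same parities followed by a $3$-fold repetition code; no guard pattern is needed because the positions of the three segments in the received word are pinned down to within two indices, and the boundary between the first- and second-layer redundancy is located by computing the longest suffix that is a subsequence of the decoded repetition codeword. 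Second, and most seriously, you claim that ``the pair in $h$ pins down a vector uniquely within its two-deletion ball.'' It does not: $h$ carries only $\log 3+\log(2n)$ bits, far short of what a standalone two-deletion code for $\1_{01}$ would require. In the paper, Lemma~\ref{lemma:01indicator} recovers $\1_{01}(\boldc)$ only \emph{after} Lemma~\ref{lemma:10indicator} has established $\1_{10}(\boldc)=\1_{10}(\boldc')$; that equality forces the two sequences into the run-length form of Eq.~\eqref{cform} with matched run-pair sums, and only within that heavily constrained family do the two small $h$ residues suffice. Your plan to ``apply the lemma twice,'' once per indicator, therefore fails for the $01$-indicator.

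Third, linear-time decoding is not ``immediate.'' The natural decoder enumerates the $O(n^2)$ ways to reinsert two symbols into $\1_{10}(\boldd)$ and checks the parities, which costs $O(n^3)$ (or $O(n^2)$ with incremental updates). The paper gets $O(n)$ by organizing the candidates into matrices $A^{(e)}$ indexed by the two insertion choices, proving they are monotone along rows and columns with repeated values forcing identical candidates (Proposition~\ref{proposition:nondecreasing}), and running a staircase search with constant-time entry updates (Eqs.~\eqref{equation:columnupdate}--\eqref{equation:rowupdate}); the $\star$-entries and the no-adjacent-ones check require additional care. None of this is supplied by your argument, so the complexity claim in the theorem is unsupported as written. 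Your redundancy accounting ($6\log n$ for $f$, roughly $\log n$ for $h$, plus $o(\log n)$) is correct and matches the paper's $N_1=7\log n+2$ and $N_2=21\log(7\log n+2)+6$.
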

To prove this, we first show that the parities~$f(\boldc)$ and~$h(\boldc)$ can be used to correct two deletions.
\begin{theorem}\label{theorem:main}
    For~$\boldc,\boldc'\in\{ 0,1 \}^n$,
    if ${\boldc\in B_2(\boldc')}$, $f(\boldc)=f(\boldc')$, and~$h(\boldc)=h(\boldc')$, then~$\boldc=\boldc'$.
\end{theorem}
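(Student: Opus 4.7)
The plan is to reduce the statement to generalized Varshamov-Tenengolts lemmas on the two indicator vectors, which by construction contain no consecutive $1$'s. First, a case analysis on the triple $(c_{p-1}, c_p, c_{p+1})$ shows that deleting position $p$ in $\boldc$ replaces the adjacent pair $\bigl(\1_{10}(\boldc)_{p-1}, \1_{10}(\boldc)_p\bigr)$ by the single entry $\1\{c_{p-1}=1,\, c_{p+1}=0\}$, and this new entry always equals one of the two it replaces; an identical argument works for $\1_{01}$. Hence each deletion in $\boldc$ induces exactly one deletion in each indicator vector, so the hypothesis $\boldc \in B_2(\boldc')$ lifts to $\1_{10}(\boldc) \in B_2(\1_{10}(\boldc'))$ and $\1_{01}(\boldc) \in B_2(\1_{01}(\boldc'))$.

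Next I would prove a sub-lemma: for binary sequences $\bolda, \bolda'$ of length $m$ with no consecutive $1$'s and $\bolda \in B_2(\bolda')$, equality of the three weighted sums $\bolda \cdot \boldm^{(j)}$ modulo $2m, m^2, m^3$ for $j = 0, 1, 2$ forces $\bolda = \bolda'$. The argument fixes a length-$(m-2)$ common subsequence $\bolda^{*}$, parametrises the two insertions that reconstruct $\bolda$ and $\bolda'$ from $\bolda^{*}$, uses the sparsity constraint to bound $|\bolda \cdot \boldm^{(j)} - \bolda' \cdot \boldm^{(j)}|$ below the corresponding modulus, and peels off the insertion positions using the three moments in the order $j = 0, 1, 2$. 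Applied with $\bolda = \1_{10}(\boldc)$ and $\bolda' = \1_{10}(\boldc')$, this yields $\1_{10}(\boldc) = \1_{10}(\boldc')$ directly from $f(\boldc) = f(\boldc')$.

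Now I would leverage $\1_{10}(\boldc) = \1_{10}(\boldc')$ to handle $\1_{01}$ with only the two checks of $h$. For any binary word, the difference $\sum \1_{01}(\boldc) - \sum \1_{10}(\boldc) \in \{-1, 0, 1\}$ is determined by $(c_1, c_n)$, so the matching $10$-counts force $|\sum \1_{01}(\boldc) - \sum \1_{01}(\boldc')| \le 2$, and the mod-$3$ check inside $h$ then forces the $01$-counts to coincide exactly. A weight-restricted variant of the sub-lemma (equal length, equal weight, no consecutive $1$'s, $B_2$ relation) shows that a single first-moment check modulo $2n$ suffices once the weight is fixed, yielding $\1_{01}(\boldc) = \1_{01}(\boldc')$; fixing the weight eliminates the leading order of freedom introduced by the two insertions, reducing the needed checks from three to one. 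Both indicator vectors together determine every transition $(c_i, c_{i+1})$, so $\boldc$ is reconstructed from $c_1$. If $c_1 = c_1'$ then $\boldc = \boldc'$; otherwise a short induction forces $\{\boldc, \boldc'\} = \{0^n, 1^n\}$, contradicting $\boldc \in B_2(\boldc')$ for $n \ge 3$.

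The hard part is establishing the two sub-lemmas: bounding the contribution of two insertions to each higher-order moment tightly under the no-consecutive-$1$'s constraint, so that the declared moduli $2m, m^2, m^3$ (respectively $2n$ and $3$ in the weight-restricted case) are injective on the admissible differences, while still small enough to keep the total redundancy at the advertised $7 \log n + o(\log n)$ bits.
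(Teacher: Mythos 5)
Your overall architecture matches the paper's: lift the two deletions to the indicator vectors (the paper's Lemma~\ref{lemma:twoDeletions10}), use the three $f$-checks to force $\1_{10}(\boldc)=\1_{10}(\boldc')$ (Lemma~\ref{lemma:10indicator}), use the count-mod-$3$ check plus one more weighted check to force $\1_{01}(\boldc)=\1_{01}(\boldc')$ (Lemma~\ref{lemma:01indicator}), and reconstruct $\boldc$ from the two indicators (Lemma~\ref{lemma:indicator}). The first, second and last steps are sound in outline, although ``peels off the insertion positions using the three moments in the order $j=0,1,2$'' conceals essentially all of the difficulty of the $\1_{10}$ step: the paper needs a case analysis on the interleaving of the four deletion positions in the two indicator strings and, in the interleaved case, a Vandermonde-determinant argument on three index sets (Lemmas~\ref{lemma:g} and~\ref{lemma:g+}); the no-adjacent-$1$'s hypothesis enters only there, not in the magnitude bounds.

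The genuine gap is in the $\1_{01}$ step. The check you propose --- the first moment $\1_{01}(\boldc)\cdot\boldm^{(0)}\bmod 2n$ after fixing the weight --- provably fails, even with $\1_{10}(\boldc)=\1_{10}(\boldc')$ already in hand. Take $\boldc=(0,1,1,0,0,1,1)$ and $\boldc'=(0,0,1,0,1,1,1)$: they share the common subsequence $(0,1,0,1,1)$ of length $n-2=5$, they satisfy $\1_{10}(\boldc)=\1_{10}(\boldc')=(0,0,1,0,0,0)$, and their $01$-indicators $(1,0,0,0,1,0)$ and $(0,1,0,1,0,0)$ have equal weight $2$ and equal first moment $1+5=2+4=6$, yet $\boldc\ne\boldc'$. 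The obstruction is structural: equality of the $10$-indicators forces the $01$-indicators to differ (if at all) in one or two adjacent swap pairs, and two swap pairs of opposite orientation contribute $+1$ and $-1$ to the first moment, cancelling exactly. This is why the paper's $h$ uses the triangular weights $\boldm^{(1)}_t=\binom{t+1}{2}$ rather than $\boldm^{(0)}_t=t$; with those weights the admissible differences are $\pm(t+1)$, $\pm(2t+3)$, $\pm(s-t)$, $\pm(s+t+2)$, all nonzero modulo $2n$ (in the example, $16-13=3\not\equiv 0\bmod 14$). Relatedly, your stated justification --- ``fixing the weight eliminates the leading order of freedom, reducing the needed checks from three to one'' --- misidentifies the mechanism: a weight-restricted, no-adjacent-$1$'s pair in $B_2$ relation can still differ in a weighted sum by $\Theta(n)$ or $\Theta(n^2)$, so one congruence modulo $2n$ cannot be injective from the weight restriction alone; it is the equality of the $10$-indicators that localizes the discrepancy to swaps within distance two and makes a single $\boldm^{(1)}$-check modulo $2n$ sufficient. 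You need to replace your $01$-check with the $\boldm^{(1)}$-weighted sum and carry out the case analysis of Lemma~\ref{lemma:01indicator} on how the $01$-indicators can differ given identical $10$-indicators.
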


Theorem~\ref{theorem:main} readily implies that that the functions~$h$ and~$f$ can serve as the redundancy bits in a~$2$-deletion code, and that the induced redundancy is at most $7\log(n)+o(\log n)$ (the additional term stems from protecting the redundancy bits). Furthermore, the encoding algorithm is trivial, and the decoding algorithm in Section~\ref{section:decoding} is linear. Most interestingly, the proof of Theorem~\ref{theorem:main} can be seen as a higher dimensional variant of the proof for the VT construction, as explained in the remainder of this section.

Clearly, a length~$n-1$ VT code can be seen as the set of sequences~$\boldc$ for which the values of~$\ell(\boldc)\triangleq \boldc \cdot \boldm^{(0)} \bmod n$ coincide. Adopting this point of view, the correctness of the VT construction can be proved by the following lemma, in which~$\ell_v(\boldc)\triangleq \boldc \cdot \boldv \bmod (v_{n-1}+1)$, and~$\boldv=(v_1,\ldots,v_{n-1})$ is a vector in~$\mathbb{Z}^{n-1}_+$. 
\begin{lemma}
For~$\boldc,\boldc'\in\{0,1\}^{n-1}$,~and~$\boldv\in \mathbb{Z}^{n-1}_+$, if~$\boldc\in B_1(\boldc')$,~$\ell_v(\boldc)=\ell_v(\boldc')$,~and~$v_1<v_2<\ldots<v_{n-1}$ then~$\boldc=\boldc'$.
\end{lemma}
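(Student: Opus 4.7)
The plan is to mirror the classical Levenshtein proof of VT correctness, adapted to arbitrary strictly increasing positive weights $\boldv$. First, I would fix a common supersequence: since $\boldc\in B_1(\boldc')$, there exist $\boldw\in\{0,1\}^n$ and indices $i\le j$ in $\{1,\ldots,n\}$ such that $\boldc$ is $\boldw$ with position $i$ deleted and $\boldc'$ is $\boldw$ with position $j$ deleted. A direct comparison of coordinates then shows that $\boldc$ and $\boldc'$ agree on all positions outside $[i,j-1]$, while $c_k=w_{k+1}$ and $c'_k=w_k$ for every $i\le k\le j-1$, so
\[
\ell_v(\boldc)-\ell_v(\boldc')=\sum_{k=i}^{j-1} v_k\,(w_{k+1}-w_k).
\]

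The second step is to rewrite this as an alternating sum over the transitions in the window. Let $k_1<k_2<\cdots<k_s$ enumerate the indices in $[i,j-1]$ with $w_{k}\neq w_{k+1}$. Because $\boldw$ is binary, the signs $w_{k_r+1}-w_{k_r}$ strictly alternate in $r$, so after possibly swapping the roles of $\boldc$ and $\boldc'$ (which is harmless as the hypothesis $\ell_v(\boldc)\equiv\ell_v(\boldc')$ is symmetric), the difference becomes
\[
\ell_v(\boldc)-\ell_v(\boldc')=v_{k_1}-v_{k_2}+v_{k_3}-\cdots\pm v_{k_s}.
\]
The case $s=0$ forces $w_i=\cdots=w_j$ and hence $\boldc=\boldc'$, so one may assume $s\ge 1$.

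The third and key step is to bound this alternating sum using $0<v_1<\cdots<v_{n-1}$. When $s$ is odd, the regrouping $v_{k_1}+(v_{k_3}-v_{k_2})+\cdots+(v_{k_s}-v_{k_{s-1}})$ makes every summand positive, while the complementary regrouping $v_{k_s}-(v_{k_{s-1}}-v_{k_{s-2}})-\cdots-(v_{k_2}-v_{k_1})$ shows the value is strictly less than $v_{k_s}$. When $s$ is even, the grouping $(v_{k_1}-v_{k_2})+\cdots+(v_{k_{s-1}}-v_{k_s})$ is strictly negative, while $-v_{k_s}+v_{k_1}+(v_{k_3}-v_{k_2})+\cdots+(v_{k_{s-1}}-v_{k_{s-2}})$ shows the value exceeds $-v_{k_s}$. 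In both cases the difference satisfies $0<|\ell_v(\boldc)-\ell_v(\boldc')|<v_{k_s}\le v_{n-1}$, contradicting $\ell_v(\boldc)\equiv\ell_v(\boldc')\pmod{v_{n-1}+1}$.

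The main obstacle is this final bounding step: the alternating sum may have up to $n-1$ nonzero terms and is therefore not termwise bounded by $v_{n-1}$, so one has to exploit the strict monotonicity of $\boldv$ via two complementary pair-groupings (pairing from the left and from the right) to extract sharp two-sided bounds of the form $0<|\ell_v(\boldc)-\ell_v(\boldc')|<v_{n-1}+1$. Everything else is bookkeeping on the structure of a common supersequence.
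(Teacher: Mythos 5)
Your proof is correct, but it follows a different route from the paper's. You work with a common \emph{supersequence} $\boldw\in\{0,1\}^n$ and write the difference as $\sum_{k=i}^{j-1}v_k(w_{k+1}-w_k)$, i.e.\ you keep the weights $v_k$ intact and push the differencing onto the bits, then analyze the resulting alternating sum over the transition positions of $\boldw$ via two complementary pair-groupings (pairing from the left to show the sum is nonzero, from the right to bound its magnitude). The paper instead works with a common subsequence, performs what amounts to a summation by parts to obtain $\boldc\cdot\boldv-\boldc'\cdot\boldv=g_{\boldv}(k_1,(\boldc^{(k_1,k_2)},c'_{k_2}))$ with $g_{\boldv}(r,\boldx)=x_1v_r-x_sv_{r+s-2}+\sum_{t=2}^{s-1}x_t(v_{t+r-1}-v_{t+r-2})$, bounds $|g_{\boldv}|\le v_{n-1}$ to remove the modular reduction, and then concludes via Lemma~\ref{lemma:VT} by a two-case analysis on the last bit $x_s$ (all terms nonnegative when $x_s=0$, all nonpositive after recentering when $x_s=1$). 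The two decompositions are dual to one another; yours is the more classical Levenshtein-style argument and arguably more self-contained, while the paper's $g$-function formulation is chosen deliberately because it is the object that generalizes to the two-deletion setting (Lemmas~\ref{lemma:g} and~\ref{lemma:g+}). Two small points: your claim that the odd-$s$ sum is \emph{strictly} less than $v_{k_s}$ fails for $s=1$, where it equals $v_{k_1}$ exactly, but the bound $0<|\ell_v(\boldc)-\ell_v(\boldc')|\le v_{n-1}<v_{n-1}+1$ that you actually need still holds; and the passage from a common length-$(n-2)$ subsequence to a common length-$n$ supersequence is asserted rather than proved, though this is standard for a single deletion and is at the same level of detail as the paper's own unproved assertion of~\eqref{equation:VTccprime}.
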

In turn, the proof of this lemma can be completed by defining the following function. For a vector~$\boldv\in\mathbb{Z}^{n-1}_+$, an integer~$r\in[n-1]$, and a binary vector~$\boldx=(x_1,\ldots,x_s)$ with~$r+s-2\le n-1$, let
\begin{align}\label{equation:g}
    g_\boldv(r,\boldx)&\triangleq \boldx\cdot ((\boldv^{(r,r+s-2)},0)-(0,\boldv^{(r,r+s-2)}))\nonumber\\ &=x_1v_r-x_sv_{r+s-2}+\sum_{t=2}^{s-1}x_t(v_{t+r-1}-v_{t+r-2}),
\end{align}
where~$\boldv^{(r,r+s-2)}\triangleq (v_r,v_{r+1},\ldots,v_{r+s-2})$, and '$\cdot$' denotes inner product. Let~$k_1$ and~$k_2$ ($k_1<k_2$) be the indices of the deletions after which~$\boldc$ and~$\boldc'$ are identical.
Then we have
\begin{align}\label{equation:VTccprime}
	\boldc_{t} &= \1_{10}(\boldc')_t &\mbox{ if }& t<k_1\nonumber\\
	&~&\mbox{ or }&t>k_2,\mbox{ and}\nonumber\\
	\boldc_{t+1} &= \1_{10}(\boldc')_t &\mbox{ if }& k_1 \le t\le k_2-1.
\end{align}
One can find that
\begin{align}\label{equation:differenceofweightedsum}
\boldc \cdot \boldv - \boldc' \cdot \boldv = &\sum^{k_1-1}_{t=1}c_t v_t+c_{k_1}v_{k_1}+\sum^{k_2}_{t=k_1+1}c_t v_t+\sum^{n}_{t=k_2+1}c_t v_t\nonumber\\
&-\sum^{k_1-1}_{t=1}c_t v_t+\sum^{k_2-1}_{t=k_1}c_{t+1} v_t+c'_{k_2}v_{k_2}+\sum^{n}_{t=k_2+1}c_t v_t\nonumber\\
&=c_{k_1}v_{k_1} + \sum^{k_2}_{t=k_1+1}c_t(v_t-v_{t-1}) -c_{k_2}v_{k-2}\nonumber\\
&=g_{\boldv}(k_1,(\boldc^{(k_1,k_2)},c'_{k_2}))
\end{align}
Hence, if~$\ell_v(\boldc)=\ell_v(\boldc')$ then~$g_{\boldv}(k_1,(\boldc^{(k_1,k_2)},c'_{k_2}))\equiv 0\bmod {(v_{n-1}+1)}$.
Furthermore, since
\begin{align*}
-v_{n-1}&\le -v_{r+s-2}\le x_1v_r-x_sv_{r+s-2}+\sum_{t=2}^{s-1}x_t(v_{t+r-1}-v_{t+r-2}) \\
&\le v_r+\sum_{t=2}^{s-1}(v_{t+r-1}-v_{t+r-2})=v_{r+s-2}\le v_{n-1},
\end{align*}
it follows that~$\ell_v(\boldc)=\ell_v(\boldc')$ if and only if~$ g_{\boldv}(k_1,(\boldc^{(k_1,k_2)},c'_{k_2}))=0$. Therefore, the proof is concluded by the following lemma. 
\begin{lemma}\label{lemma:VT}
    For integers~$r$ and~$s$ such that~${r+s-2\le n-1}$, a vector $\boldv\in\mathbb{Z}^{n-1}_+$, and an~$s$-bit binary vector~$\boldx$,
    if~$g_{\boldv}(r,\boldx)=0$ and~$v_1<\ldots<v_{n-1}$ then~$\boldx$ is a constant vector.
\end{lemma}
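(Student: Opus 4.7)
The plan is to rewrite $g_\boldv(r,\boldx)$ in terms of the forward differences of $\boldv$, so that a telescoping identity exposes the two trivial solutions. I would set $y_1 \triangleq v_r$, $y_t \triangleq v_{t+r-1}-v_{t+r-2}$ for $t=2,\ldots,s-1$, and $y_s \triangleq v_{r+s-2}$. Strict monotonicity of $\boldv$ guarantees that every $y_t$ is a positive integer, and in this notation the hypothesis $g_\boldv(r,\boldx)=0$ becomes the single equation
\[
\sum_{t=1}^{s-1} x_t\, y_t \;=\; x_s\, y_s.
\]

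The key observation, which is really the only creative step in the proof, is the telescoping identity
\[
y_1+\cdots+y_{s-1} \;=\; v_r + \sum_{t=2}^{s-1}(v_{t+r-1}-v_{t+r-2}) \;=\; v_{r+s-2} \;=\; y_s.
\]
From here a two-case split on the binary value of $x_s$ closes the argument. If $x_s=0$, the right-hand side of the displayed equation is $0$, and positivity of the $y_t$'s forces $x_t=0$ for every $t<s$, so $\boldx$ is the all-zeros vector. If $x_s=1$, the right-hand side equals $\sum_{t<s} y_t$, hence $\sum_{t<s}(1-x_t)\, y_t = 0$, and the same positivity argument yields $x_t=1$ for every $t<s$, so $\boldx$ is the all-ones vector. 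Either way, $\boldx$ is constant.

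The main (and essentially only) obstacle is spotting that the coefficients of $g_\boldv(r,\cdot)$ telescope; once that is noticed, the conclusion uses nothing about $\boldv$ beyond strict positivity of its consecutive differences, which matches the hypothesis exactly. Note also that the argument gives a cleaner parallel to the classical VT proof, where the analogous telescoping step is hidden inside the bound $|g_\boldv(r,\boldx)|\le v_{n-1}$ derived in~\eqref{equation:differenceofweightedsum}.
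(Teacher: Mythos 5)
Your proof is correct and is essentially the paper's own argument: the paper also splits on the value of $x_s$, treats the case $x_s=0$ by nonnegativity of the terms, and in the case $x_s=1$ uses exactly your telescoping identity $v_r+\sum_{t=2}^{s-1}(v_{t+r-1}-v_{t+r-2})=v_{r+s-2}$ to bound $g_{\boldv}(r,\boldx)\le 0$ with equality iff $\boldx=\1$. Your $y_t$ notation merely makes that telescoping explicit; the substance is identical.
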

\begin{proof}
We dinstinguish between two cases according to the value of~$x_s$.
On the one hand,
if $x_s=0$, then it is readily verified that $g_{\boldv}(r,\boldx)$ is the sum of nonnegative terms. In which case, the equation $g_{\boldv}(r,\boldx)=0$ holds if and only if~$\boldx=0$.

On the other hand, if $x_s=1$, then
\begin{align}\label{Equation:lemmaVT}
g_{\boldv}(r,\boldx)&=v_{r}x_1+\sum^{s-1}_{t=2}(v_{t+r-1}-v_{t+r-2})x_t - v_{r+s-2}\nonumber\\
&\le v_{r}+\sum^{s-1}_{t=2}(v_{t+r-1}-v_{t+r-2}) - v_{r+s-2}=0.
\end{align}
The equality holds if and only if~$\boldx=1$. 
\end{proof}
\begin{remark}
The VT code is the special case when $\boldv=(1,\ldots,n)$. From Lemma~\ref{lemma:VT} we have that~$c_{k_1}=\ldots=c_{k_2}=c'_{k_2}$. According to Equation~\ref{equation:VTccprime}, this implies that~$c'_t=c_{t+1}=c_t$ for~$k_1\le t\le k_2-1$ and~$c'_t=c_t$ for~$t<k_1$ or~$t\ge k_2$.
\end{remark}

The crux of proving Theorem~\ref{theorem:main} boils down to the following higher dimensional variant of Lemma~\ref{lemma:VT}.

\begin{lemma}\label{lemma:g}
    For integers~$r_1,r_2,s_1$, and~$s_2$ such that~$r_2>r_1+s_1-2$ and~$r_2+s_2-2\le n-1$, and binary sequences~$\boldx$ and~$\boldy$ of lengths~$s_1$ and~$s_2$, respectively, if
     \begin{align}\label{equation:lemmagEq}
        g_{\boldm^{(0)}}(r_1,\boldx)+\lambda g_{\boldm^{(0)}}(r_2,\boldy)&=0\mbox{, and}\nonumber\\
        g_{\boldm^{(1)}}(r_1,\boldx)+\lambda g_{\boldm^{(1)}}(r_2,\boldy)&=0,
    \end{align}
    where~$\lambda=\pm 1$, then~$\boldx$ and~$\boldy$ are constant vectors.
\end{lemma}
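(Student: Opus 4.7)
The plan is to mimic the sign-and-bound architecture of the proof of Lemma~\ref{lemma:VT}, but now leveraging both equations in~\eqref{equation:lemmagEq} to disambiguate when the first one alone is insufficient. As a preliminary step I would verify that the inequality analysis in~\eqref{Equation:lemmaVT} extends verbatim to $\boldm^{(1)}$: since the consecutive differences $m^{(1)}_i - m^{(1)}_{i-1}=i$ are positive, the same telescoping manipulation shows that $g_{\boldm^{(j)}}(r,\boldz)$ is nonnegative when $z_s=0$ and nonpositive when $z_s=1$, with equality in either case only if $\boldz$ is a constant vector.

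I would then case-split on $\lambda$ and on the terminal bits $x_{s_1},y_{s_2}$. When $g_{\boldm^{(0)}}(r_1,\boldx)$ and $\lambda g_{\boldm^{(0)}}(r_2,\boldy)$ share the same sign, the first equation of~\eqref{equation:lemmagEq} forces each to vanish individually, and the sign characterization above immediately gives that $\boldx$ and $\boldy$ are constants. In the remaining ``opposite-sign'' cases, the substitution $\boldz\mapsto \1-\boldz$ applied to whichever of $\boldx,\boldy$ has last bit $1$ converts the pair of equations, by a derivation in the style of~\eqref{equation:differenceofweightedsum}, into an identity of the form
\begin{align*}
A &= x_1 r_1+\sum_{t=2}^{s_1-1}x_t=\tilde y_1 r_2+\sum_{t=2}^{s_2-1}\tilde y_t,\\
B &= x_1\binom{r_1+1}{2}+\sum_{t=2}^{s_1-1}x_t(r_1+t-1)=\tilde y_1\binom{r_2+1}{2}+\sum_{t=2}^{s_2-1}\tilde y_t(r_2+t-1),
\end{align*}
with $A,B\ge0$; the other hard sub-cases collapse to this same canonical form after symmetric renamings.

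The decisive step is a pair of ratio estimates. The left-hand side of the first identity is bounded above by $r_1+s_1-2$, so if $\tilde y_1=1$ then $A\ge r_2>r_1+s_1-2\ge A$, a contradiction, and hence $\tilde y_1=0$. Every remaining weight on the right-hand side yields a $B$-to-$A$ contribution of at least $r_2+1$, whereas on the left each such contribution is at most $r_1+s_1-2$ (the $t=1$ ratio $(r_1+1)/2$ also falling below this bound once $r_1+2s_1\ge5$, which holds in the nontrivial regime). Combining $(r_2+1)A\le B\le (r_1+s_1-2)A$ with the strict hypothesis $r_2>r_1+s_1-2$ forces $A=0$, and hence $B=0$. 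The zero-value characterization from the preliminary step then gives $\boldx=0$ and $\tilde\boldy=0$, i.e., $\boldy=\1$, both constant as required.

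I expect the principal obstacle to be not any single inequality but the bookkeeping: correctly enumerating the four opposite-sign sub-cases (two choices of $\lambda$ crossed with the two ways in which $x_{s_1}$ and $y_{s_2}$ can disagree in sign) and verifying that the involution substitution in each one produces the same canonical form amenable to the ratio estimate. The underlying idea is simple, namely that the hypothesis $r_2>r_1+s_1-2$ separates the supports of the two weight patterns enough that no nontrivial cancellation survives in both moments simultaneously, but transcribing this cleanly across all sub-cases is the tedious part.
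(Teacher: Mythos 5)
Your proposal is correct and follows essentially the same route as the paper's proof: normalize via the complementation identity $g_{\boldv}(r,\overline{\boldx})=-g_{\boldv}(r,\boldx)$, dispose of the ``same-sign'' boundary-bit configurations by the positivity argument of Lemma~\ref{lemma:VT}, and in the remaining hard case exploit the separation $r_2>r_1+s_1-2$ between the two weight supports. Your ratio estimate $(r_2+1)A\le B\le (r_1+s_1-2)A$ is a repackaging of the paper's decisive step, in which the sets $S_2$ and $S_1^c$ would need equal cardinality and equal element-sum even though every element of one exceeds every element of the other, forcing both to be empty.
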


Additional technical claims, which involve the remaining ingredients of the redundancy bits, are given in the sequel. 

\section{Outline}
The proof of Theorem~\ref{theorem:main} is separated to the following two lemmas. In a nutshell, it is shown that for two confusable sequences, i.e., that share a common~$n-2$ subsequence, if the~$f$ redundancies coincide, then so are the~$10$-indicators. Then, it is shown that confusable sequences with identical~$10$-indicators and identical~$h$ redundancy have identical~$01$-indicators.

\begin{lemma}\label{lemma:10indicator}
For~$\boldc$ and~$\boldc'$ in~$\{0,1\}^n$, if ~$\boldc\in B_2(\boldc')$ and~$f(\boldc)=f(\boldc')$, then~$\1_{10}(\boldc)=\1_{10}(\boldc')$.
\end{lemma}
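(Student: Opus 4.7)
The plan is to mimic the one-deletion VT-style argument exhibited in~\eqref{equation:VTccprime}--\eqref{equation:differenceofweightedsum}, but carried out on the $10$-indicator vectors $\1_{10}(\boldc)$ and $\1_{10}(\boldc')$ and extended from one to two local ``modification windows''.

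First, I would exploit the hypothesis $\boldc\in B_2(\boldc')$ to identify indices that delimit where the two sequences, after appropriate shifts, agree. This yields, in the spirit of~\eqref{equation:VTccprime}, two intervals on which $c_{t+i}=c'_t$ for a fixed shift $i\in\{0,1,2\}$. The observation I then need is that, outside two narrow windows, the indicator vectors $\1_{10}(\boldc)$ and $\1_{10}(\boldc')$ coincide up to the same local shifts, because $\1_{10}$ is determined by consecutive pairs: agreement of $\boldc$ with $\boldc'$ on a segment propagates to agreement of their $10$-indicators on a segment shortened by one at each endpoint.

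Second, I would emulate~\eqref{equation:differenceofweightedsum} by a telescoping computation that expresses, for each $j\in\{0,1,2\}$,
\begin{equation*}
\1_{10}(\boldc)\cdot \boldm^{(j)}-\1_{10}(\boldc')\cdot \boldm^{(j)} \;=\; g_{\boldm^{(j)}}(r_1,\boldx)+\lambda\, g_{\boldm^{(j)}}(r_2,\boldy),
\end{equation*}
where $\boldx,\boldy$ are binary vectors extracted from the two windows of the indicators and $\lambda=\pm 1$ is determined by how the two local modifications pair up. Because $|g_{\boldm^{(0)}}|<n$, $|g_{\boldm^{(1)}}|<n^2/2$, and $|g_{\boldm^{(2)}}|<n^3/2$, the moduli $2n, n^2, n^3$ in the definition of $f$ are chosen so that the congruences $f(\boldc)=f(\boldc')$ force the displayed quantity to vanish as an \emph{integer} for each $j$. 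This delivers the hypotheses~\eqref{equation:lemmagEq} of Lemma~\ref{lemma:g} (using the $j=0$ and $j=1$ components), which concludes that $\boldx$ and $\boldy$ are constant vectors.

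Third, I would invoke the structural fact that $\1_{10}$ of any binary sequence contains no consecutive ones, so a constant window of length $\ge 2$ in $\1_{10}$ must be all zeros. This collapses $\boldx=\boldy=0$ and hence $\1_{10}(\boldc)=\1_{10}(\boldc')$, as desired.

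The main obstacle will be the first two steps. A deletion in $\boldc$ does not correspond to a deletion in $\1_{10}(\boldc)$: it replaces two adjacent positions of the indicator by a single position whose value depends on the two surviving neighboring bits. Consequently, a case analysis on the bit values at the two deletion positions and on whether the two windows are well-separated, adjacent, or overlapping will be needed to produce the clean decomposition above, and in particular to guarantee the disjointness requirement $r_2>r_1+s_1-2$ imposed by Lemma~\ref{lemma:g}. I anticipate that the $\boldm^{(2)}$-parity, which is not consumed by Lemma~\ref{lemma:g}, serves precisely to disambiguate the degenerate configurations (short windows, windows adjacent to the sequence boundary, or coalescing windows) where the first two parities alone do not yet pin down the structure of $\boldx$ and $\boldy$.
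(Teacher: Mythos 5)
Your overall architecture --- pass to the $10$-indicators, telescope the difference of weighted sums into $g$-functions, use the size bounds to upgrade the congruences mod $2n$, $n^2$, $n^3$ to integer equalities, and invoke Lemma~\ref{lemma:g} --- is exactly the paper's strategy for two of its three cases. Two of your worries also dissolve more easily than you expect: Lemma~\ref{lemma:twoDeletions10} (via Table~\ref{table:indicatordeletion}) shows that a deletion in $\boldc$ \emph{does} induce a genuine single deletion in $\1_{10}(\boldc)$, so no case analysis on the values of the deleted bits is needed; and once each window of the indicator is shown to be a constant vector, the shift relations alone already give $\1_{10}(\boldc)=\1_{10}(\boldc')$ --- you never need to argue that the constant is $0$.

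The genuine gap is the configuration you defer to ``the main obstacle'': when the deletion positions in the two indicators interleave as $\ell_1<k_1\le\ell_2<k_2$ (the paper's Case~(c)), the middle segment of $\1_{10}(\boldc')$ is matched to a segment of $\1_{10}(\boldc)$ shifted by \emph{two} positions, and the resulting identity~\eqref{equation:g_m-Case(c)} is a sum of two $g$-terms whose index windows overlap and share a common subvector. This violates the disjointness hypothesis $r_2>r_1+s_1-2$ of Lemma~\ref{lemma:g}, and your decomposition $g_{\boldm^{(j)}}(r_1,\boldx)+\lambda g_{\boldm^{(j)}}(r_2,\boldy)$ with disjoint windows simply does not exist here. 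The paper needs the separate Lemma~\ref{lemma:g+}, which consumes all three moments \emph{and} the no-adjacent-ones property as a hypothesis, and whose conclusion is strictly weaker: besides the all-constant branch it admits the alternating branch~\eqref{equation:casecXY}, in which $\1_{10}(\boldc)_i+\1_{10}(\boldc)_{i+1}=1$ on the overlap. Your third step (constant window $\Rightarrow$ all zeros $\Rightarrow$ indicators equal) cannot dispose of this branch, since an alternating $0101\cdots$ pattern is perfectly consistent with the no-adjacent-ones constraint; the paper instead verifies directly that the alternating structure, combined with the double shift, still forces $\1_{10}(\boldc)=\1_{10}(\boldc')$. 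You correctly guessed that the $\boldm^{(2)}$ parity exists to handle the coalescing-windows situation, but the actual content --- the three-set cardinality/sum/sum-of-squares Vandermonde argument inside Lemma~\ref{lemma:g+} and the subsequent analysis of the alternating branch --- is the heart of the proof and is absent from your plan.
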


\begin{lemma}\label{lemma:01indicator}
For~$\boldc$ and~$\boldc'$ in~$\{0,1\}^n$ such that~$\boldc\in B_2(\boldc')$, if~$\1_{10}(\boldc)=\1_{10}(\boldc')$ and $h(\boldc)=h(\boldc')$, then $\1_{01}(\boldc)=\1_{01}(\boldc')$.
\end{lemma}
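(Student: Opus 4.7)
The plan is to exploit the rigidity imposed by $\1_{10}(\boldc)=\1_{10}(\boldc')$ to confine the discrepancies between $\1_{01}(\boldc)$ and $\1_{01}(\boldc')$ to a short list of explicit patterns, and then to rule each of them out using the two parities inside $h$. Writing $t_1<\cdots<t_m$ for the common $10$-positions, neither sequence may contain a $10$ pattern inside any of the blocks $[1,t_1]$, $(t_i,t_{i+1}]$ for $1\le i<m$, or $(t_m,n]$; each such block is therefore of the form $0^\ast 1^\ast$ and contributes at most one rising edge. Hence $\1_{01}(\boldc)$ and $\1_{01}(\boldc')$ can differ only at such rising-edge positions, at most one per block.

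To quantify the discrepancy I would use $\boldc\in B_2(\boldc')$. A rising-edge ``shift'' of magnitude $k$ inside a single block amounts to $k$ consecutive bit differences between $\boldc$ and $\boldc'$, whereas a prefix (respectively suffix) rising edge that appears in one sequence at offset $a$ from the sequence boundary but not in the other contributes $a$ bit differences. Since all of these differences lie in disjoint blocks, a direct LCS computation identifies $n-\mathrm{LCS}(\boldc,\boldc')$ as their sum, so $\mathrm{LCS}(\boldc,\boldc')\ge n-2$ caps that sum at $2$. Writing $\boldu:=\1_{01}(\boldc)-\1_{01}(\boldc')$, this restricts $\boldu$ to at most four nonzero entries of $\pm 1$: each shift deposits a $(+1,-1)$ dipole, and each prefix/suffix appearance or disappearance deposits a lone $\pm 1$.

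Next I would apply the parities. The constraint $\boldu\cdot\1\equiv 0\pmod 3$ kills every configuration with an unbalanced number of lone entries, namely a single appearance/disappearance, two of the same sign, or a dipole paired with a lone $\pm 1$. The surviving configurations are (i) one shift of magnitude $1$ or $2$, (ii) two unit shifts distributed over internal segments, prefix, and suffix, and (iii) a lone $+1$ balanced with a lone $-1$, necessarily forced to opposite endpoint positions by the cost-$\le 2$ budget. For each of these I would evaluate $\boldu\cdot\boldm^{(1)}$ explicitly using $\boldm^{(1)}_i=i(i+1)/2$ and verify that its absolute value lies strictly between $0$ and $2n$, contradicting $\boldu\cdot\boldm^{(1)}\equiv 0\pmod{2n}$.

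The main technical obstacle will be configuration (iii): the cost bound pins the $-1$ entry at position~$1$ and the $+1$ entry at position $n-1$ (or the mirror image), so that $\boldu\cdot\boldm^{(1)}=\pm(n(n-1)/2-1)$ has magnitude close to $n^2/2$ and sits exactly where it could conceivably be a multiple of $2n$. Ruling this out reduces to verifying $n(n-1)/2\not\equiv 1\pmod{2n}$ for every $n\ge 3$, which is a short case split on $n\bmod 4$; this is the one place where the specific choice of modulus $2n$ (rather than $n$) is essential.
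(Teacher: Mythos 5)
Your proposal is correct and follows essentially the same route as the paper's proof: decompose both sequences into $0^\ast 1^\ast$ blocks via the shared $10$-indicator, use the mod-$3$ count in $h$ to equalize the number of $01$-patterns, bound the total run-length discrepancy by $2$ from $\boldc\in B_2(\boldc')$, and eliminate the surviving configurations (one shift of magnitude $1$ or $2$, two unit shifts, and the first/last-position boundary case yielding $\pm(1-n(n-1)/2)$) via the second component of $h$ modulo $2n$. The only cosmetic difference is that you phrase the discrepancy budget through an LCS computation where the paper uses a triangle inequality on run lengths through the common $(n-2)$-subsequence; both yield the same bound.
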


From these lemmas it is clear that two~$n$-bit sequences that share a common~$n-2$ subsequence and agree on the redundancies~$f$ and~$h$ have identical~$10$- and~$01$-indicators, and hence the next simple lemma concludes the proof of Theorem~\ref{theorem:main}.

\begin{lemma}\label{lemma:indicator}
For~$\boldc$ and~$\boldc'$ in~$\{0,1\}^n$ such that~$\boldc\in B_2(\boldc')$, if~$\1_{10}(\boldc)=\1_{10}(\boldc')$ and~ $\1_{01}(\boldc)=\1_{01}(\boldc')$ then~$\boldc=\boldc'$.
\end{lemma}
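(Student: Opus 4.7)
The plan is to observe that the pair $(\1_{10}(\boldc),\1_{01}(\boldc))$ records both the location and the type of every transition in $\boldc$. Since $c_i=1,c_{i+1}=0$ and $c_i=0,c_{i+1}=1$ cannot hold simultaneously, the two indicators never have a $1$ at a common position; hence for each $i\in\{1,\ldots,n-1\}$, the pair $(\1_{10}(\boldc)_i,\1_{01}(\boldc)_i)\in\{(1,0),(0,1),(0,0)\}$ tells us either that $(c_i,c_{i+1})=(1,0)$, or $(c_i,c_{i+1})=(0,1)$, or only that $c_i=c_{i+1}$ (value unspecified). The identical statement holds for $\boldc'$, and by hypothesis the two indicator pairs coincide pointwise.

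I will split into two cases. The degenerate case is when both $\1_{10}(\boldc)$ and $\1_{01}(\boldc)$ are identically zero; then $\boldc$ has no transitions and must be constant, and the same holds for $\boldc'$, so $\boldc,\boldc'\in\{0^n,1^n\}$. Here I will invoke the hypothesis $\boldc\in B_2(\boldc')$ to exclude $\boldc\ne\boldc'$: any common length-$(n-2)$ subsequence of $0^n$ and $1^n$ would have to be simultaneously all-zero and all-one, which is impossible as soon as $n\ge 3$.

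In the remaining case, there is some index $i_0$ at which either $\1_{10}(\boldc)_{i_0}=1$ or $\1_{01}(\boldc)_{i_0}=1$. This pins down $(c_{i_0},c_{i_0+1})$, and pointwise equality of the indicators gives $(c'_{i_0},c'_{i_0+1})=(c_{i_0},c_{i_0+1})$. From this anchor I will propagate by a short induction: once $c_j=c'_j$ is known, the bit $c_{j+1}$ is determined by $c_j$ together with $(\1_{10}(\boldc)_j,\1_{01}(\boldc)_j)$ (it flips precisely when the position-$j$ indicator appropriate to $c_j$ equals $1$, and equals $c_j$ otherwise), and applying the same recipe to $\boldc'$ forces $c'_{j+1}=c_{j+1}$. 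A symmetric induction toward smaller indices yields $c_{j-1}=c'_{j-1}$, and we conclude $\boldc=\boldc'$.

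There is no substantial obstacle: the crux is simply recognizing that $c_i=c_{i+1}$ is the only ambiguity left by the two indicators, which is precisely why a single explicit transition (or, failing that, the $B_2$ hypothesis applied to the all-constant sequences) suffices to break the tie.
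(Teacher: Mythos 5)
Your proof is correct and follows essentially the same route as the paper's: both arguments note that the two indicators jointly determine every transition (position and direction), handle the transition-free case by observing that $0^n$ and $1^n$ cannot share a length-$(n-2)$ subsequence, and otherwise conclude $\boldc=\boldc'$ from the transitions. Your version merely spells out the anchor-and-propagate induction that the paper leaves implicit in the sentence ``if transitions happen in $\boldc$ or $\boldc'$, then $\boldc=\boldc'$.''
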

\begin{proof}
The conditions~$\mathbbm{1}_{10}(\boldsymbol{c})=\mathbbm{1}_{10}(\boldsymbol{c}')$ and~$\mathbbm{1}_{01}(\boldsymbol{c})=\mathbbm{1}_{01}(\boldsymbol{c}')$ imply that the ascending and descending transition positions of~$\mathbbm{1}_{01}(\boldsymbol{c})$ coincide with those of~$\mathbbm{1}_{01}(\boldsymbol{c}')$ respectively. Hence if transitions happen in~$\boldsymbol{c}$ or~$\boldsymbol{c}'$, then~$\boldsymbol{c}=\boldsymbol{c}'$. If no transitions happen in~$\boldsymbol{c}$ or~$\boldsymbol{c}'$ and~$\boldsymbol{c}\ne\boldsymbol{c}'$, then one of~$\boldsymbol{c}$ and~$\boldsymbol{c}'$ is all $0$'s vector and the other is all $1$'s vector. Since all $0$'s vector does not share a common subsequence of length~$n-2$ with all~$1$'s vector, we conclude that~$\boldsymbol{c}=\boldsymbol{c}'$.
\end{proof}
The proofs of Lemma~\ref{lemma:10indicator} and Lemma~\ref{lemma:01indicator} make extensive use of the following two technical claims, that are easy to prove.

\begin{lemma}\label{lemma:twoDeletions10}
For~$\boldc$ and~$\boldc'$ in~$\{0,1\}^n$, if~$\boldc\in B_2(\boldc')$ then $\1_{10}(\boldc)\in B_2(\1_{10}(\boldc'))$ and~$\1_{01}(\boldc)\in B_2(\1_{01}(\boldc'))$.
\end{lemma}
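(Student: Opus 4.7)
The plan is to reduce Lemma~\ref{lemma:twoDeletions10} to a one-step claim: \emph{a single deletion from $\boldc$ induces exactly one deletion in $\1_{10}(\boldc)$}, and symmetrically for $\1_{01}(\boldc)$. Given this, if $\boldd\in\{0,1\}^{n-2}$ is a common subsequence of $\boldc$ and $\boldc'$ obtained by deleting two coordinates from each, then applying the one-step claim twice yields that $\1_{10}(\boldd)\in\{0,1\}^{n-3}$ is a common subsequence of $\1_{10}(\boldc)$ and $\1_{10}(\boldc')$, which is exactly the assertion $\1_{10}(\boldc)\in B_2(\1_{10}(\boldc'))$. The conclusion for $\1_{01}$ then follows verbatim after swapping the roles of $0$ and $1$.

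To prove the one-step claim, fix an index $k\in\{1,\ldots,n\}$ and let $\boldc''$ denote the sequence obtained from $\boldc$ by deleting $c_k$. The boundary cases $k=1$ and $k=n$ are immediate: $\1_{10}(\boldc'')$ equals $\1_{10}(\boldc)$ with its first (resp.\ last) coordinate deleted. For $2\le k\le n-1$, a direct inspection of the definition of $\1_{10}$ shows that the $i$th coordinate of $\1_{10}(\boldc'')$ equals $\1_{10}(\boldc)_i$ for every $i\le k-2$ and equals $\1_{10}(\boldc)_{i+1}$ for every $i\ge k$, while the two consecutive entries $\1_{10}(\boldc)_{k-1}$ and $\1_{10}(\boldc)_{k}$ of the original indicator are collapsed in $\1_{10}(\boldc'')$ into a single bit $a$, where $a=1$ precisely when $c_{k-1}=1$ and $c_{k+1}=0$.

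The heart of the matter is then to verify that $a\in\{\1_{10}(\boldc)_{k-1},\,\1_{10}(\boldc)_{k}\}$ for each of the eight possible values of $(c_{k-1},c_k,c_{k+1})\in\{0,1\}^3$. This is a short but essential case check and is the only step requiring any computation; once it is carried out, deleting from $\1_{10}(\boldc)$ whichever of positions $k-1$ or $k$ carries the value different from $a$ yields exactly $\1_{10}(\boldc'')$, establishing the one-step claim. The main (and only) obstacle is this case analysis, which is mechanical and carries no conceptual difficulty.
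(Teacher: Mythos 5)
Your proposal is correct and follows essentially the same route as the paper: reduce to the single-deletion claim, handle the boundary deletions separately, and verify the eight cases of $(c_{k-1},c_k,c_{k+1})$ showing that the collapsed bit $a=[c_{k-1}=1 \mbox{ and } c_{k+1}=0]$ always coincides with one of $\1_{10}(\boldc)_{k-1},\1_{10}(\boldc)_k$ --- this case check is precisely the paper's Table~\ref{table:indicatordeletion}, and it does go through. The only difference is presentational: you defer the mechanical eight-case verification while the paper tabulates it explicitly, and you obtain the $\1_{01}$ statement by complementation rather than by a second row of the table.
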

\begin{proof}
We first show that if~$\boldsymbol{c}\in B_1(\boldsymbol{c}')$ then~$\mathbbm{1}_{10}(\boldsymbol{c})\in B_1(\mathbbm{1}_{10}(\boldsymbol{c}'))$ and~$\mathbbm{1}_{01}(\boldsymbol{c})\in B_1(\mathbbm{1}_{01}(\boldsymbol{c}'))$. To this end, it suffices to show that if~$\boldd\in\{0,1\}^{n-1}$ is obtained from~$\boldsymbol{c}$ by one deletion, then~$\mathbbm{1}_{10}(\boldd)$ ($\mathbbm{1}_{01}(\boldd)$) is obtained from~$\mathbbm{1}_{10}(\boldsymbol{c})$ ($\mathbbm{1}_{01}(\boldsymbol{c})$) by one deletion (see table~\ref{table:indicatordeletion}).
	\begin{table}\begin{center}
		\begin{tabular}{|c|c|c|c|c|c|c|c|c|} \hline
            $c_{i-1}c_{i}c_{i+1}$ & $0\mathbf{0}0$ & $0\mathbf{0}1$ & $0\mathbf{1}0$ & $0\mathbf{1}1$ & $1\mathbf{0}0$ & $1\mathbf{0}1$ & $1\mathbf{1}0$ & $1\mathbf{1}1$ \\\hline	
            $\mathbbm{1}_{10}(c)_{i-1}\mathbbm{1}_{10}(c)_{i}$& $0\mathbf{0}$ & $0\mathbf{0}$ & $0\mathbf{1}$ & $0\mathbf{0}$ & $1\mathbf{0}$ & $\mathbf{1}0$ & $\mathbf{0}1$ & $0\mathbf{0}$\\\hline
            $\mathbbm{1}_{01}(c)_{i-1}\mathbbm{1}_{01}(c)_{i}$& $0\mathbf{0}$ & $0\mathbf{1}$ & $\mathbf{1}0$ & $\mathbf{1}0$ & $0\mathbf{0}$ & $0\mathbf{1}$ & $\mathbf{0}0$ & $0\mathbf{0}$\\\hline
           \end{tabular}
           \end{center}
           \caption{All possible cases of deletions of~$c_i$ for~$2\le i \le n-1$ correspond to deletions in~$\mathbbm{1}_{10}(\boldsymbol{c})$. The deleted symbol is in bold.}\label{table:indicatordeletion}
           \end{table}

Further, it is easy to see that a deletion of~$c_1$ corresponds to a deletion of~$\mathbbm{1}_{10}(c)_1$ (resp. $\mathbbm{1}_{01}(c)_1$) and a deletion of~$c_n$ corresponds to a deletion of~$\mathbbm{1}_{10}(c)_{n-1}$ (resp. $\mathbbm{1}_{01}(c)_{n-1}$). Hence, it follows that if
	\begin{align*}
		\boldc &\overset{1 \mbox{ \small{del'}}}{\longrightarrow}\boldd\overset{1 \mbox{ \small{del'}}}{\longrightarrow}\bolde\\
		\boldc'&\overset{1 \mbox{ \small{del'}}}{\longrightarrow}\boldd'\overset{1 \mbox{ \small{del'}}}{\longrightarrow}\bolde
	\end{align*}
	then
	\begin{align*}
		\mathbbm{1}_{10}(\boldc) &\overset{1 \mbox{ \small{del'}}}{\longrightarrow}\mathbbm{1}_{10}(\boldd)\overset{1 \mbox{ \small{del'}}}{\longrightarrow}\mathbbm{1}_{10}(\bolde)\\
		\mathbbm{1}_{10}(\boldc')&\overset{1 \mbox{ \small{del'}}}{\longrightarrow}\mathbbm{1}_{10}(\boldd')\overset{1 \mbox{ \small{del'}}}{\longrightarrow}\mathbbm{1}_{10}(\bolde)\\
		\mathbbm{1}_{01}(\boldc) &\overset{1 \mbox{ \small{del'}}}{\longrightarrow}\mathbbm{1}_{01}(\boldd)\overset{1 \mbox{ \small{del'}}}{\longrightarrow}\mathbbm{1}_{01}(\bolde)\\
		\mathbbm{1}_{01}(\boldc')&\overset{1 \mbox{ \small{del'}}}{\longrightarrow}\mathbbm{1}_{01}(\boldd')\overset{1 \mbox{ \small{del'}}}{\longrightarrow}\mathbbm{1}_{01}(\bolde),
	\end{align*}
	which concludes the claim.
\end{proof}
\begin{lemma}\label{lemma:equivMod3}
For~$\boldc,\boldc'\in \{ 0,1 \}^n$, if~$\boldc\in B_2(\boldc')$ and~$\1_{01}(\boldc)\cdot \1 = \1_{01}(\boldc')\cdot\1 \bmod 3$, then~$\1_{01}(\boldc)\cdot \1=\1_{01}(\boldc')\cdot \1$.
\end{lemma}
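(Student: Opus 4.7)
The plan is to bound $|\1_{01}(\boldc)\cdot\1 - \1_{01}(\boldc')\cdot\1|$ by $2$, and then combine this with the mod-$3$ congruence to force equality. This works because any two integers that are congruent mod $3$ and differ by at most $2$ in absolute value must coincide.

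The first step is to observe that a single deletion in a binary sequence changes the Hamming weight of its $01$-indicator by either $0$ or $-1$. In fact, this is already implicit in the proof of Lemma~\ref{lemma:twoDeletions10}: if $\boldd$ is obtained from $\boldc\in\{0,1\}^n$ by one deletion, then $\1_{01}(\boldd)$ is obtained from $\1_{01}(\boldc)$ by one deletion (as witnessed by the bolded positions in Table~\ref{table:indicatordeletion} and the boundary cases $i=1$ and $i=n$). Since a single deletion removes exactly one coordinate from a $0/1$-vector, the Hamming weight either stays the same (if a $0$ is deleted) or drops by $1$ (if a $1$ is deleted). I would state this as an auxiliary observation at the top of the proof.

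The second step is to apply this observation along the two chains $\boldc \to \boldd \to \bolde$ and $\boldc' \to \boldd' \to \bolde$, where $\bolde\in\{0,1\}^{n-2}$ is a common subsequence of $\boldc$ and $\boldc'$, guaranteed by $\boldc\in B_2(\boldc')$. This yields
\begin{align*}
    \1_{01}(\boldc)\cdot\1 - \1_{01}(\bolde)\cdot\1 &\in \{0,1,2\},\\
    \1_{01}(\boldc')\cdot\1 - \1_{01}(\bolde)\cdot\1 &\in \{0,1,2\},
\end{align*}
so the triangle inequality gives $|\1_{01}(\boldc)\cdot\1 - \1_{01}(\boldc')\cdot\1| \le 2$. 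Combined with the hypothesis that these two weights are congruent modulo $3$, the difference must vanish, finishing the proof.

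Overall, the argument is short and essentially free of computation; the only mildly delicate point is the first step, which is really just the book-keeping already done for Lemma~\ref{lemma:twoDeletions10}, so I do not expect any real obstacle.
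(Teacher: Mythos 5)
Your proposal is correct and follows essentially the same route as the paper: both arguments reduce to the fact that the two indicators share a common $(n-3)$-bit subsequence (via Lemma~\ref{lemma:twoDeletions10}), bound the difference of the two weights by $2$, and then invoke the mod-$3$ congruence to force equality. The only cosmetic difference is that you track the weight loss along the two deletion chains to the common subsequence $\bolde$, whereas the paper bounds both weights from below by the weight of the mutual subsequence directly; these are the same estimate.
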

\begin{proof}
Since~$\boldsymbol{c}\in B_2(\boldsymbol{c}')$ it follows from Lemma~\ref{lemma:twoDeletions10} that~$\mathbbm{1}_{10}(\boldsymbol{c})\in B_2 (\mathbbm{1}_{10}(\boldsymbol{c}'))$, and thus~$\mathbbm{1}_{10}(\boldsymbol{c})$ and~$\mathbbm{1}_{10}(\boldsymbol{c}')$ have a mutual~$(n-3)$-bit string~$\bolds$. Clearly,
	\begin{align*}
		\bolds\cdot \mathbbm{1} &\le \mathbbm{1}_{10}(\boldsymbol{c})\cdot \mathbbm{1}\le \bolds\cdot \mathbbm{1} +2\mbox{, and}\\
		\bolds\cdot \mathbbm{1} &\le \mathbbm{1}_{10}(\boldsymbol{c}')\cdot \mathbbm{1}\le \bolds\cdot \mathbbm{1} +2,
	\end{align*}
	and thus~$|\mathbbm{1}_{10}(\boldsymbol{c})\cdot \mathbbm{1}-\mathbbm{1}_{10}(\boldsymbol{c}')\cdot \mathbbm{1}|\le 2$. However, since~$3$ divides~$|\mathbbm{1}_{10}(\boldsymbol{c})\cdot \mathbbm{1}-\mathbbm{1}_{10}(\boldsymbol{c}')\cdot \mathbbm{1}|$, we must have that $\mathbbm{1}_{10}(\boldsymbol{c})\cdot \mathbbm{1}=\mathbbm{1}_{10}(\boldsymbol{c}')\cdot \mathbbm{1}$.
\end{proof}
In addition, one of the cases of the proof of Lemma~\ref{lemma:10indicator} requires a specialized variant of Lemma~\ref{lemma:g}.
\begin{lemma}\label{lemma:g+}
Let $r_1,r_2,s_1,s_2$ and $s_3$ be positive integers that satisfy $r_2=r_1+s_1$ and $r_2+s_2+s_3\le n-1$, and let $\boldx\in\{0,1\}^{s_1+s_2+1}$ and~$\boldy\in\{0,1\}^{1+s_2+s_3}$ be such that
\begin{align*}
    (x_{s_1+1},x_{s_1+2},\ldots,x_{s_1+s_2})=(y_{2},y_{3},\ldots,y_{s_2+1}),
\end{align*}
and $(x_{s_1+1},x_{s_1+2},\ldots,x_{s_1+s_2})$ has no adjacent $1$'s. If
\begin{alignat}{2}\label{equation:lemmag+}
&g_{\boldm^{(0)}}(r_1,\boldx) + g_{\boldm^{(0)}}(r_2,\boldy) &&= 0,\nonumber\\
&g_{\boldm^{(1)}}(r_1,\boldx) + g_{\boldm^{(1)}}(r_2,\boldy) &&= 0,\mbox{ and}\nonumber\\
&g_{\boldm^{(2)}}(r_1,\boldx) + g_{\boldm^{(2)}}(r_2,\boldy) &&= 0,
\end{alignat}
then either~$x_1=\ldots=x_{s_1+s_2+1}=y_{1}=\ldots =y_{s_2+s_3+1}$ or
\begin{align}\label{equation:casecXY}
&x_1=x_2=\ldots=x_{s_1+1}=1-y_1,\nonumber\\
&x_{t}+x_{t+1}=1,\mbox{ for }t\in \{s_1+1,\ldots,s_1+s_2-1\}, \nonumber\\
&x_{s_1+s_2+1}+y_{s_2+1}=1,\mbox{ and}\nonumber\\
&y_{s_2+1}=\ldots = y_{s_2+s_3+1}.
\end{align}
\end{lemma}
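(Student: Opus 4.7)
The plan is to recast the three equations in~\eqref{equation:lemmag+} as polynomial moment conditions on a single $\{-1,0,1\}$-valued coefficient sequence and then deduce $c_i\equiv 0$ via a case analysis, from which both conclusions follow by direct inspection.

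First, I apply the telescoping identity $g_{\boldv}(r,\boldu)=\sum_{i=r}^{r+|\boldu|-2}(u_{i-r+1}-u_{i-r+2})\,v_i$ and combine the two $g$-contributions of each equation into a single sum. Using the overlap identity $x_{s_1+j}=y_{j+1}$ (for $1\le j\le s_2$) to simplify the overlap terms, each equation becomes $\sum_i c_i\,m^{(j)}_i=0$, where $c_i\in\{-1,0,1\}$ is given by $x_{i-r_1+1}-x_{i-r_1+2}$ on the left region $[r_1,r_1+s_1-1]$, by the \emph{second-order} difference $y_{i-r_2+1}-y_{i-r_2+3}$ on the overlap interior $[r_2,r_2+s_2-2]$, by the boundary value $y_{s_2}-x_{s_1+s_2+1}$ at $i=r_2+s_2-1$, and by $y_{i-r_2+1}-y_{i-r_2+2}$ on the right region $[r_2+s_2,r_2+s_2+s_3-1]$. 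Since $m^{(j)}_i$ is a polynomial in $i$ of degree $j+1$ with no constant term, the three equations are equivalent to the three moment conditions $\sum_i c_i\,i^k=0$ for $k=1,2,3$.

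The central step is to show that these three moment conditions, combined with the structural constraints---namely, nonzero left- and right-region $c_i$'s are signed transitions of a binary sequence and hence alternate in sign, and the no-adjacent-1's condition on $(y_2,\ldots,y_{s_2+1})$ severely restricts the sign pattern of the overlap-interior second-order differences---force $c_i=0$ for every~$i$. I would do this by a case analysis on the boundary values and on the lengths of the leading and trailing runs of $\boldx$ and $\boldy$, analogous to the argument in Lemma~\ref{lemma:g} but using the third moment to control the additional degree of freedom introduced by the overlap; this is precisely why Lemma~\ref{lemma:g}'s two-moment argument cannot be invoked directly. I expect this step to be the main obstacle, because the moment system $\sum_i c_i\,i^k=0$ on $\{-1,0,1\}$-valued sequences admits Prouhet--Tarry--Escott-type spurious solutions in general, and only the combined alternating-sign and no-adjacent-1 shape constraints suffice to eliminate them.

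Once $c_i\equiv 0$ is established, the conclusion is immediate: vanishing on the left forces $x_1=\cdots=x_{s_1+1}$; vanishing on the overlap interior forces the period-two pattern $y_j=y_{j+2}$ for $1\le j\le s_2-1$; the boundary forces $x_{s_1+s_2+1}=y_{s_2}$; and vanishing on the right forces $y_{s_2+1}=\cdots=y_{s_2+s_3+1}$. The no-adjacent-1's hypothesis together with the period-two pattern forces $(y_1,y_2,\ldots,y_{s_2+1})$ to be either identically zero---yielding the all-equal conclusion of case~A via the overlap identity and constancy of both ends---or strictly alternating $y_1,1-y_1,y_1,\ldots$, in which case $x_1=\cdots=x_{s_1+1}=y_2=1-y_1$, the overlap alternation translates to $x_t+x_{t+1}=1$ for $s_1+1\le t\le s_1+s_2-1$, the boundary gives $x_{s_1+s_2+1}=y_{s_2}=1-y_{s_2+1}$, and the right region is constant $y_{s_2+1}=\cdots=y_{s_2+s_3+1}$, exactly matching the four defining conditions of case~B.
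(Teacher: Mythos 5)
Your reduction is sound and matches the skeleton of the paper's own argument: writing each $g$ as a telescoping sum, merging the two contributions on the overlap $[r_2,r_2+s_2-1]$ into second-order differences of $\boldy$ plus the boundary term $y_{s_2}-x_{s_1+s_2+1}$, and observing that $\boldm^{(0)},\boldm^{(1)},\boldm^{(2)}$ span the same space as $i,i^2,i^3$, so that \eqref{equation:lemmag+} becomes three moment conditions on a $\{-1,0,1\}$-valued coefficient sequence $c$. Your endgame is also essentially right: both alternatives of the lemma are equivalent to $c\equiv 0$ once the no-adjacent-ones hypothesis is fed back in (modulo some care with small $s_2$).

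The problem is that the central step, $c\equiv 0$, is exactly where the lemma lives, and you do not prove it --- you explicitly defer it (``I would do this by a case analysis\ldots I expect this step to be the main obstacle''). You correctly observe that three moment conditions on a $\{-1,0,1\}$ sequence admit Prouhet--Tarry--Escott-type solutions in general, so some structural input is indispensable, but you never identify which one. The property the paper actually exploits is a \emph{separation of supports}: after a four-way case split on $(x_{s_1+s_2+1},y_{s_2+s_3+1})$, a sub-split on $y_1$, and complementation (Lemma~\ref{lemma:gproperty}) to halve the work, either the sum $g_{\boldm^{(0)}}(r_1,\boldx)+g_{\boldm^{(0)}}(r_2,\boldy)$ is one-signed term by term (so equality to zero forces each term to vanish and yields \eqref{equation:casecXY} or the constant case directly), or the three equations take the form $\sum_{t\in S_1}t^k-\sum_{t\in S_2}t^k\pm\sum_{t\in S_3}t^k=0$ for $k=0,1,2$ with $\max S_1<\min S_2\le\min S_3$. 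In the sign pattern $+,-,-$ the cardinality and first-moment equations alone force emptiness; in the pattern $+,-,+$ one needs the second moment and the strict positivity of the determinant $\sum_{i\in S_1,j\in S_2,k\in S_3}(j-i)(k-i)(k-j)$, which is where the third parity check is genuinely consumed. Without isolating this separation structure, ``alternating signs plus no adjacent ones'' does not by itself exclude spurious cancellations, so as written the proposal is a correct reformulation plus a correct postprocessing step wrapped around a missing core.
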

The following lemma shows a property of~$g_{\boldv}(r,\boldx)$, which will be useful in the proof of Lemma~\ref{lemma:g} and Lemma~\ref{lemma:g+} that are given in Section~\ref{section:gProofs}.
\begin{lemma}\label{lemma:gproperty}
For integers~$r$ and~$s$ such that~${r+s-2\le n-1}$, a vector $\boldv$, and an~$s$-bit binary vector~$\boldx$,
    if~$g_{\boldv}(r,\boldx)=0$, then~$g_{\boldv}(r,\overline{\boldx})=0$, where~$\overline{\boldx}\triangleq\1-\boldx$.
\end{lemma}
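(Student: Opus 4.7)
The plan is to exploit the linearity of $g_{\boldv}(r,\cdot)$ in its binary-vector argument, together with a direct telescoping computation showing that $g_{\boldv}(r,\1)=0$.

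First I would observe that, by the definition~\eqref{equation:g}, $g_{\boldv}(r,\boldx)$ is the inner product of $\boldx$ with the fixed integer vector $\boldw\triangleq(\boldv^{(r,r+s-2)},0)-(0,\boldv^{(r,r+s-2)})$, which depends only on $\boldv,r,s$ and not on $\boldx$. Consequently, $g_{\boldv}(r,\cdot)$ is a linear function of its binary-vector argument over the integers, so in particular
\begin{align*}
g_{\boldv}(r,\overline{\boldx}) &= g_{\boldv}(r,\1-\boldx) = g_{\boldv}(r,\1)-g_{\boldv}(r,\boldx).
\end{align*}

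Next I would evaluate $g_{\boldv}(r,\1)$ directly using~\eqref{equation:g}. Plugging $\boldx=\1$ into
\begin{align*}
g_{\boldv}(r,\boldx)=x_1v_r-x_sv_{r+s-2}+\sum_{t=2}^{s-1}x_t(v_{t+r-1}-v_{t+r-2})
\end{align*}
gives a telescoping sum, so that
\begin{align*}
g_{\boldv}(r,\1)=v_r-v_{r+s-2}+\sum_{t=2}^{s-1}(v_{t+r-1}-v_{t+r-2})=v_r-v_{r+s-2}+(v_{r+s-2}-v_r)=0.
\end{align*}

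Combining the two displays, under the hypothesis $g_{\boldv}(r,\boldx)=0$ we obtain $g_{\boldv}(r,\overline{\boldx})=0-0=0$, as required. There is no real obstacle here; the only subtlety worth highlighting is that the statement does not require $\boldv$ to be positive or increasing (unlike Lemma~\ref{lemma:VT}), since the proof relies solely on linearity and the telescoping identity $\1\cdot\boldw=0$, both of which hold for arbitrary $\boldv$.
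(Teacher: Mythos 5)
Your proof is correct and is essentially the same argument as the paper's: the paper subtracts the telescoping identity $v_r+\sum_{t=2}^{s-1}(v_{t+r-1}-v_{t+r-2})-v_{r+s-2}=0$ inside the expression for $g_{\boldv}(r,\boldx)$ to obtain $g_{\boldv}(r,\boldx)=-g_{\boldv}(r,\overline{\boldx})$, which is exactly your linearity-plus-$g_{\boldv}(r,\1)=0$ computation written inline. Your remark that no positivity or monotonicity of $\boldv$ is needed is accurate and consistent with how the paper states the lemma.
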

\begin{proof}
Since
\begin{align}
g_{\boldv}(r,\boldx)&=v_{r}x_1+\sum^{s-1}_{t=2}(v_{t+r-1}-v_{t+r-2})x_t - v_{r+s-2}x_s\nonumber\\
&=v_{r}x_1+\sum^{s-1}_{t=2}(v_{t+r-1}-v_{t+r-2})x_t - v_{r+s-2}x_s - v_r -\sum^{s-1}_{t=2}(v_{t+r-1}-v_{t+r-2}) +v_{r+s-2}\nonumber\\
&=v_{r}(x_1-1)+\sum^{s-1}_{t=2}(v_{t+r-1}-v_{t+r-2})(x_t-1) - v_{r+s-2}(x_s-1)=-g_{\boldv}(r,\overline{\boldx})\label{equation:opposite}
\end{align}
Hence if~$g_{\boldv}(r,\boldx)=0$, we have~$g_{\boldv}(r,\overline{\boldx})=0$.
\end{proof}
Lemma~\ref{lemma:01indicator} is proved in Section~\ref{section:01indicator}, and its more involved counterpart Lemma~\ref{lemma:10indicator} is proved in Section~\ref{section:10indicator}. Finally, Lemma~\ref{lemma:g} and Lemma~\ref{lemma:g+} are proved in Section~\ref{section:gProofs}.

\section{Proof of Lemma~\ref{lemma:01indicator}}\label{section:01indicator}
We now show that 
for any~$\boldc$ and~$\boldc'$ in~$\{0,1\}^n$ that satisfy~$\boldc\in B_2(\boldc')$, if~$\1_{10}(\boldc)=\1_{10}(\boldc')$ and $h(\boldc)=h(\boldc')$ (see~\eqref{equation:fgfunction} for definition of the $h$ function), then $\1_{01}(\boldc)=\1_{01}(\boldc')$.
Since~$\boldc$ and~$\boldc'$ have an identical~$10$-indicator, they can be written as
\begin{align}\label{cform}
\boldc&=0^{\pi_0}1^{\pi_1}0^{\pi_2}1^{\pi_3}\cdots 0^{\pi_{2\ell}}1^{\pi_{2\ell+1}},\nonumber\\
\boldc'&=0^{\tau_0}1^{\tau_1}0^{\tau_2}1^{\tau_3}\cdots 0^{\tau_{2\ell}}1^{\tau_{2\ell+1}},
\end{align}
where~$\{\pi_i\}_{i=0}^{2\ell+1}$ and~$\{\tau_i\}_{i=0}^{2\ell+1}$ are nonnegative integers such that~$\pi_i$ and~$\tau_i$ are strictly positive for every~$i\notin \{0,2\ell+1\}$, and such that~$\pi_{2i}+\pi_{2i+1}=\tau_{2i}+\tau_{2i+1}$ for all~$i\in\{0,1,\ldots,\ell\}$. In addition, since~$h(\boldc)_1=h(\boldc')_1$ it follows from Lemma~\ref{lemma:equivMod3} that~$\1_{01}(\boldc) \cdot \1 =\1_{01}(\boldc') \cdot \1$. 
Hence, we have
\begin{align*}
	\1_{01}(\boldc) \cdot \1 =\1_{01}(\boldc') \cdot \1=\ell+1   & \mbox{ if }&\pi_0>0,~\pi_{2\ell+1}>0\\
	\1_{01}(\boldc) \cdot \1 =\1_{01}(\boldc') \cdot \1=\ell   & \mbox{ if }&\pi_0>0,~\pi_{2\ell+1}\le 0\\
    &~&\mbox{ or }&\pi_0= 0,~\pi_{2\ell+1}>0 \\
	\1_{01}(\boldc) \cdot \1 =\1_{01}(\boldc') \cdot \1=\ell-1   & \mbox{ if }&\pi_0<0,~\pi_{2\ell+1}<0\\
\end{align*}
if~$\pi_0$ and~$\pi_{2\ell+1}$ (resp.~$\tau_0$ and~$\tau_{2\ell+1}$) are both positive then this number is~$\ell+1$, if precisely one of them is positive then it is~$\ell$, and if they are both zero it is~$\ell-1$.

Let~$\boldd=0^{\gamma_0}1^{\gamma_1}0^{\gamma_2}1^{\gamma_3}\cdots 0^{\gamma_{2\ell}}1^{\gamma_{2\ell+1}}\in\{0,1\}^{n-2}$ be a common subsequence of $\boldc$ and $\boldc'$ which is obtained by deleting two bits from either~$\boldc$ or~$\boldc'$, where~$\gamma_i\ge 0$ for all~$i$. Then, it is readily verified that
\begin{align*}
\sum^{2\ell+1}_{i=0}(\pi_i-\gamma_i) &= 2,~\sum^{2\ell+1}_{i=0}(\tau_i-\gamma_i) = 2,\mbox{, and hence}\\
\sum^{2\ell+1}_{i=1}|\pi_i-\tau_i| &\le \sum^{2\ell+1}_{i=1}|\pi_i-\gamma_i|+\sum^{2\ell+1}_{i=1}|\tau_i-\gamma_i| = 4.
\end{align*}
Moreover, since $\pi_{2i}+\pi_{2i+1}=\tau_{2i}+\tau_{2i+1}$ for all~$i\in\{0,1,\ldots,\ell\}$, it follows that $|\pi_{2i}-\tau_{2i}|=|\pi_{2i+1}-\tau_{2i+1}|$. Assuming for contradiction that the~$01$-indicators do not coincide implies either of the following cases.

\textbf{Case~(a)}. There exists an integer~$j\in [\ell]$ such that $|\pi_{2j}-\tau_{2j}|$ is either~$1$ or~$2$ and~$\pi_{2i}=\tau_{2i}$ for~$i\ne j$.

\textbf{Case~(b)}. There exist two integers~$m$ and~$r$ (where~$m<r$) such that $|\pi_{2m}-\tau_{2m}|=|\pi_{2r}-\tau_{2r}|=1$, and~$\pi_{2i}=\tau_{2i}$ for~$i\notin\{m,r\}$. 

In Case (a), since~$\pi_{2i}+\pi_{2i+1}=\tau_{2i}+\tau_{2i+1}$ for every~$i$ and~$\pi_{2i}=\tau_{2i}$ for every~$i\ne j$, it follows that~$\1_{01}(\boldc)$ and~$\1_{01}(\boldc')$ differ in precisely two positions~$s$ and~$t$ such that~$1\le s-t\le 2$. 
Hence, since the number of~$1$'s in the~$01$-indicators is equal, it follows that~$\1_{01}(\boldc )_{s}=\1_{01}(\boldc')_{t}$, $\1_{01}(\boldc)_{t}=\1_{01}(\boldc')_{s}$, and $\1_{01}(\boldc)_{s}\ne\1_{01}(\boldc)_{t}$, and therefore
\begin{alignat}{2}\label{equation:hc}
    h(\boldc)_2-h(\boldc')_2&=&&\;(\1_{01}(\boldc)_{s}-\1_{01}(\boldc')_{s}){s+1\choose 2}+\nonumber\\
    &~&&\;(\1_{01}(\boldc)_{t}-\1_{01}(\boldc')_{t}){t+1\choose 2}\nonumber\\
    &=&&\pm\left({s+1\choose 2}-{t+1\choose 2}  \right).
\end{alignat}
Since~$1\le s-t\le 2$, it follows that~\eqref{equation:hc} equals either~$\pm(t+1)$ or~$\pm(2t+3)$, and a contradiction follows since neither of which is~$0$ modulo~$2n$, .

Similarly, in Case~(b), if non of~$\pi_{2m},\tau_{2m},\pi_{2m+1},\tau_{2m+1},$ $\pi_{2r},\tau_{2r},\pi_{2r+1},\tau_{2r+1}$ is zero, then~$\1_{01}(\boldc)$ and~$\1_{01}(\boldc')$ differ in four positions~$s,s+1,t$, and~$t+1$, and hence
\begin{alignat}{2}\label{equation:Caseb}
    h(\boldc)_2-h(\boldc')_2&=&&\;(\1_{01}(\boldc)_{s}-\1_{01}(\boldc')_{s}){s+1\choose 2}+\nonumber\\
    &~&&\;(\1_{01}(\boldc)_{s+1}-\1_{01}(\boldc')_{s+1}){s+2\choose 2}+\nonumber\\
    &~&&\;(\1_{01}(\boldc)_{t}-\1_{01}(\boldc')_{t}){t+1\choose 2}+\nonumber\\
    &~&&\;(\1_{01}(\boldc)_{t+1}-\1_{01}(\boldc')_{t+1}){t+2\choose 2}.
\end{alignat}
Once again, since~$\1_{01}(\boldc)$ and~$\1_{01}(\boldc')$ have an identical number of~$1$'s, we have that
\begin{align*}
    \1_{01}(\boldc)_s&=\1_{01}(\boldc')_{s+1}&
    \1_{01}(\boldc)_{s+1}&=\1_{01}(\boldc')_s\\
    \1_{01}(\boldc)_{t}&=\1_{01}(\boldc')_{t+1}&
    \1_{01}(\boldc)_{t+1}&=\1_{01}(\boldc')_{t}\\
    \1_{01}(\boldc)_s&\ne\1_{01}(\boldc')_{s}&
    \1_{01}(\boldc)_t&\ne\1_{01}(\boldc')_{t}.
\end{align*}
This readily implies that~\eqref{equation:Caseb} equals either~$\pm(s-t)$ or~$\pm(s+t+2)$, and since non of which is~$0$ modulo~$2n$, another contradiction is obtained.
If~$\pi_{2m}=0$ (resp.~$\tau_{2m}=0$), by the discussion after Eq.~\eqref{cform} it follows that~$\tau_{2r+1}=0$ (resp.~$\pi_{2r+1}=0$), and hence~$\1_{01}(\boldc)$ and~$\1_{01}(\boldc')$ differ in the first and last positions. Hence,~\eqref{equation:Caseb} becomes~$\pm(1-\frac{n(n-1)}{2})$, which is nonzero modulo~$2n$, and the claim follows.

\section{Proof of Lemma~\ref{lemma:10indicator}}\label{section:10indicator}
Since~$c\in B_2(c')$ it follows that there exist integers~$i_1,i_2,j_1$, and~$j_2$ such that
\begin{align*}
		c &\overset{\mbox{\tiny{del' }}i_1}{\longrightarrow}d  \overset{\mbox{\tiny{del' }}j_1}{\longrightarrow}e\\
		c'&\overset{\mbox{\tiny{del' }}i_2}{\longrightarrow}d' \overset{\mbox{\tiny{del' }}j_2}{\longrightarrow}e
\end{align*}
and by Lemma~\ref{lemma:twoDeletions10} it follows that there exist integers~$\ell_1,\ell_2,k_1$, and~$k_2$ such that
\begin{align*}
	\1_{10}(c) &\overset{\mbox{\tiny{del' }}\ell_1}{\longrightarrow}\1_{10}(d) \overset{\mbox{\tiny{del' }}k_1}{\longrightarrow}\1_{10}(e)\\
	\1_{10}(c')&\overset{\mbox{\tiny{del' }}\ell_2}{\longrightarrow}\1_{10}(d')\overset{\mbox{\tiny{del' }}k_2}{\longrightarrow}\1_{10}(e).
\end{align*}

Due to symmetry between~$\boldc$ and~$\boldc'$, we distinguish between the following three cases. In each case, the difference between the~$f$ values of~$\boldc$ and~$\boldc'$ are given in terms of the function~$g$ (Eq.~\eqref{equation:g}). Further, the computation of these three differences, which is tedious but straightforward, is deferred to the appendices. 

\textbf{Case~(a)}. If $\ell_1\le \ell_2 < k_2 \le k_1$Ã¯Â¼Å’ (Fig.~\ref{figure:Casea}), 
then
\begin{align*}
	\1_{10}(\boldc)_{t} &= \1_{10}(\boldc')_t   & \mbox{ if }&t<\ell_1\\
    &~&\mbox{ or }&\ell_2< t< k_2 \\
    &~&\mbox{ or }& t> k_1,\\
	\1_{10}(\boldc)_{t+1} &= \1_{10}(\boldc')_t   & \mbox{ if }&\ell_1\le t\le \ell_2-1,\\
	\1_{10}(\boldc)_{t} &= \1_{10}(\boldc')_{t+1} & \mbox{ if }&k_2\le t\le k_1-1,
\end{align*}
Thus, for~$e\in\{0,1,2\}$,
    \begin{alignat}{2}\label{equation:g_m-Case(a)}
    	(\mathbbm{1}_{10}(\boldc)-\mathbbm{1}_{10}(\boldc'))\cdot \boldm^{(e)}
        &=\;& g_{\boldm^{(e)}}(\ell_1,(\1_{10}(\boldc)^{(\ell_1,\ell_2)},\1_{10}(\boldc')_{\ell_2}))-\nonumber\\ &~&g_{\boldm^{(e)}}(k_2,(\1_{10}(\boldc')^{(k_2,k_1)},\1_{10}(\boldc)_{k_1})).
    \end{alignat}

\textbf{Case~(b)}. If~$\ell_1\le \ell_2< k_1\le k_2$ (Fig.~\ref{figure:Caseb}), 
then
\begin{align*}
	\1_{10}(\boldc)_{t} &= \1_{10}(\boldc')_t & \mbox{ if }&t<l_1\\
	&~&\mbox{ or }&l_2<t<k_1.\\
    &~&\mbox{ or }&t>k_2.\\
	\1_{10}(\boldc)_{t+1} &= \1_{10}(\boldc')_t & \mbox{ if }&\ell_1\le t\le \ell_2-1\\
	&~&\mbox{ or }&k_1\le t\le k_2-1.
\end{align*}
Thus, for~$e\in\{0,1,2\}$,
\begin{alignat}{2}\label{equation:g_m-Case(b)}
(\mathbbm{1}_{10}(\boldc)-\mathbbm{1}_{10}(\boldc'))\cdot \boldm^{(e)}
&=\;&
g_{\boldm^{(e)}}(\ell_1,(\1_{10}(\boldc)^{(\ell_1,\ell_2)},\1_{10}(\boldc')_{\ell_2}))+\nonumber\\
&\phantom{=}&g_{\boldm^{(e)}}(k_1,(\1_{10}(\boldc)^{(k_1,k_2)},\1_{10}(\boldc')_{k_2})).
\end{alignat}

\textbf{Case~(c)}. If~$\ell_1<k_1\le \ell_2<k_2$ (Fig.~\ref{figure:Casec}), 
then
\begin{align*}
	\1_{10}(\boldc)_{t} &= \1_{10}(\boldc')_t &\mbox{ if }& t<l_1\\
	&~&\mbox{ or }&t>k_2,\\
	\1_{10}(\boldc)_{t+1} &= \1_{10}(\boldc')_t &\mbox{ if }& \ell_1 \le t\le k_1-2\\
	&~&\mbox{ or }&\ell_2+1\le t\le k_2-1,\\
	\1_{10}(\boldc)_{t+2} &= \1_{10}(\boldc')_t &\mbox{ if }& k_1-1\le t\le \ell_2-1.
\end{align*}
Thus, for~$e\in\{0,1,2\}$,
\begin{alignat}{3}\label{equation:g_m-Case(c)}
    (\mathbbm{1}_{10}(\boldc)-\mathbbm{1}_{10}(\boldc'))\cdot \boldm^{(e)}=\;&
	 g_{\boldm^{(e)}}(&&\ell_1,(\1_{10}(\boldc)^{(\ell_1,k_1-1)},\nonumber\\
	 &~&&\1_{10}(\boldc)^{(k_1+1,\ell_2+1)},\1_{10}(\boldc')_{\ell_2}))+\nonumber\\
	&g_{\boldm^{(e)}}(&&k_1, (\1_{10}(\boldc)^{(k_1,k_2)},\1_{10}(\boldc')_{k_2})).
\end{alignat}

\begin{figure} 
	\begin{tikzpicture}[line cap=round,line join=round,>=triangle 45,x=1cm,y=1cm]
\clip(-1,0) rectangle (16,4);
\fill[line width=2pt,fill=black,fill opacity=0.1] (1,1) -- (15,1) -- (15,3) -- (1,3) -- cycle;

\draw [line width=1pt] (1,2)-- (15,2);
\draw [line width=2pt] (1,1)-- (15,1);
\draw [line width=2pt] (15,1)-- (15,3);
\draw [line width=2pt] (15,3)-- (1,3);
\draw [line width=2pt] (1,3)-- (1,1);
\draw [line width=2pt] (3,3)-- (3,1);
\draw [line width=2pt] (4,3)-- (4,1);
\draw [line width=2pt] (6,3)-- (6,1);
\draw [line width=2pt] (7,3)-- (7,1);
\draw [line width=2pt] (9,3)-- (9,1);
\draw [line width=2pt] (10,1)-- (10,3);
\draw [line width=2pt] (12,3)-- (12,1);
\draw [line width=2pt] (13,3)-- (13,1);

\draw (0,2.5) node[anchor=center] {$\1_{10}(c)$};
\draw (0,1.5) node[anchor=center] {$\1_{10}(c')$};
\draw (6.5,1.5) node[anchor=center] {$\star$};
\draw (3.5,2.5) node[anchor=center] {$\star$};
\draw (12.5,2.5) node[anchor=center] {$\star$};
\draw (9.5,1.5) node[anchor=center] {$\star$};
\draw [line width=2pt] (3.5,1.5)-- (4.5,2.5);
\draw [line width=2pt] (4.5,1.5)-- (5.5,2.5);
\draw [line width=2pt] (5.5,1.5)-- (6.5,2.5);
\draw [line width=2pt] (9.5,2.5)-- (10.5,1.5);
\draw [line width=2pt] (10.5,2.5)-- (11.5,1.5);
\draw [line width=2pt] (11.5,2.5)-- (12.5,1.5);

\draw (2,1.5) node[anchor=center] {$=$};
\draw (8,1.5) node[anchor=center] {$=$};
\draw (14,1.5) node[anchor=center] {$=$};

\draw (2,2.5) node[anchor=center] {$=$};
\draw (8,2.5) node[anchor=center] {$=$};
\draw (14,2.5) node[anchor=center] {$=$};

\draw (3.5,3.5) node[anchor=center] {$\ell_1$};
\draw (6.5,3.5) node[anchor=center] {$\ell_2$};
\draw (12.5,3.5) node[anchor=center] {$k_1$};
\draw (9.5,3.5) node[anchor=center] {$k_2$};
\end{tikzpicture}\caption{Case (a)}\label{figure:Casea}
\end{figure}

\begin{figure} 
	\begin{tikzpicture}[line cap=round,line join=round,>=triangle 45,x=1cm,y=1cm]
	\clip(-1,0) rectangle (16,4);
	\fill[line width=2pt,fill=black,fill opacity=0.1] (1,1) -- (15,1) -- (15,3) -- (1,3) -- cycle;
	
	\draw [line width=1pt] (1,2)-- (15,2);
	\draw [line width=2pt] (1,1)-- (15,1);
	\draw [line width=2pt] (15,1)-- (15,3);
	\draw [line width=2pt] (15,3)-- (1,3);
	\draw [line width=2pt] (1,3)-- (1,1);
	\draw [line width=2pt] (3,3)-- (3,1);
	\draw [line width=2pt] (4,3)-- (4,1);
	\draw [line width=2pt] (6,3)-- (6,1);
	\draw [line width=2pt] (7,3)-- (7,1);
	\draw [line width=2pt] (9,3)-- (9,1);
	\draw [line width=2pt] (10,1)-- (10,3);
	\draw [line width=2pt] (12,3)-- (12,1);
	\draw [line width=2pt] (13,3)-- (13,1);
	
	\draw (0,2.5) node[anchor=center] {$\1_{10}(c)$};
	\draw (0,1.5) node[anchor=center] {$\1_{10}(c')$};
	\draw (6.5,1.5) node[anchor=center] {$\star$};
	\draw (3.5,2.5) node[anchor=center] {$\star$};
	\draw (12.5,1.5) node[anchor=center] {$\star$};
	\draw (9.5,2.5) node[anchor=center] {$\star$};
	
	\draw [line width=2pt] (3.5,1.5)-- (4.5,2.5);
	\draw [line width=2pt] (4.5,1.5)-- (5.5,2.5);
	\draw [line width=2pt] (5.5,1.5)-- (6.5,2.5);
	
	\draw [line width=2pt] (9.5,1.5)-- (10.5,2.5);
	\draw [line width=2pt] (10.5,1.5)-- (11.5,2.5);
	\draw [line width=2pt] (11.5,1.5)-- (12.5,2.5);
	
	\draw (2,1.5) node[anchor=center] {$=$};
	\draw (8,1.5) node[anchor=center] {$=$};
	\draw (14,1.5) node[anchor=center] {$=$};
	
	\draw (2,2.5) node[anchor=center] {$=$};
	\draw (8,2.5) node[anchor=center] {$=$};
	\draw (14,2.5) node[anchor=center] {$=$};
	
	\draw (3.5,3.5) node[anchor=center] {$\ell_1$};
	\draw (6.5,3.5) node[anchor=center] {$\ell_2$};
	\draw (12.5,3.5) node[anchor=center] {$k_2$};
	\draw (9.5,3.5) node[anchor=center] {$k_1$};
	\end{tikzpicture}\caption{Case (b)}\label{figure:Caseb}
\end{figure}

\begin{figure}[h] 
	\begin{tikzpicture}[line cap=round,line join=round,>=triangle 45,x=1cm,y=1cm]
	\clip(-1,0) rectangle (16,4);
	\fill[line width=2pt,fill=black,fill opacity=0.1] (1,1) -- (15,1) -- (15,3) -- (1,3) -- cycle;
	
	\draw [line width=1pt] (1,2)-- (15,2);
	\draw [line width=2pt] (1,1)-- (15,1);
	\draw [line width=2pt] (15,1)-- (15,3);
	\draw [line width=2pt] (15,3)-- (1,3);
	\draw [line width=2pt] (1,3)-- (1,1);
	\draw [line width=2pt] (3,3)-- (3,1);
	\draw [line width=2pt] (4,3)-- (4,1);
	\draw [line width=2pt] (6,3)-- (6,1);
	\draw [line width=2pt] (7,3)-- (7,1);
	\draw [line width=2pt] (9,3)-- (9,1);
	\draw [line width=2pt] (10,1)-- (10,3);
	\draw [line width=2pt] (12,3)-- (12,1);
	\draw [line width=2pt] (13,3)-- (13,1);
	
	\draw (0,2.5) node[anchor=center] {$\1_{10}(c)$};
	\draw (0,1.5) node[anchor=center] {$\1_{10}(c')$};
	\draw (6.5,2.5) node[anchor=center] {$\star$};
	\draw (3.5,2.5) node[anchor=center] {$\star$};
	\draw (12.5,1.5) node[anchor=center] {$\star$};
	\draw (9.5,1.5) node[anchor=center] {$\star$};
	
	\draw [line width=2pt] (3.5,1.5)-- (4.5,2.5);
	\draw [line width=2pt] (4.5,1.5)-- (5.5,2.5);
	\draw [line width=2pt] (5.5,1.5)-- (7.5,2.5);
	
	\draw [line width=2pt] (6.5,1.5)-- (8.5,2.5);
	\draw [line width=2pt] (7.5,1.5)-- (9.5,2.5);
	
	\draw [line width=2pt] (8.5,1.5)-- (10.5,2.5);
	
	\draw [line width=2pt] (10.5,1.5)-- (11.5,2.5);
	\draw [line width=2pt] (11.5,1.5)-- (12.5,2.5);
	
	\draw (2,1.5) node[anchor=center] {$=$};
	\draw (14,1.5) node[anchor=center] {$=$};
	
	\draw (2,2.5) node[anchor=center] {$=$};
	\draw (14,2.5) node[anchor=center] {$=$};
	
	\draw (3.5,3.5) node[anchor=center] {$\ell_1$};
	\draw (6.5,3.5) node[anchor=center] {$k_1$};
	\draw (12.5,3.5) node[anchor=center] {$k_2$};
	\draw (9.5,3.5) node[anchor=center] {$\ell_2$};
	\end{tikzpicture}\caption{Case (c)}\label{figure:Casec}
\end{figure}

Note that if~$f(\boldc)=f(\boldc')$, then~$\1_{10}(\boldc) \cdot \boldm^{(e)}\equiv \1_{10}(\boldc) \cdot \boldm^{(e)}\bmod n_e$, where~$n_0=2n,n_1=n^2,$ and~$n_2=n^3$.
Hence, from~\eqref{equation:g_m-Case(a)}-\eqref{equation:g_m-Case(c)} we have that
\begin{alignat}{2}\label{equation:gfunctionmodular}
&g_{\boldm^{(e)}}(\ell_1,(\1_{10}(\boldc)^{(\ell_1,\ell_2)},\1_{10}(\boldc')_{\ell_2}))- g_{\boldm^{(e)}}(k_2,(\1_{10}(\boldc')^{(k_2,k_1)},\1_{10}(\boldc)_{k_1}))&& \equiv 0 \bmod 2n,\nonumber\\
&g_{\boldm^{(e)}}(\ell_1,(\1_{10}(\boldc)^{(\ell_1,\ell_2)},\1_{10}(\boldc')_{\ell_2}))
+
g_{\boldm^{(e)}}(k_1,(\1_{10}(\boldc)^{(k_1,k_2)},\1_{10}(\boldc')_{k_2}))&& \equiv 0 \bmod n^2,\mbox{ and}\nonumber\\
&g_{\boldm^{(e)}}(\ell_1,(\1_{10}(\boldc)^{(\ell_1,k_1-1)},
	 \1_{10}(\boldc)^{(k_1+1,\ell_2+1)},\1_{10}(\boldc')_{\ell_2}))\nonumber\\
&+
	g_{\boldm^{(e)}}(k_1, (\1_{10}(\boldc)^{(k_1,k_2)},\1_{10}(\boldc')_{k_2}))&& \equiv 0 \bmod n^3.
g\end{alignat}
In what follows, we show that these equalities also hold in their non modular version.
On the other hand, we have
\begin{align*}
-\boldm^{(e)}_{r+k-2}\le g_{\boldm^{(e)}}(r,\boldx)\le \boldm^{(e)}_{r+k-2}
\end{align*}
for any~$\boldx\in\{0,1\}^{n-1}$ and any integer~$r$ that satisfies~$r+k-2\le n-1$. Therefore,
\begin{align}\label{equation:gfunctionbound}
-\boldm^{(e)}_{\ell_2}-\boldm^{(e)}_{k_2}\le &g_{\boldm^{(e)}}(\ell_1,(\1_{10}(\boldc)^{(\ell_1,\ell_2)},\1_{10}(\boldc')_{\ell_2}))- g_{\boldm^{(e)}}(k_2,(\1_{10}(\boldc')^{(k_2,k_1)},\1_{10}(\boldc)_{k_1}))\le \boldm^{(e)}_{\ell_2}+\boldm^{(e)}_{k_2},\nonumber\\
-\boldm^{(e)}_{\ell_2}-\boldm^{(e)}_{k_2}\le &g_{\boldm^{(e)}}(\ell_1,(\1_{10}(\boldc)^{(\ell_1,\ell_2)},\1_{10}(\boldc')_{\ell_2}))+
g_{\boldm^{(e)}}(k_1,(\1_{10}(\boldc)^{(k_1,k_2)},\1_{10}(\boldc')_{k_2}))\le \boldm^{(e)}_{\ell_2}+\boldm^{(e)}_{k_2},\mbox{ and}\nonumber\\
-\boldm^{(e)}_{\ell_2}-\boldm^{(e)}_{k_2}\le &g_{\boldm^{(e)}}(\ell_1,(\1_{10}(\boldc)^{(\ell_1,k_1-1)},
	 \1_{10}(\boldc)^{(k_1+1,\ell_2+1)},\1_{10}(\boldc')_{\ell_2}))\nonumber\\
     &+
	g_{\boldm^{(e)}}(k_1, (\1_{10}(\boldc)^{(k_1,k_2)},\1_{10}(\boldc')_{k_2}))\le \boldm^{(e)}_{\ell_2}+\boldm^{(e)}_{k_2}.
\end{align}
Further note that
\begin{align}\label{equation:momentbound}
\boldm^{(0)}_{\ell_2}+\boldm^{(0)}_{k_2}<2n,\boldm^{(1)}_{\ell_2}+\boldm^{(1)}_{k_2}<n^2,\boldm^{(2)}_{\ell_2}+\boldm^{(2)}_{k_2}<n^3
\end{align}
Combining~\eqref{equation:gfunctionmodular},~\eqref{equation:gfunctionbound}, and~\eqref{equation:momentbound}, we conclude that if~$f(\boldc)=f(\boldc')$, then
\begin{align}\label{equation:gfunctioncase-a}
&g_{\boldm^{(e)}}(\ell_1,(\1_{10}(\boldc)^{(\ell_1,\ell_2)},\1_{10}(\boldc')_{\ell_2}))- g_{\boldm^{(e)}}(k_2,(\1_{10}(\boldc')^{(k_2,k_1)},\1_{10}(\boldc)_{k_1}))=0,\\
\label{equation:gfunctioncase-b}
&g_{\boldm^{(e)}}(\ell_1,(\1_{10}(\boldc)^{(\ell_1,\ell_2)},\1_{10}(\boldc')_{\ell_2}))+
g_{\boldm^{(e)}}(k_1,(\1_{10}(\boldc)^{(k_1,k_2)},\1_{10}(\boldc')_{k_2}))=0,\mbox{ and}\\
\label{equation:gfunctioncase-c}
&g_{\boldm^{(e)}}(\ell_1,(\1_{10}(\boldc)^{(\ell_1,k_1-1)},
	 \1_{10}(\boldc)^{(k_1+1,\ell_2+1)},\1_{10}(\boldc')_{\ell_2}))+
	g_{\boldm^{(e)}}(k_1, (\1_{10}(\boldc)^{(k_1,k_2)},\1_{10}(\boldc')_{k_2}))=0.
\end{align}

For Case~(a), Equation~\eqref{equation:gfunctioncase-a} and Lemma~\ref{lemma:g}
implies that
\begin{align*}
    \1_{10}(\boldc)_{\ell_1}&=\ldots=\1_{10}(\boldc)_{\ell_2}=\1_{10}(\boldc')_{\ell_2}\\
    \1_{10}(\boldc')_{k_2}&=\ldots=\1_{10}(\boldc')_{k_1}=\1_{10}(\boldc)_{k_1},
\end{align*}
which readily implies that
\begin{align*}
    \1_{10}(\boldc')_{t}=\1_{10}(\boldc)_{t+1}=\1_{10}(\boldc)_{t}
\end{align*}
for~$\ell_1\le t<\ell_2$ and
\begin{align*}
    \1_{10}(\boldc)_{t}&=\1_{10}(\boldc')_{t+1}=\1_{10}(\boldc')_{t}
\end{align*}
for~$k_2\le t<k_1$. Together with~$\1_{10}(\boldc)_{\ell_2}=\1_{10}(\boldc')_{\ell_2}$ and~$\1_{10}(\boldc')_{k_1}=\1_{10}(\boldc)_{k_1}$, we have that~$\1_{10}(\boldc)=\1_{10}(\boldc')$.

For Case~(b), Equation~\eqref{equation:gfunctioncase-b} and Lemma~\ref{lemma:g} implies that
\begin{align*}
    \1_{10}(\boldc)_{\ell_1}&=\ldots=\1_{10}(\boldc)_{\ell_2}=\1_{10}(\boldc')_{\ell_2}\\
    \1_{10}(\boldc')_{k_1}&=\ldots=\1_{10}(\boldc')_{k_2}=\1_{10}(\boldc)_{k_2}
\end{align*}
and hence
\begin{align*}
    \1_{10}(\boldc')_{t}=\1_{10}(\boldc)_{t+1}=\1_{10}(\boldc)_{t}
\end{align*}
for~$\ell_1\le t<\ell_2$ and~$k_1\le t<k_2$. $\1_{10}(\boldc)=\1_{10}(\boldc')$.

For Case~(c), Equation~\eqref{equation:gfunctioncase-c} and Lemma~\ref{lemma:g+} imply that either
\begin{align}\label{equation:CasecOptionA}
    &\1_{10}(\boldc)_{\ell_1}=\ldots=\1_{10}(\boldc)_{k_2}=\1_{10}(\boldc')_{\ell_2}=\1_{10}(\boldc')_{k_2}
\end{align}
or
\begin{align}
\label{equation:CasecOptionB}
&\1_{10}(\boldc)_{\ell_1}=\ldots=\1_{10}(\boldc)_{k_1-1}=\1_{10}(\boldc)_{k_1+1},\nonumber\\
&\1_{10}(\boldc)_{i}+\1_{10}(\boldc)_{i+1}=1\mbox{ for }i\in\{k_1,\ldots,\ell_2\},\nonumber\\
&\1_{10}(\boldc')_{\ell_2}+\1_{10}(\boldc')_{k_2}=1,\mbox{ and}\nonumber\\
&\1_{10}(\boldc)_{\ell_2+1}=\ldots=\1_{10}(\boldc)_{k_2}=\1_{10}(\boldc')_{k_2}.
\end{align}
If~\eqref{equation:CasecOptionA} is true, we can obtain~$\boldc = \boldc'$ by following similar steps as above.

If~\eqref{equation:CasecOptionB} is true,
we have
\begin{align*}
    \1_{10}(\boldc')_{t}=\1_{10}(\boldc)_{t+1}=\1_{10}(\boldc)_{t}
\end{align*}
for~$\ell_1\le t\le k_1-2$ and~$\ell_2+1\le t\le k_2-1$. Further more, we have
\begin{align*}
    \1_{10}(\boldc')_{t}=\1_{10}(\boldc)_{t+2}=1-\1_{10}(\boldc)_{t+1}=\1_{10}(\boldc)_{t}
\end{align*}
for~$k_1\le t\le \ell_2-1$. In addition, we have~$\1_{10}(\boldc')_{k_1-1}=\1_{10}(\boldc)_{k_1+1}=\1_{10}(\boldc)_{k_1-1}$,~$\1_{10}(\boldc')_{\ell_2}=1-\1_{10}(\boldc')_{\ell_2+1}=\1_{10}(\boldc)_{\ell_2}$ and~$\1_{10}(\boldc')_{k_2}=\1_{10}(\boldc)_{k_2}$. Therefore, we conclude that $\boldc = \boldc'$.

\section{Proofs of~$g$-lemmas}\label{section:gProofs}

\begin{proof}(of Lemma~\ref{lemma:g})
According to Eq. \eqref{equation:opposite}, if~$\lambda=1$, then Eq. \eqref{equation:lemmagEq} can be written as 
     \begin{align*}
        g_{\boldm^{(0)}}(r_1,\boldx)- g_{\boldm^{(0)}}(r_2,\overline{\boldy})&=0\mbox{, and}\\
        g_{\boldm^{(1)}}(r_1,\boldx)- g_{\boldm^{(1)}}(r_2,\overline{\boldy})&=0.
    \end{align*}
Therefore, it suffices to prove the claim for ~$\lambda=-1$.    
We distinguish between four cases according to the value of~$(y_1,y_{s_2})$.

\textbf{Case~$(1)$.} $(y_1,y_{s_2})=(0,1)$
\hfill

we have that
\begin{align*}
        & g_{\boldm^{(e)}}(r_1,\boldx)-g_{\boldm^{(e)}}(r_2,\boldy)\\
        =& \boldm^{(e)}_{r_1}x_1+\sum^{s_1-1}_{t=2}(\boldm^{(e)}_{t+r_1-1}-\boldm^{(e)}_{t+r_1-2})x_t -\\
        &\boldm^{(e)}_{r_1+s_1-2}x_{s_1}
        - \boldm^{(e)}_{r_2}y_1-\sum^{s_2-1}_{t=2}(\boldm^{(e)}_{t+r_2-1}-\boldm^{(e)}_{t+r_2-2})y_t + \boldm^{(e)}_{r_2+s_2-2}y_{s_2}\\
        \ge & -\boldm^{(e)}_{r_1+s_1-2}-\sum^{s_2-1}_{t=2}(\boldm^{(e)}_{t+r_2-1}-\boldm^{(e)}_{t+r_2-2})+\;\boldm^{(e)}_{r_2+s_2-2}\\
        =&\boldm^{(e)}_{r_2}-\boldm^{(e)}_{r_1+s_1-2}>0,
    \end{align*}
a contradiction.

\textbf{Case~$(2)$.} $(y_1,y_{s_2})=(1,0)$
\hfill

From Lemma~\ref{lemma:gproperty} and \eqref{equation:lemmagEq} we have $g_{\boldm^{(e)}}(r_1,\overline{\boldx})+ g_{\boldm^{(e)}}(r_2,\overline{\boldy})=0$ for~$e\in\{0,1\}$, where~$\overline{\boldx}\triangleq\1-\boldx$ and~$\overline{\boldy}\triangleq\1-\boldy$. Since~$(\overline{y}_1,\overline{y}_{s_2})=(1,0)$, from the previous case we have that~$\overline{\boldx}$ and~$\overline{\boldy}$ are constant vectors. So are~$\boldx$ and~$\boldy$.

\textbf{Case~$(3)$.} $(y_1,y_{s_2})=(1,1)$
\hfill

Let
\begin{align*}
    S_1&\triangleq\{j:y_{j-r_2+1}=1,r_2+1\le j\le r_2+s_2-2\}\mbox{, and}\\
    S^c_1&\triangleq\{j:y_{j-r_2+1}=0,r_2+1\le j\le r_2+s_2-2\},
\end{align*}
and notice that
\begin{align}\label{case31}
g_{\boldm^{(0)}}(r_2,\boldy)&= \boldm^{(0)}_{r_2}-\boldm^{(0)}_{r_2+s_2-2}+ \sum^{s_2-1}_{j=2}(\boldm^{(0)}_{j+r_2-1}-\boldm^{(0)}_{j+r_2-2})y_j\nonumber\\
&= -\sum^{r_2+s_2-2}_{j=r_2+1}(\boldm^{(0)}_{j}-\boldm^{(0)}_{j-1})+ \sum^{s_2-1}_{j=2}(\boldm^{(0)}_{j+r_2-1}-\boldm^{(0)}_{j+r_2-2})y_j\nonumber\\
&=-\sum^{r_2+s_2-2}_{j=r_2+1}(\boldm^{(0)}_{j}-\boldm^{(0)}_{j-1})(1-y_j)\nonumber\\
&=-\sum_{j\in S^c_1}(\boldm^{(0)}_{j}-\boldm^{(0)}_{j-1})=-\sum_{j\in S^c_1}1,\mbox{ and similarly}\nonumber\\
g_{\boldm^{(1)}}(r_2,\boldy)&=-\sum_{j\in S^c_1}(\boldm^{(1)}_{j}-\boldm^{(1)}_{j-1})=-\sum_{j\in S^c_1}j.
\end{align}
Now, on the one hand if $x_{s_1}=0$ we have
\begin{align}\label{case33}
g_{\boldm^{(0)}}(r_1,\boldx)= \boldm^{(0)}_{r_1}x_1+\sum^{s_1-1}_{t=2}(\boldm^{(0)}_{t+r_1-1}-\boldm^{(0)}_{t+r_1-2})x_t\ge 0,
\end{align}
and hence,~\eqref{case31} and~\eqref{case33} imply that ${g_{\boldm^{(0)}}(r_1,\boldx) - g_{\boldm^{(0)}}(r_2,\boldy)\ge 0}$, and equality holds only when $g_{\boldm^{(0)}}(r_1,\boldx)$ and $g_{\boldm^{(0)}}(r_2,\boldy)$ are both~$0$, which by Lemma~\ref{lemma:VT} implies that~$\boldx$ and~$\boldy$ are constant vectors. On the other hand, if~$x_{s_1}=1$ let $S_2=\{j:x_{\max\{j-r_1+1,1\}}=0,1\le j\le r_1+s_1-2\}$, and notice that
\begin{alignat}{2}\label{case34}
g_{\boldm^{(0)}}(r_1,\boldx)&=&&\; \boldm^{(0)}_{r_1}x_1+\sum^{s_1-1}_{t=2}(\boldm^{(0)}_{t+r_1-1}-\boldm^{(0)}_{t+r_1-2})x_t-\boldm^{(0)}_{r_1+s_1-2}\nonumber\\
&=&&\; \boldm^{(0)}_{r_1}(x_1-1)+\sum^{s_1-1}_{t=2}(\boldm^{(0)}_{t+r_1-1}-\boldm^{(0)}_{t+r_1-2})(x_t-1)\nonumber\\
&=&&-\sum_{t\in S_2}1,\mbox{ and similarly}\nonumber\\
g_{\boldm^{(1)}}(r_1,\boldx)&=&&-\sum_{t\in S_2}t.
\end{alignat}
Inserting~\eqref{case31} and~\eqref{case34} into~\eqref{equation:lemmagEq}, we have
\begin{align*}
-\sum_{t\in S_2}1+\sum_{j\in S^c_1}1&=0,\\
-\sum_{t\in S_2}t+\sum_{j\in S^c_1}j&=0.
\end{align*}
This implies that the sets~$S^c_1$ and~$S_2$ have the same cardinality and the same sum of elements. However, the maximum element in~$S_2$ is smaller than the minimum element in~$S^c_1$. Therefore~$S^c_1$ and~$S_2$ are empty, which implies that~$\boldx$ is the~$0$ vector and~$\boldy$ is the all~$1$'s vector.

\textbf{Case~$(4)$.} $(y_1,y_{s_2})=(0,0)$
\hfill

From Lemma~\ref{lemma:gproperty} and Eq. \eqref{equation:lemmagEq} we have $g_{\boldm^{(e)}}(r_1,\overline{\boldx})+ g_{\boldm^{(e)}}(r_2,\overline{\boldy})=0$ for~$e\in\{0,1\}$, where~$\overline{\boldx}\triangleq\1-\boldx$ and~$\overline{\boldy}\triangleq\1-\boldy$. Since~$(\overline{y}_1,\overline{y}_{s_2})=(1,1)$, from the previous case~$\overline{\boldx}$ and~$\overline{\boldy}$ are constant vectors, and thus so are~$\boldx$ and~$\boldy$.
\end{proof}

\begin{proof}(of Lemma~\ref{lemma:g+})
We distinguish between four cases according to the value of~$(x_{s_1+s_2+1},y_{s_2+s_3+1})$.

\textbf{Case~$(1)$.} $(x_{s_1+s_2+1},y_{s_2+s_3+1})=(0,0)$\hfill

Similar to~\eqref{case33}, we have that $g_{\boldm^{(0)}}(r_1,\boldx) + g_{\boldm^{(0)}}(r_2,\boldy)\ge 0$, where equality holds only if~$\boldx$ and~$\boldy$ are constant~$0$ vectors.

\textbf{Case~$(2)$.} $(x_{s_1+s_2+1},y_{s_2+s_3+1})=(1,1)$\hfill

From Lemma~\ref{lemma:gproperty} and Eq. \eqref{equation:lemmag+} we have $g_{\boldm^{(0)}}(r_1,\overline{\boldx})+ g_{\boldm^{(0)}}(r_2,\overline{\boldy})=0$. On the other hand, since~ $(\overline{x}_{s_1+s_2+1},\overline{y}_{s_2+s_3+1})=(0,0)$ , it follows that~$g_{\boldm^{(0)}}(r_1,\overline{\boldx}) + g_{\boldm^{(0)}}(r_2,\overline{\boldy})\ge 0$
where equality holds when~$\boldx$ and~$\boldy$ are constant~$1$ vectors.

\textbf{Case~$(3)$.} $(x_{s_1+s_2+1},y_{s_2+s_3+1})=(0,1)$\hfill

On the one hand, for~$y_1=0$ we have
\begin{alignat*}{2}
g_{\boldm^{(0)}}&(r_1&&,\boldx)+ g_{\boldm^{(0)}}(r_2,\boldy)=\\
&= &&\;\boldm^{(0)}_{r_1}x_1 + \sum^{s_1+1}_{t=2}(\boldm^{(0)}_{t+r_1-1}-\boldm^{(0)}_{t+r_1-2})x_t + \sum^{s_1+s_2-1}_{t=s_1+1}(\boldm^{(0)}_{t+r_1}-\boldm^{(0)}_{t+r_1-1})x_{t+1}\\
&~&& + \sum^{s_2}_{t=2}(\boldm^{(0)}_{t+r_2-1}-\boldm^{(0)}_{t+r_2-2})y_t+\sum^{s_2+s_3}_{t=s_2+1}(\boldm^{(0)}_{t+r_2-1}-\boldm^{(0)}_{t+r_2-2})y_t-\boldm^{(0)}_{r_2+s_2+s_3-1}\\
&= &&\;\boldm^{(0)}_{r_1}x_1 + \sum^{s_1+1}_{t=2}(\boldm^{(0)}_{t+r_1-1}-\boldm^{(0)}_{t+r_1-2})x_t  +\sum^{s_1+s_2-1}_{t=s_1+1}(\boldm^{(0)}_{t+r_1}-\boldm^{(0)}_{t+r_1-1})(x_{t}+x_{t+1}) \\
&~&&+\sum^{s_2+s_3}_{t=s_2+1}(\boldm^{(0)}_{t+r_2-1}-\boldm^{(0)}_{t+r_2-2})y_t-\boldm^{(0)}_{r_2+s_2+s_3-1}\\
&\le&&\;\boldm^{(0)}_{r_1} + \sum^{s_1+1}_{t=2}(\boldm^{(0)}_{t+r_1-1}-\boldm^{(0)}_{t+r_1-2})  +\sum^{s_1+s_2-1}_{t=s_1+1}(\boldm^{(0)}_{t+r_1}-\boldm^{(0)}_{t+r_1-1}) \\
&~&&+\sum^{s_2+s_3}_{t=s_2+1}(\boldm^{(0)}_{t+r_2-1}-\boldm^{(0)}_{t+r_2-2})-\boldm^{(0)}_{r_2+s_2+s_3-1}=0,
\end{alignat*}
where equality equality holds when
\begin{align*}
    &x_t=1\mbox{ for }t\in\{1,\ldots,s_1+1\},\\
    &x_{t}+x_{t+1}=1\mbox{ for } t\in\{s_1+1,\ldots,s_1+s_2-1\},\mbox{ and}\\
    &y_t=1\mbox{ for }t\in\{s_2+1,\ldots,s_2+s_3\},
\end{align*}
and hence~\eqref{equation:casecXY} holds. On the other hand, when~$y_1=1$, let
\begin{align*}
S_1&=\{t:x_{\max\{t-r_1+1,1\}}=1,1\le t\le s_1+r_1\},\\
S_2&=\{t:x_{t-r_1}+x_{t-r_1+1}=0,r_2+1\le t\le r_2+s_2-1\},\\
S_3&=\{t:y_{t-r_2+1}=0,r_2+s_2\le t\le r_2+s_2+s_3-1\},
\end{align*}
and notice that
\begin{alignat}{3}\label{case311}
g_{\boldm^{(0)}}(&r_1&&,\boldx) + g_{\boldm^{(0)}}(r_2,\boldy)\nonumber\\
&=&&\; \boldm^{(0)}_{r_1}x_1 + \sum^{s_1+1}_{t=2}(\boldm^{(0)}_{t+r_1-1}-\boldm^{(0)}_{t+r_1-2})x_t+ \nonumber\\
&~&&\boldm^{(0)}_{s_1+r_1}+ \sum^{s_1+s_2-1}_{t=s_1+1}(\boldm^{(0)}_{t+r_1}-\boldm^{(0)}_{t+r_1-1})(x_{t}+x_{t+1}) +\nonumber\\ &~&&\sum^{s_2+s_3}_{t=s_2+1}(\boldm^{(0)}_{t+r_2-1}-\boldm^{(0)}_{t+r_1-2})y_t-\boldm^{(0)}_{r_2+s_2+s_3-1}\nonumber\\
&=&&\;\sum_{t\in S_1}(\boldm^{(0)}_t-\boldm^{(0)}_{t-1})-\sum_{t\in S_2}(\boldm^{(0)}_t-\boldm^{(0)}_{t-1})-\sum_{t\in S_3}(\boldm^{(0)}_t-\boldm^{(0)}_{t-1})\nonumber\\
&=&&\sum_{t\in S_1}1-\sum_{t\in S_2}1-\sum_{t\in S_3}1
\end{alignat}
Similarly, we have
\begin{align}\label{case312}
g_{\boldm^{(1)}}(r_1,\boldx) + g_{\boldm^{(1)}}(r_2,\boldy)=
\sum_{t\in S_1}t-\sum_{t\in S_2}t-\sum_{t\in S_3}t.
\end{align}
Equations~\eqref{equation:lemmag+},~\eqref{case311}, and~\eqref{case312} imply that the cardinality of~$S_1$ equals the sum of cardinalities of~$S_2$ and~$S_3$, and in addition, the sum of elements of~$S_1$ equals the sum of elements of~$S_2$ and~$S_3$. Note that the minimum element of~$S_2\cup S_3$ is larger than the maximum element of~$S_1$. This is impossible, unless~$S_1,S_2$, and~$S_3$ are empty, which implies that $x_t=0$~ for~$t\in\{1,\ldots,s_1+1\}$, $x_{t}+x_{t+1}=1$~ for~$t\in\{s_1+1,\ldots,s_1+s_2-1\}$, and $y_t=1$~ for~$t\in\{s_2+1,\ldots,s_2+s_3\}$, and hence~\eqref{equation:casecXY} holds.

\textbf{Case~$(4)$.} $(x_{s_1+s_2+1},y_{s_2+s_3+1})=(1,0)$\hfill

On the one hand, for~$y_1=0$, 
let
\begin{align*}
S_1&=\{t:x_{\max\{t-r_1+1,1\}}=0,1\le t\le s_1+r_1\},\\
S_2&=\{t:x_{t-r_1}+x_{t-r_1+1}=0,r_2+1\le t\le r_2+s_2-1\},\\
S_3&=\{t:y_{t-r_2+1}=1,r_2+s_2\le t\le r_2+s_2+s_3-1\}.
\end{align*}
We have
\begin{align*}
&g_{\boldm^{(0)}}(r_1,\boldx) + g_{\boldm^{(0)}}(r_2,\boldy) \nonumber\\
=&\; \boldm^{(0)}_{r_1}x_1 + \sum^{s_1+1}_{t=2}(\boldm^{(0)}_{t+r_1-1}-\boldm^{(0)}_{t+r_1-2})x_t
+ \sum^{s_1+s_2-1}_{t=s_1+1}(\boldm^{(0)}_{t+r_1}-\boldm^{(0)}_{t+r_1-1})(x_{t}+x_{t+1})-\nonumber\\
&\;\boldm^{(0)}_{r_1+s_1+s_2-1} + \sum^{s_2+s_3}_{t=s_2+1}(\boldm^{(0)}_{t+r_2-1}-\boldm^{(0)}_{t+r_1-2})y_t\nonumber\\
=& -\boldm^{(0)}_{r_1}(1-x_1)-\sum^{s_1+1}_{t=2}(\boldm^{(0)}_{t+r_1-1}-\boldm^{(0)}_{t+r_1-2})(1-x_t)-\nonumber\\
&\sum^{s_1+s_2-1}_{t=s_1+1}(\boldm^{(0)}_{t+r_1}-\boldm^{(0)}_{t+r_1-1})(1-x_{t}-x_{t+1})+ \sum^{s_2+s_3}_{t=s_2+1}(\boldm^{(0)}_{t+r_2-1}-\boldm^{(0)}_{t+r_1-2})y_t\nonumber\\
=&-\sum_{t\in S_1}(\boldm^{(0)}_t-\boldm^{(0)}_{t-1})-\sum_{t\in S_2}(\boldm^{(0)}_t-\boldm^{(0)}_{t-1})+\sum_{t\in S_3}(\boldm^{(0)}_t-\boldm^{(0)}_{t-1})\nonumber\\
=&-\sum_{t\in S_1}1-\sum_{t\in S_2}1+\sum_{t\in S_3}1=0.
\end{align*}
Then similar to the previous case, we obtain sets with identical cardinalities and sum of elements, and yet the smallest element in one is greater than the largest element in the others. Therefore, it follows that~$S_1,S_2$, and~$S_3$ are empty. Then we have $x_t=1$ for~$t\in\{1,\ldots,s_1+1\}$, $x_{t}+x_{t+1}=1$ for~$t\in\{s_1+1,\ldots,s_1+s_2-1\}$, and $y_t=0$ for~$ t\in\{s_2+1,\ldots,s_2+s_3\}$, and hence~\eqref{equation:casecXY} holds.

On the other hand, for~$y_1=1$, let
\begin{align*}
S_1&=\{t:x_{\max\{t-r_1+1,1\}}=1,1\le t\le s_1+r_1\},\\
S_2&=\{t:x_{t-r_1}+x_{t-r_1+1}=0,r_2+1\le t\le r_2+s_2-1\},\\
S_3&=\{t:y_{t-s_2+1}=1,r_2+s_2\le t\le r_2+s_2+s_3-1\}.
\end{align*}
We have
\begin{align}\label{equation:vandermondecase1}
&g_{\boldm^{(0)}}(r_1,\boldx) + g_{\boldm^{(0)}}(r_2,\boldy) \nonumber\\
=& \;\boldm^{(0)}_{r_1}x_1 + \sum^{s_1+1}_{t=2}(\boldm^{(0)}_{t+r_1-1}-\boldm^{(0)}_{t+r_1-2})x_t +\boldm^{(0)}_{s_1+r_1}+ \sum^{s_1+s_2-1}_{t=s_1+1}(\boldm^{(0)}_{t+r_1}-\boldm^{(0)}_{t+r_1-1})(x_{t}+x_{t+1})-\nonumber\\
&\;\boldm^{(0)}_{r_1+s_1+s_2-1} + \sum^{s_2+s_3}_{t=s_2+1}(\boldm^{(0)}_{t+r_2-1}-\boldm^{(0)}_{t+r_1-2})y_t\nonumber\\
=&\; \boldm^{(0)}_{r_1}x_1+\sum^{s_1+1}_{t=2}(\boldm^{(0)}_{t+r_1-1}-\boldm^{(0)}_{t+r_1-2})x_t-\nonumber\\
&\sum^{s_1+s_2-1}_{t=s_1+1}(\boldm^{(0)}_{t+r_1}-\boldm^{(0)}_{t+r_1-1})(1-x_{t}-x_{t+1}) + \sum^{s_2+s_3}_{t=s_2+1}(\boldm^{(0)}_{t+r_2-1}-\boldm^{(0)}_{t+r_1-2})y_t\nonumber\\
=&\sum_{t\in S_1}(\boldm^{(0)}_t-\boldm^{(0)}_{t-1})-\sum_{t\in S_2}(\boldm^{(0)}_t-\boldm^{(0)}_{t-1})+\sum_{t\in S_3}(\boldm^{(0)}_t-\boldm^{(0)}_{t-1})\nonumber\\
=&\sum_{t\in S_1}1-\sum_{t\in S_2}1+\sum_{t\in S_3}1=0.
\end{align}
Similarly, we have
\begin{align}\label{equation:vandermondecase2}
g_{\boldm^{(1)}}(r_1,\boldx) + g_{\boldm^{(1)}}(r_2,\boldy)&=
\sum_{t\in S_1}t-\sum_{t\in S_2}t+\sum_{t\in S_3}t\nonumber\\
g_{\boldm^{(2)}}(r_1,\boldx) + g_{\boldm^{(2)}}(r_2,\boldy)&=\sum_{t\in S_1}t^2-\sum_{t\in S_2}t^2+\sum_{t\in S_3}t^2
\end{align}
According to~\eqref{equation:vandermondecase1} and~\eqref{equation:vandermondecase2}, the following
linear equation
\begin{align}\label{linear}
A\boldsymbol{x}=\begin{bmatrix}
\sum_{t\in S_1}1 &\sum_{t\in S_2}1 & \sum_{t\in S_3}1\\
\sum_{t\in S_1}t &\sum_{t\in S_2}t & \sum_{t\in S_3}t\\
\sum_{t\in S_1}t^2 &\sum_{t\in S_2}t^2 & \sum_{t\in S_3}t^2
\end{bmatrix}
\begin{bmatrix}
x_1\\
x_2\\
x_3\\
\end{bmatrix}
=0
\end{align}
has a nonzero solution~$(x_1,x_2,x_3)=(1,-1,1)^\top$. However, according to the linearity of the determinant, sharethe determinant
\begin{align}
\det(A)=&\sum_{i\in S_1,j\in S_2,k\in S_3}\det \begin{pmatrix} 1&1&1\\ i&j&k\\ i^2&j^2&k^2 \end{pmatrix}\nonumber\\
=&\sum_{i\in S_1,j\in S_2,k\in S_3} (j-i)(k-i)(k-j)
\end{align}
is strictly positive since $\max_{i\in S_1}i<\min_{j\in S_2}j<\min_{k\in S_3}k$. Thus, Eq.~\eqref{linear} has no nonzero solution unless $A=0$, which implies that $S_1,S_2$, and $S_3$ are empty. Therefore,  $x_t=0$ for~$t\in\{1,\ldots,s_1+1\}$, $x_{t}+x_{t+1}=1$ for~$t\in\{s_1+1,\ldots,s_1+s_2-1\}$, and $y_t=0$ for~$ t\in\{s_2+1,\ldots,s_2+s_3\}$, which implies~\eqref{equation:casecXY}.
\end{proof}
\section{Encoding and Decoding Algorithms}\label{section:decoding}
We now show how to use Theorem~\ref{theorem:main} to construct an encoding algorithm and a decoding algorithm. 
Similar to the two layer encoding method described in \cite{brakensiek2016efficient}, we use the~$f(\boldc)$ and~$h(\boldc)$ redundancies~\eqref{equation:fgfunction} to protect the sequence~$\boldc$ from two deletions in the first layer. In the second layer, the~$f(\boldc)$ and~$h(\boldc)$ redundancies are protected again by their corresponding~$f(f(\boldc),h(\boldc))$ and~$h(f(\boldc),h(\boldc))$ redundancies. Since~$f(f(\boldc),h(\boldc))$ and~$h(f(\boldc),h(\boldc))$ are short, they can be protected by an inefficient 3-fold repetition code.
Specifically, for any sequence~$\boldc\in\{0,1\}^n$, the encoding function is
\begin{align}\label{equation:encoding}
\mathcal{E}(\boldc)=(\boldc,f(\boldc),h(\boldc),r_3(f(f(\boldc),h(\boldc))),r_3(h(f(\boldc),h(\boldc)))),
\end{align}
where~$r_3$ is a 3-fold repetition encoding function. The length of the first layer redundancy~$f(\boldc),h(\boldc)$ is~$N_1=7\log n +2$. The length of the 3-fold repetition of the second layer redundancy~$r_3(f(f(\boldc),h(\boldc))),r_3(h(f(\boldc),h(\boldc)))$ is~$N_2=21\log(7\log n+2) +6$.
The length of the codeword~$\mathcal{E}(\boldc)$ is
\begin{align*}
N=n+N_1+N_2=n+7\log n+2 +21\log(7\log n+2) +6=n+7\log n +o(\log n).
\end{align*}
Clearly, the
computation of the function~$\mathcal{E}(\boldc)$ can be done in linear time.

To conveniently describe the decoding algorithm, two building blocks are needed. The first is a 3-fold repetition decoding function
\begin{align*}
\mathcal{D}_1:\{0,1\}^{3N_2-2}\rightarrow\{0,1\}^{N_2}
\end{align*}
that takes a subsequence~$\boldd_1\in\{0,1\}^{3N_2-2}$ of a 3-fold repetition codeword~$r_3(\bolds_1)\in\{0,1\}^{3N_2}$ for some~$\bolds_1\in\{0,1\}^{N_2}$ as input, and outputs an estimate~$\tilde{\bolds}_1$ of the sequence~$\bolds_1$. The second is a decoding function
\begin{align*}
\mathcal{D}_2:\{0,1\}^{n-2}\times \{0,1\}^{7\log n+2}\rightarrow\{0,1\}^n
\end{align*}
that takes a subsequence~$\boldd_2\in\{0,1\}^{n-2}$ of some~$\bolds_2\in\{0,1\}^{n}$, redundancy~$f(\bolds_2)$, and redundancy~$h(\bolds_2)$ as input, and outputs an estimate~$\tilde{\bolds}_2$ of the sequence~$\bolds_2$. The 3-fold repetition decoding~$\mathcal{D}_1$ can be implemented by adding two bits to~$\boldd_1$ such that the length of each run is a multiple of~$3$, which can obviously be done in linear time.
According to Theorem~\ref{theorem:main}, there exists a decoding function~$\mathcal{D}_2$ that recovers the original sequence correctly given its~$f$ and~$h$ redundancy. The linear complexity of~$\mathcal{D}_2$ will be shown later in this section.

The functions~$\mathcal{D}_1$ and~$\mathcal{D}_2$ are used as subroutines to describe the decoding procedure that is given in Algorithm~\ref{algorithm:meta}.
First, we use the function~$\mathcal{D}_1$ to recover the second layer redundancy~$f(f(\boldc),h(\boldc))$ and~$h(f(\boldc),h(\boldc))$ from the 3-fold repetition code. Then, by applying~$\mathcal{D}_2$ and using the second layer redundancy~$f(f(\boldc),h(\boldc))$ and~$h(f(\boldc),h(\boldc))$,
the first layer redundancy~$f(\boldc)$ and~$h(\boldc)$ can be recovered. Finally and similarly, the first layer redundancy~$f(\boldc)$ and~$h(\boldc)$ can be used to recover the original sequence~$\boldc$, with the help of~$\mathcal{D}_2$. In the case of single deletion, 
Algorithm~\ref{algorithm:meta} outputs the orginal sequence~$\boldc$. One can also use a VT decoder (see \cite{levenshtein1966binary}), which has a simper implementation and~$O(n)$ time complexity.

\begin{algorithm}[h]\label{algorithm:meta}
\KwIn{Subsequence~$\boldd\in\{0,1\}^{N-2}$ of~$\mathcal{E}(\boldc)$}
\KwOut{The sequence $\boldc$.}
\textit{layer2\_redundancy}~$=\mathcal{D}_1(\boldd^{(N-N_2+1,N-2)})$\;
	\eIf{\textup{detect two deletions after~the first run in~$\boldd_{N-N_2+1,N-2}$} }
      {
      return~$\boldd^{(1,n)}$\;
     }{
$L\triangleq$~the length of the longest suffix of~$\boldd$ that is a subsequence of~$r_3(\mbox{\textit{layer2\_redundancy})}$\;
\textit{layer1\_redundancy}~$=\mathcal{D}_2(\boldd^{(N-N_1+1-L,N-2-L)}, \mbox{\textit{layer2\_redundancy}})$\;
$\boldc = \mathcal{D}_2(\boldd^{(1,n-2)},\mbox{\textit{layer1\_redundancy}})$\;
    \caption{{\bf Decoding} 
    }
\Return{$\boldc$.}
}
\end{algorithm}
\begin{theorem}
If the functions~$\mathcal{D}_1$ and~$\mathcal{D}_2$ provide the correct estimates in~$O(n)$ time, then given a~$N-2$ subsequence of~$\mathcal{E}(\boldc)$, Algorithm~\ref{algorithm:meta} returns the original sequence~$\boldc$ in~$O(n)$ time.


\end{theorem}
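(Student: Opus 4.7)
The plan is to prove correctness of Algorithm~\ref{algorithm:meta} by a case analysis on where the two deletions fall within the three concatenated regions of~$\mathcal{E}(\boldc)$: the data region (first~$n$ bits), the layer~1 redundancy region (next~$N_1$ bits), and the layer~2 repetition region (last~$N_2$ bits). Let~$k_0,k_1,k_2$ denote the number of deletions in these three regions respectively, with~$k_0+k_1+k_2=2$. In every case, the algorithm will be shown to call each of~$\mathcal{D}_1$ and~$\mathcal{D}_2$ on valid inputs (i.e., inputs that are genuine subsequences of the corresponding encoded strings with at most the prescribed number of deletions), and the correctness of the subroutines then guarantees that each layer's redundancy is recovered exactly. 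Invoking Theorem~\ref{theorem:main} on the recovered redundancies ensures that~$\boldc$ itself is recovered.

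The first step is to verify the early-exit branch. If~$k_2=2$, both deletions occur in the 3-fold repetition segment, so~$\boldd^{(1,n+N_1)}$ coincides exactly with~$(\boldc,f(\boldc),h(\boldc))$; in particular~$\boldd^{(1,n)}=\boldc$. The detection test (looking at run lengths in the layer~2 suffix whose lengths must all be multiples of~$3$ when~$k_2\le 0$, and must differ from multiples of~$3$ in a recognizable way when~$k_2=2$) can be implemented in linear time and distinguishes this case from the others; returning~$\boldd^{(1,n)}$ is then immediate. The only subtlety here is ruling out the possibility that the last bits of layer~1 redundancy happen to extend a run in a way that imitates a $k_2<2$ pattern; this is handled by the fact that the detection is performed inside~$\boldd^{(N-N_2+1,N-2)}$ using the known value of~$N_2$.

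Assuming~$k_2\le 1$, I first apply~$\mathcal{D}_1$ to the last $N_2-2$ bits of~$\boldd$ (which is a subsequence of~$r_3$ applied to the layer~2 redundancies with at most~$k_2\le 1$ deletion, well within the correction capability of the 3-fold repetition) to recover \textit{layer2\_redundancy} exactly. Next, I argue that~$L$ — the longest suffix of~$\boldd$ that is a subsequence of~$r_3(\mbox{\textit{layer2\_redundancy}})$ — equals~$N_2-k_2$, so the bits of~$\boldd$ preceding position~$|\boldd|-L$ constitute a subsequence of~$(\boldc,f(\boldc),h(\boldc))$ with exactly~$k_0+k_1=2-k_2$ deletions. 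Hence~$\boldd^{(N-N_1+1-L,N-2-L)}$ is a subsequence of~$(f(\boldc),h(\boldc))$ together with the trailing bits of~$\boldc$, and a clean invocation of~$\mathcal{D}_2$ using the recovered layer~2 redundancy (which protects~$(f(\boldc),h(\boldc))$ against two deletions by Theorem~\ref{theorem:main}) returns \textit{layer1\_redundancy}~$=(f(\boldc),h(\boldc))$ exactly. A second application of~$\mathcal{D}_2$ on~$\boldd^{(1,n-2)}$, now using the recovered~$f(\boldc)$ and~$h(\boldc)$, recovers~$\boldc$ by Theorem~\ref{theorem:main}.

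The most delicate point, and the step I would expect to require the most care, is justifying that~$L$ truly equals~$N_2-k_2$: one must rule out ``false extensions'' in which the last bits of the layer~1 redundancy accidentally extend a subsequence match into~$r_3(\mbox{\textit{layer2\_redundancy}})$, which would throw off the boundary and feed~$\mathcal{D}_2$ a misaligned input. This can be handled either by choosing a small separator pattern between layers or, as implicit in the algorithm, by noting that the suffix of~$\boldd$ of length exactly~$N_2-k_2$ is a subsequence of~$r_3(\mbox{\textit{layer2\_redundancy}})$ while no strictly longer one can be (a counting argument on the lengths combined with the fact that any additional prefix bit belongs to the~$(\boldc,f,h)$ region and has total length~$n+N_1-(2-k_2)$ which is strictly smaller than a valid 3-fold repetition). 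Once this is established, the time complexity follows by summing~$O(n)$ for finding~$L$ (a single greedy pass), $O(n)$ for each of~$\mathcal{D}_1$ and~$\mathcal{D}_2$ by assumption, yielding the claimed~$O(n)$ overall runtime.
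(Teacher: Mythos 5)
Your decomposition into the three regions and the reduction to the correctness of $\mathcal{D}_1$ and $\mathcal{D}_2$ matches the paper's strategy, and you correctly flag the alignment of the middle window as the delicate step. But your resolution of that step contains a genuine error: the claim $L=N_2-k_2$ is false in general, and the ``counting argument'' you sketch for ruling out longer matches does not work. Concretely, suppose $k_2=1$ and the deletion in the repetition segment falls inside the first run of $r_3(\mbox{\textit{layer2\_redundancy}})$, whose symbol happens to equal the last bit of $h(\boldc)$. Then prepending that last bit of $h(\boldc)$ to the $(N_2-1)$-bit suffix reconstitutes $r_3(\mbox{\textit{layer2\_redundancy}})$ exactly, so the suffix of $\boldd$ of length $N_2$ is a subsequence of the repetition codeword and $L=N_2>N_2-k_2$. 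The fact that the preceding region is long is irrelevant; a single extra bit can extend the match. So your argument as stated would feed $\mathcal{D}_2$ a window whose provenance you have not actually established.

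The fix --- and this is what the paper does --- is to avoid pinning down $L$ exactly and instead sandwich the window. One always has $N_2-2\le L\le N_2$: the lower bound because the last $N_2-2$ bits of $\boldd$ are a subsequence of the length-$N_2$ repetition segment (your item for $\mathcal{D}_1$'s input already gives this), the upper bound because a subsequence cannot exceed the length of its supersequence. Hence the first bit of the window, $d_{N-N_1+1-L}$, originates from a position $\ge N-N_1+1-L\ge n+1$ of $\mathcal{E}(\boldc)$. For the last bit, \emph{maximality} of $L$ forces $d_{N-2-L}$ to originate from a position $\le n+N_1$: otherwise the suffix of length $L+1$ would lie entirely inside the repetition segment and be a subsequence of $r_3(\mbox{\textit{layer2\_redundancy}})$, contradicting the choice of $L$. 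By monotonicity of the subsequence embedding, the whole length-$(N_1-2)$ window is then a subsequence of $\mathcal{E}(\boldc)^{(n+1,n+N_1)}=(f(\boldc),h(\boldc))$, which is exactly the input specification of $\mathcal{D}_2$, irrespective of whether $L$ equals $N_2-k_2$. With this replacement your proof goes through; the rest (the early-exit branch, the calls to $\mathcal{D}_1$ and $\mathcal{D}_2$, and the $O(n)$ accounting) is consistent with the paper.
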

\begin{proof}
To prove the correctness of Algorithm~\ref{algorithm:meta}, it suffices to show~the following
\begin{itemize}
\item[$(1)$.]\label{case:1} $\boldd^{(N-N_2+1,N-2)}$ is a length~$N_2-2$ subsequence of the repetition code~$r_3(f(f(\boldc),h(\boldc))),r_3(h(f(\boldc),h(\boldc))))$.
\item[$(2)$.]\label{case:2} $\boldd^{(N-N_1+1-L,N-2-L)}$ is a length~$N_1-2$
subsequence of the~$f(\boldc),h(\boldc)$ redundancy.
\item[$(3)$.]\label{case:3} $\boldd^{(1,n-2)}$ is a length~$n-2$ subsequence of the sequence~$\boldc$.
\end{itemize}
Since~$\boldd$ is a length~$N-2$ subsequence of~$\mathcal{E}(\boldc)$,~$d_{n-2}$ must be either the~$(n-2)$-th, the~$(n-1)$-th or the $n$-th bits of~$\mathcal{E}(\boldc)$, and hence~$(3)$ must hold. Similarly,~$(1)$ holds by looking at~$\boldd$ and~$\mathcal{E}(\boldc)$ in reverse order.
By the~definition of~$L$,~$d_{N-2-L}$ is~the $i_1$-th bit of~$\mathcal{E}(\boldc)$ for some~$i_1\le n+N_1$.
Since~$(1)$ holds, we have that~$L$ is either the~$N_2$-th, the $(N_2-1)$-th, or the~$(N_2-2)$-th bits of~$\cE(\boldc)$.
Therefore,~$d_{N-N_1+1-L}$ is~the~$i_2$-th bit of~$\mathcal{E}(\boldc)$ for some~$i_2\ge N-N_1+1-L>n$. Since~$(f(\boldc),h(\boldc))=\mathcal{E}(\boldc)^{(n+1,n+N_1)}$,~$(2)$ must hold.

Since finding~$L$ has~$O(N_2)$ complexity, the complexity of Algorithm~\ref{algorithm:meta} is~$O(N)=O(n)$, given that the complexities of the functions ~$\mathcal{D}_1$ and~$\mathcal{D}_2$ are linear.
\end{proof}
We are left to implement~$\cD_2$ with linear complexity.
In particular, we need to
recover the sequence $\boldc\in\{0,1\}^n$ from its length~$n-2$ subsequence~$\boldd$ in time~$O(n)$, given the redundancy~$f(\boldc)$ and~$h(\boldc)$.
Note that there are~$O(n^2)$ supersuquences of~$\boldd$ of length~$n$, and~$f$ and~$h$ can be computed on each of them in~$O(n)$. Hence, the brute force approach would require~$O(n^3)$.

To achieve linear time complexity, we first recover~$\1_{10}(\boldc)$, which is an~$(n-3)$-subsequence of~$\1_{10}(\boldc)\in\{0,1\}^{n-1}$, and then use it to recover~$\boldc$. In particular, we find the positions of the deleted bits by an iterative updating algorithm, rather than by exhaustive search, and hence linear complexity is obtained.
Furthermore, the uniqueness of the obtained sequence is guaranteed by Lemma~\ref{lemma:10indicator}.

After recovering~$\1_{10}(\boldc)$,
We can find all
length~$n$ supersequences~$\boldc'$ of~$\boldd$ such that~$\1_{10}(\boldc')=\1_{10}(\boldc)$. It is shown that there are at most~$4$ such possible supersequences, 
 and since~Theorem~\ref{theorem:main} guarantees uniqueness, the right~$\boldc$ is found by computing and comparing~$h$.

\subsection{Recovering~$\1_{10}(\boldc)$}
For~$1\le i\le 2n-2$, let
\begin{align}
	&p_i\triangleq\begin{cases}\label{equation:pipj}
	n-i & \mbox{if }1\le i\le n-1\\
	i-n+1 &\mbox{if }n\le i\le 2n-2
	\end{cases}\mbox{, and}\\
    &b_i\triangleq\begin{cases}
	1 & \mbox{if }1\le i\le n-1\\
	0 &\mbox{if }n\le i\le 2n-2
	\end{cases}.
\end{align}
Given a subsequence~$\boldd\in\{0,1\}^{n-2}$ of~$\boldc$, 
let~$\1_{10}(\boldd)=(r_1,\ldots,r_{n-3})$, and let~$\boldd:[2n-2]\times[2n-2]\to \{0,1\}^n\cup\{\star\}$ be defined as
$$ \boldd(i,j) =
\begin{cases}
	(r_1,r_2,\ldots,r_{p_i-1},b_i,r_{p_i},\ldots,r_{p_j-2},b_j,r_{p_j-1},\ldots,r_{n-3}) & \mbox{if }p_i<p_j\\
	(r_1,r_2,\ldots,r_{p_j-1},b_j,r_{p_j},\ldots,r_{p_i-2},b_i,r_{p_i-1},\ldots,r_{n-3}) & \mbox{if }p_i>p_j\\
	\star & \mbox{if }p_i=p_j
\end{cases}, $$
that is, $\boldd(i,j)$ results from~$\1_{10}(\boldd)$ inserting~$b_i$ at position~$p_i$ and~$b_j$ in position~$p_j$ of~$\1_{10}(\boldd)$, if~$p_i\ne p_j$. Notice that~$\boldd(i,j)$ is one possible way of correcting two deletions in the sequence~$\1_{10}(\boldd)$. 
For~$e\in\{0,1,2\}$ define matrices~$\{A^{(e)}\}_{e=0}^2$ as follows.
\begin{align*}
A^{(e)}_{i,j}=\begin{cases}
			\boldd(i,j)\cdot \boldm^{(e)}-\sum^{n-3}_{i=1}\boldm^{(e)}_i\1_{10}(\boldd)_i  &\mbox{if $\boldd(i,j)\ne \star$.}\\
			\star &\mbox{if $\boldd(i,j)=\star$.}
		\end{cases}.
\end{align*}
Notice that~$A^{(e)}_{i,j}$ is the difference in entry~$e$ of the~$f$ redundancies of~$\boldd(i,j)$ and~$\1_{10}(\boldd)$, i.e., $A_{i,j}^{(e)}=f(\boldd(i,j))_e-f(\1_{10}(\boldd))_e$.


We~prove the following properties of~$A^{(e)}$. In the first property, we give an explicit expression for the matrices~$A_{i,j}^{(e)}$ in terms of~$\1_{10}(\boldd)$, $p_i$,~$p_j$,~$b_i$, and~$b_j$. The expression will be used for calculating~$A_{i,j}^{(e)}$ in constant time from its neighboring entries during~$\mathcal{D}_2$. In the following we use~$\delta (x)$ to denote the indicator of the event~$x$, where~$\delta (x)=1$ if and only if~$x$ is true. 
\begin{proposition}\label{proposition:matrixAF}
	If~$A_{i,j}^{(e)}\ne \star$ then
	\begin{align}\label{equation:increasematrix}
	A^{(e)}_{i,j}&=b_i\boldm^{(e)}_{p_i}+b_j\boldm^{(e)}_{p_j}+
	\sum^{n-3}_{k=1}\1_{10}(\boldd)_k[(k+1)^e\delta(\min\{p_i,p_j\}<k+1)   +(k+2)^e\delta(\max\{p_i,p_j\}<k+2)].
	\end{align}
\end{proposition}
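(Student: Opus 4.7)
The proof is a direct computation, so my plan is to unwind the definition of $\boldd(i,j)\cdot \boldm^{(e)}$, subtract the base term $\sum_{k=1}^{n-3}\boldm^{(e)}_k \1_{10}(\boldd)_k$, and reshape the result using the identity $\boldm^{(e)}_k-\boldm^{(e)}_{k-1}=k^e$ (which holds for $e\in\{0,1,2\}$ by the definition of $\boldm^{(e)}$).

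The plan is to first reduce to the case $p_i<p_j$; the case $p_i>p_j$ is symmetric under swapping the roles of $(i,b_i,p_i)$ and $(j,b_j,p_j)$, and the formula on the right-hand side is already symmetric (it involves only $\min$ and $\max$), so handling one case suffices. Writing $r_k\triangleq \1_{10}(\boldd)_k$, the explicit form
\[
\boldd(i,j)=(r_1,\ldots,r_{p_i-1},b_i,r_{p_i},\ldots,r_{p_j-2},b_j,r_{p_j-1},\ldots,r_{n-3})
\]
gives
\[
\boldd(i,j)\cdot \boldm^{(e)}=\sum_{k=1}^{p_i-1} r_k \boldm^{(e)}_k + b_i \boldm^{(e)}_{p_i}+\sum_{k=p_i}^{p_j-2}r_k \boldm^{(e)}_{k+1} + b_j \boldm^{(e)}_{p_j}+\sum_{k=p_j-1}^{n-3}r_k \boldm^{(e)}_{k+2}.
\]

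Next I would subtract $\sum_{k=1}^{n-3} r_k \boldm^{(e)}_k$ term by term. The first sum cancels, and the remaining $r_k$ terms combine into
\[
\sum_{k=p_i}^{p_j-2} r_k\bigl(\boldm^{(e)}_{k+1}-\boldm^{(e)}_k\bigr) + \sum_{k=p_j-1}^{n-3} r_k\bigl(\boldm^{(e)}_{k+2}-\boldm^{(e)}_k\bigr).
\]
Applying $\boldm^{(e)}_{k+1}-\boldm^{(e)}_k=(k+1)^e$ and $\boldm^{(e)}_{k+2}-\boldm^{(e)}_k=(k+1)^e+(k+2)^e$, this becomes
\[
\sum_{k=p_i}^{n-3} r_k (k+1)^e + \sum_{k=p_j-1}^{n-3} r_k (k+2)^e,
\]
where the $(k+1)^e$ contributions from the two sums have been merged on the range $k\ge p_j-1$. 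Adding back the $b_i\boldm^{(e)}_{p_i}+b_j\boldm^{(e)}_{p_j}$ contributions yields the stated formula, once I re-express $k\ge p_i$ as $\delta(p_i<k+1)$ and $k\ge p_j-1$ as $\delta(p_j<k+2)$, and substitute $p_i=\min\{p_i,p_j\}$ and $p_j=\max\{p_i,p_j\}$.

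There is no real obstacle here; the only mild care needed is making sure the bookkeeping of the index shifts (two insertions shift the tail by $+2$, the middle segment by $+1$) lines up with the $\delta$-indicator form, and that the telescoping identity $\boldm^{(e)}_k-\boldm^{(e)}_{k-1}=k^e$ is stated once and cited for each $e\in\{0,1,2\}$. The symmetric form of the answer (in $\min$ and $\max$) makes the $p_i>p_j$ case immediate.
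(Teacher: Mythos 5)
Your computation is correct and follows essentially the same route as the paper: both isolate the two inserted-bit terms $b_i\boldm^{(e)}_{p_i}+b_j\boldm^{(e)}_{p_j}$ and account for the shifted bits via $\boldm^{(e)}_{k+1}-\boldm^{(e)}_k=(k+1)^e$ and $\boldm^{(e)}_{k+2}-\boldm^{(e)}_k=(k+1)^e+(k+2)^e$, with the paper phrasing the bookkeeping per-bit through indicator products while you write out the inner product over the three index ranges and telescope. The reduction to $p_i<p_j$ by symmetry of the $\min/\max$ form is fine and matches the paper's implicit treatment.
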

\begin{proof}
The difference between $\sum^{n-3}_{k=1}\boldm^{(e)}_k\1_{10}(\boldd)_k$ and~$\boldd(i,j)\cdot \boldm^{(e)}$~consists of two parts. The first part follows~from the two inserted~bits, and can be written as
\begin{align}\label{equation:increasebits}
b_i\boldm^{(e)}_{p_i}+b_j\boldm^{(e)}_{p_j}
\end{align}
The second part follows from the shift of bits in~$\1_{10}(\boldd)_k$ that is caused by the insertions of two bits~$b_i$ and~$b_j$. Each bit~$\1_{10}(\boldd)_k$ shifts from position~$k$ to position~$k+1$ if one insertion occurs~before~$\1_{10}(\boldd)_k$, i.e.,~$\min\{p_i,p_j\}<k+1$ and~$\max\{p_i,p_j\}\ge k+2$. The resulting difference~is given by
\begin{align}\label{equation:increaseoneshift}
&\sum^{n-3}_{k=1}\1_{10}(\boldd)_k\delta (\min\{p_i,p_j\}<k+1)\delta(\max\{p_i,p_j\}\ge k+2)(\boldm^{(e)}_{k+1}-\boldm^{(e)}_k)\nonumber\\=&\sum^{n-3}_{k=1}\1_{10}(\boldd)_k\delta(\min\{p_i,p_j\}<k+1)\delta(\max\{p_i,p_j\}\ge k+2)(k+1)^e.
\end{align}
The bit~$\1_{10}(\boldd)_k$ shifts from position~$k$ to~$k+2$ if two insertions occur~before~$\1_{10}(\boldd)_k$, i.e.,~$\max\{p_i,p_j\}<k+2$. The corresponding difference~is given by
\begin{align}\label{equation:increasetwoshifts}
&\sum^{n-3}_{k=1}\1_{10}(\boldd)_k\delta(\min\{p_i,p_j\}<k+1)\delta(\max\{p_i,p_j\}< k+2)\1_{10}(\boldd)_k(\boldm^{(e)}_{k+2}-\boldm^{(e)}_k)\nonumber\\
=&\sum^{n-3}_{k=1}\1_{10}(\boldd)_k\delta(\max\{p_i,p_j\}< k+2)[(k+1)^e+(k+2)^e].
\end{align}
Combining~\eqref{equation:increaseoneshift} and~\eqref{equation:increasetwoshifts}, we have that the difference that results~from the second part is given by
\begin{align*}
\sum^{n-3}_{k=1} \1_{10}(\boldd)_k[(k+1)^e\delta(\min\{p_i,p_j\}<k+1)+(k+2)^e\delta(\max\{p_i,p_j\}<k+2)],
\end{align*}
that together~with~\eqref{equation:increasebits}, implies~\eqref{equation:increasematrix}.
\end{proof}
The following shows that the entries of each~$A^{(e)}$ are non-decreasing in rows and columns, and that the respective sequences~$\boldd(i,j)$ that lie in the same column or the same row, are unique given each entry value. This property guarantees a simple algorithm for finding a sequence~$\boldd(i,j)$ with a given value~$A^{(e)}_{i,j}$ by decreasing~$i$ or increasing~$j$ by~$1$ in each step.

\begin{proposition}\label{proposition:nondecreasing}
	For every~$i,j$ and~$i_1<i_2$, $j_1<j_2$, if neither of~$\boldd(i_1,j),\boldd(i_2,j),\boldd(i,j_1)$, and~$\boldd(i,j_2)$ equals~$\star$, then $A^{(e)}_{i_1,j}\le A^{(e)}_{i_2,j}$ and~$A^{(e)}_{i,j_1}\le A^{(e)}_{i,j_2}$. Moreover, if~$A^{(e)}_{i_1,j}= A^{(e)}_{i_2,j}$ (resp.~$A^{(e)}_{i,j_1}= A^{(e)}_{i,j_2}$), then~$\boldd(i_1,j)=\boldd(i_2,j)$ (resp.~$\boldd(i,j_1)=\boldd(i,j_2)$).
\end{proposition}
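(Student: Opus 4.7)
The plan is to reduce both claims to the one-step cases (consecutive $i$'s and consecutive $j$'s) and then invoke the explicit formula of Proposition~\ref{proposition:matrixAF}. Since the formula depends on $(p_i,p_j,b_i,b_j)$ symmetrically in the two indices (through $\min$, $\max$, and the two $b$-terms), the $j$-direction follows by the same argument with the roles of $i$ and $j$ swapped, so I would concentrate on the $i$-direction.

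The one-step comparison splits into three regimes based on where the pair $\{i,i+1\}$ falls: \textbf{(A)} both in $[1,n-1]$, so $b_i=b_{i+1}=1$ and $p_{i+1}=p_i-1$; \textbf{(B)} the transition $i=n-1,\,i+1=n$, where $p_i=p_{i+1}=1$ but $b_i=1,\,b_{i+1}=0$; \textbf{(C)} both in $[n,2n-2]$, so $b_i=b_{i+1}=0$ and $p_{i+1}=p_i+1$. In regimes (A) and (C) I would further split on whether $p_j$ lies strictly to the left or strictly to the right of $\{p_i,p_{i+1}\}$ (the intermediate case cannot occur since $|p_{i+1}-p_i|=1$, and the boundary case $p_j\in\{p_i,p_{i+1}\}$ is ruled out by the $\star$-hypothesis). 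In each such subcase, the piecewise definition of $\boldd(\cdot,\cdot)$ shows that $\boldd(i,j)$ and $\boldd(i+1,j)$ coincide on every coordinate except for a single pair of adjacent positions $(\ell,\ell+1)$, where the inserted bit $b_{i+1}$ and one entry $r_k=\1_{10}(\boldd)_k$ exchange places. Substituting into the formula of Proposition~\ref{proposition:matrixAF} yields
\begin{align*}
A^{(e)}_{i+1,j}-A^{(e)}_{i,j}=\pm(b_{i+1}-r_k)(\ell+1)^e,
\end{align*}
with $k$, $\ell$, and the sign determined by the subcase. Within each regime the prefix sign is constant (because $b_{i+1}$ is constant and $\ell$ shifts only by a fixed offset across subcases), and the factor $(\ell+1)^e$ is strictly positive, so the one-step difference is sign-definite. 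Regime (B) is a one-line check: $\boldd(n-1,j)$ and $\boldd(n,j)$ differ only at position $1$, contributing $(b_n-b_{n-1})\boldm^{(e)}_1=-1$.

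For the uniqueness half I would exploit the fact that the one-step difference above vanishes precisely when $b_{i+1}=r_k$; in that case the swapped pair of coordinates is already equal, so $\boldd(i,j)=\boldd(i+1,j)$ outright. Combined with monotonicity, any equality $A^{(e)}_{i_1,j}=A^{(e)}_{i_2,j}$ with $i_1<i_2$ forces every intermediate one-step difference to vanish (they share a common sign and sum to zero), hence each adjacent pair along the chain agrees and $\boldd(i_1,j)=\boldd(i_2,j)$. The only subtlety is that the chain may pass through indices $i^\star$ at which $\boldd(i^\star,j)=\star$; for fixed $j$ these occur at exactly two values (concretely $i^\star=j$ and $i^\star=2n-1-j$), and at each such index I would skip over it and compare $\boldd(i^\star-1,j)$ with $\boldd(i^\star+1,j)$ directly. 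An analogous swap analysis shows that the two-step difference is again sign-definite and vanishes only when the two endpoints agree.

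The main obstacle will be the case bookkeeping in regimes (A) and (C): correctly identifying the swap position $\ell$ and the index $k$ of the involved $\1_{10}(\boldd)$ coordinate in each of the four subcases, and verifying that the signs all align across subcases. The two-step calculation across the $\star$-skips is a separate but entirely analogous computation. None of these steps requires any ingredient beyond Proposition~\ref{proposition:matrixAF} and the piecewise definition of $\boldd(i,j)$.
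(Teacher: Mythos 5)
Your proposal is correct, but it follows a genuinely different route from the paper. The paper does not chain through consecutive indices: for arbitrary $i_1<i_2$ within a single regime it deletes the common inserted bit at position $p_j$ from both $\boldd(i_1,j)$ and $\boldd(i_2,j)$, observes that the two punctured sequences are single insertions of the same bit value into $\1_{10}(\boldd)$ at two different positions, and then writes the difference $A^{(e)}_{i_2,j}-A^{(e)}_{i_1,j}$ as a single value of the function $g$ from Eq.~\eqref{equation:g} via the telescoping identity~\eqref{equation:differenceofweightedsum}; the sign and the equality case then come in one stroke from the Lemma~\ref{lemma:VT}-style argument (the $g$-value is $\ge 0$ or $\le 0$ according to the inserted bit, with equality iff the intervening segment is constant, which forces $\boldd(i_1,j)=\boldd(i_2,j)$). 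Your telescoping over adjacent transpositions is the ``local'' expansion of that same $g$-value, so the two arguments are arithmetically equivalent, but the paper's global formulation buys two simplifications that your plan has to pay for separately: it never needs the subcase bookkeeping for the swap position $\ell$ and index $k$, and it sidesteps the $\star$-skip entirely, since it compares the two endpoints directly rather than walking through the (at most one per regime, at $i^\star\in\{j,2n-1-j\}$, as you correctly identify) excluded intermediate row. What your version buys in exchange is that it needs nothing beyond Proposition~\ref{proposition:matrixAF} and the definition of $\boldd(i,j)$ — no appeal to $g$ or to the VT lemma — and it makes the source of strictness at the regime boundary $i=n-1\to n$ completely explicit (a fixed $\pm\boldm^{(e)}_1$ contribution that can never vanish, so equality across the boundary is vacuous). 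The remaining work you defer — pinning down $(\ell,k)$ in each of the four subcases and checking that all one-step signs align with the claimed monotonicity direction — is routine verification rather than a missing idea, so I consider the plan sound.
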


\begin{proof}
By symmetry we only need to prove that the matrix~$A^{(e)}$ is non-decreasing in each column, for which it suffices to prove that:
\begin{enumerate}
	\item[$(1)$.] $A^{(e)}_{i_1,j}\le A^{(e)}_{i_2,j}$ for~$1\le i_1<i_2\le n-1$.
	\item[$(2)$.] $A^{(e)}_{n-1,j}\le A^{(e)}_{n,j}$.
	\item[$(3)$.] $A^{(e)}_{i_1,j}\le A^{(e)}_{i_2,j}$ for~$n\le i_1<i_2\le 2n-2$.
\end{enumerate}

For~$(2)$, the only difference between~$\boldd(n-1,j)$ and~$\boldd(n,j)$ is that their first bits are~$0$ and~$1$ respectively, and hence~$A^{(e)}_{n-1,j}+1= A^{(e)}_{n,j}$. We are left to show~$(1)$ and~$(3)$.

$(1)$: For~$1\le i_1<i_2\le n-1$, we have~$b_{i_1}=b_{i_2}=0$ and~$p_{i_1}>p_{i_2}$.
Let~$\boldd'(i_1,j)\in\{0,1\}^{n-2}$ and~$\boldd'(i_2,j)\in\{0,1\}^{n-2}$~be two subsequences of~$\boldd(i_1,j)$ and~$\boldd(i_2,j)$ respectively after deleting the~$p_j$-th bit from both~$\boldd(i_1,j)$ and~$\boldd(i_2,j)$, and similarly, let~$\boldm^{(e),p_j}=(\boldm^{(e)}_1,\boldm^{(e)}_2,\ldots,\boldm^{(e)}_{p_j-1},\boldm^{(e)}_{p_j+1},$ $\ldots,\boldm^{(e)}_{n-1})$ be a subsequence of~$\boldm^{(e)}$ after deleting the~$p_j$-th entry. Then, according to~\eqref{equation:VTccprime} and~\eqref{equation:differenceofweightedsum}, we have
\begin{align}\label{equation:differencematrixentry}
A^{(e)}_{i_2,j}-A^{(e)}_{i_1,j}&=\boldd(i_2,j)\cdot \boldm^{(e)}-\boldd(i_1,j)\cdot \boldm^{(e)}\nonumber \\
&=\boldd'(i_2,j)\cdot \boldm^{(e),p_j}-\boldd'(i_1,j)\cdot \boldm^{(e),p_j}\nonumber\\
&=g(k_1,\boldd'(i_2,j)^{(k_1,k_2)},\boldd'(i_1,j)_{k_2})\nonumber\\
&\ge  0,
\end{align}
where~$k_1=p_{i_2}-\delta(p_{i_2}>p_j)$ and~$k_2=p_{i_1}-\delta(p_{i_1}>p_j)$ ~are the indices whose deletion~from~$\boldd'(i_2,j)$ and~$\boldd'(i_1,j)$, respectively, results in~$\1_{10}(\boldd)$. Similarly, as in~the proof in Lemma~\ref{lemma:VT}, the last inequality follows from the fact that~$\boldd'(i_1,j)_{k_2}=b_{i_1}=0$. Furthermore, equality~holds when~$\boldd'(i_2,j)^{(k_1,k_2)}=0$ and~$\boldd'(i_1,j)_{k_2}=0$, which implies that~$\boldd'(i_1,j)=\boldd'(i_2,j)$, and hence~$\boldd(i_1,j)=\boldd(i_2,j)$.

$(3)$: For~$n\le i_1<i_2\le 2n-2$, we have~$b_{i_1}=b_{i_2}=1$ and~$p_{i_1}<p_{i_2}$. Similar to~\eqref{equation:differencematrixentry}, we have~that
\begin{align*}
A^{(e)}_{i_1,j}-A^{(e)}_{i_2,j} =g(k_1,\boldd'(i_1,j)^{(k_1,k_2)},\boldd'(i_2,j)_{k_2})\le 0,
\end{align*}
where~$k_1=p_{i_1}-\delta(p_{i_1}>p_j)$ and~$k_2=p_{i_2}-\delta(p_{i_2}>p_j)g$~are the indices whose deletion~from~$\boldd'(i_1,j)$ and~$\boldd'(i_2,j)$, respectively, results in~$\1_{10}(\boldd)$. The last inequality follows from the fact that~$\boldd'(i_2,j)_{k_2}=b_{i_2}=1$, and
~equality holds when~$\boldd(i_1,j)=\boldd(i_2,j)$.
\end{proof}
\begin{remark}\label{remark:entrybound}
From proposition~\ref{proposition:nondecreasing}, we have that
\begin{align*}
0 =A^{(e)}_{1,2}\le A^{(e)}_{i,j}\le A^{(e)}_{2n-2,2n-3}\le \boldm^{(e)}_{n-1}+ \boldm^{(e)}_{n-2}\le n_e,~,1\le i,j\le 2n-2,~A^{(e)}_{i,j}\ne \star
\end{align*}
where~$n_0=2n,~n_1=n^2,~n_2=n^3$.
\end{remark}
Our goal is to find a sequence~$\boldd(i,j)\ne \star$ for which
\begin{align}\label{equation:paritymoduloincrease}
&A^{(e)}_{i,j}\equiv f_1(\boldc)-\sum^{n-3}_{i=1}\boldm^{(e)}_i\1_{10}(\boldd)_i \bmod n_e
\end{align}
for every~$e\in\{0,1,2\}$.
In addition, the sequence~$\boldd(i,j)$ cannot contain adjacent~$1$'s, i.e.,
\begin{align}\label{equation:checking11}
&\boldd(i,j)_{p_i-1}\cdot\boldd(i,j)_{p_i}=\boldd(i,j)_{p_i}\cdot\boldd(i,j)_{p_i+1}=0\nonumber\\
&\boldd(i,j)_{p_j-1}\cdot\boldd(i,j)_{p_j}=\boldd(i,j)_{p_j}\cdot\boldd(i,j)_{p_j+1}=0,
\end{align}
and~from Lemma~\ref{lemma:10indicator}, such~$\boldd(i,j)$ equals~$\1_{10}(\boldc)$. Moreover, since~Remark~\ref{remark:entrybound} implies that~$0\le A^{(e)}_{i,j}\le n_e$, it follows that the modular equality in~\eqref{equation:paritymoduloincrease} is unnecessary, i.e.,~it suffices to find a sequence~$\boldd(i,j)\ne\star$ that satisfies~\eqref{equation:checking11} and
\begin{align}\label{equation:parityincrease}
A^{(e)}_{i,j}=a_e\triangleq f_e(\boldc)-\sum^{n-3}_{k=1}\boldm^{(e)}_k\1_{10}(\boldd)_k \bmod n_e,
\end{align}
where~$a_e$ is the target value to be found in matrix~$A^{(e)}$. Eq.~\eqref{equation:parityincrease} implies that~$\boldd(i,j)$ satisfies the~$f$ redundancy.
\begin{center}
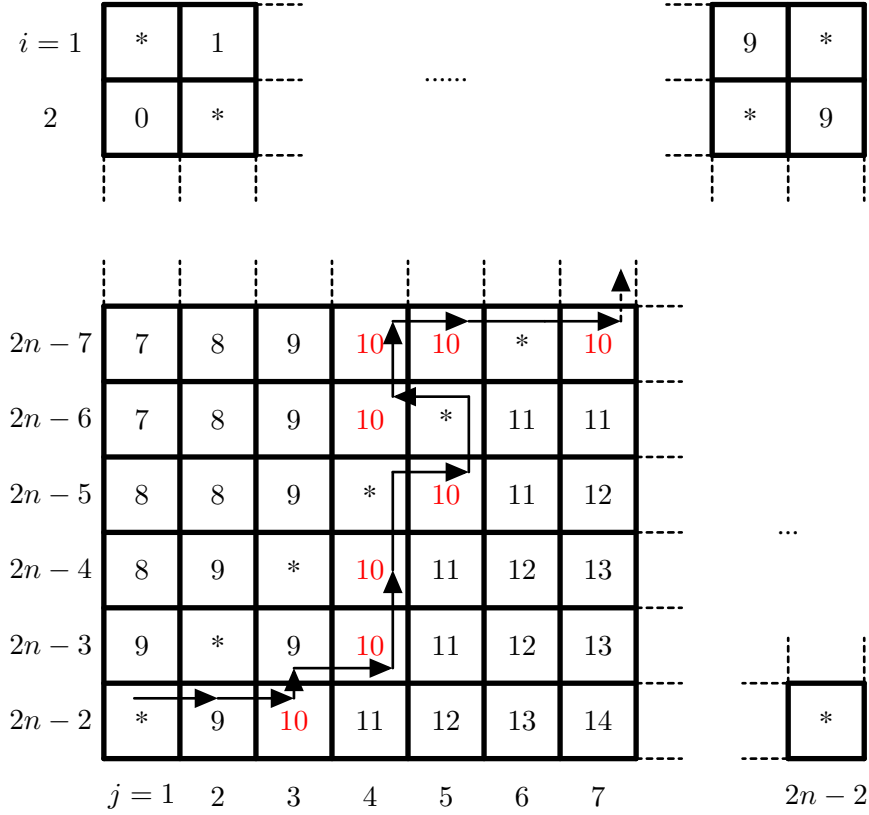
\begin{figure}
\definecolor{ffqqqq}{rgb}{1,0,0}
\begin{tikzpicture}[line cap=round,line join=round,>=triangle 45,x=2cm,y=2cm]
\clip(-5,-3) rectangle (2.6,2.6);
\draw [line width=2pt] (-3,-1.5)-- (0.5,-1.5);
\draw [line width=2pt] (-3,-2)-- (0.5,-2);
\draw [line width=2pt] (-3,-1)-- (0.5,-1);
\draw [line width=2pt] (-3,-0.5)-- (0.5,-0.5);
\draw [line width=2pt] (-3,0)-- (0.5,0);
\draw [line width=2pt] (-2.5,-2.5)-- (-2.5,0.5);
\draw [line width=2pt] (-2,-2.5)-- (-2,0.5);
\draw [line width=2pt] (-1.5,0.5)-- (-1.5,-2.5);
\draw [line width=2pt] (-1,0.5)-- (-1,-2.5);
\draw [line width=2pt] (-0.5,0.5)-- (-0.5,-2.5);
\draw [line width=2pt] (0,0.5)-- (0,-2.5);
\draw [line width=2pt] (0.5,-2.5)-- (0.5,0.5);
\draw (-2.75,-2.25) node[anchor=center] {*};
\draw (-2.25,-1.75) node[anchor=center] {*};
\draw (-1.75,-1.25) node[anchor=center] {*};
\draw (-1.25,-0.75) node[anchor=center] {*};
\draw [line width=2pt] (-3,1.5)-- (-2,1.5);
\draw (-0.75,-0.25) node[anchor=center] {*};
\draw [line width=2pt] (1.5,-2)-- (1.5,-2.5);
\draw (1.75,-2.25) node[anchor=center] {*};
\draw (1.75,2.25) node[anchor=center] {*};
\draw (1.25,1.75) node[anchor=center] {*};
\draw (-2.75,2.25) node[anchor=center] {*};
\draw (-2.25,1.75) node[anchor=center] {*};
\draw [color=ffqqqq](0.25,0.25) node[anchor=center] {$10$};
\draw [color=ffqqqq](-0.75,0.25) node[anchor=center] {$10$};
\draw [color=ffqqqq](-0.75,-0.75) node[anchor=center] {$10$};
\draw [color=ffqqqq](-1.25,-1.75) node[anchor=center] {$10$};
\draw [color=ffqqqq](-1.75,-2.25) node[anchor=center] {$10$};
\draw (-0.25,0.25) node[anchor=center] {*};
\draw [color=ffqqqq](-1.25,0.25) node[anchor=center] {$10$};
\draw [color=ffqqqq](-1.25,-0.25) node[anchor=center] {$10$};
\draw [color=ffqqqq](-1.25,-1.25) node[anchor=center] {$10$};
\draw (-0.25,-0.25) node[anchor=center] {$11$};
\draw (0.25,-0.25) node[anchor=center] {$11$};
\draw (-0.25,-0.75) node[anchor=center] {$11$};
\draw (0.25,-0.75) node[anchor=center] {$12$};
\draw (-0.75,-1.25) node[anchor=center] {$11$};
\draw (-0.25,-1.25) node[anchor=center] {$12$};
\draw (0.25,-1.25) node[anchor=center] {$13$};
\draw (0.25,-1.75) node[anchor=center] {$13$};
\draw (-0.75,-1.75) node[anchor=center] {$11$};
\draw (-0.25,-1.75) node[anchor=center] {$12$};
\draw (-0.75,-2.25) node[anchor=center] {$12$};
\draw (-0.25,-2.25) node[anchor=center] {$13$};
\draw (0.25,-2.25) node[anchor=center] {$14$};
\draw (-1.25,-2.25) node[anchor=center] {$11$};
\draw (-1.75,0.25) node[anchor=center] {$9$};
\draw (-1.75,-0.25) node[anchor=center] {$9$};
\draw (-1.75,-0.75) node[anchor=center] {$9$};
\draw (-1.75,-1.75) node[anchor=center] {$9$};
\draw (-2.25,0.25) node[anchor=center] {$8$};
\draw (-2.25,-0.25) node[anchor=center] {$8$};
\draw (-2.25,-0.75) node[anchor=center] {$8$};
\draw (-2.75,-1.75) node[anchor=center] {$9$};
\draw (-2.25,-2.25) node[anchor=center] {$9$};
\draw (-2.75,0.25) node[anchor=center] {$7$};
\draw (-2.75,-0.25) node[anchor=center] {$7$};
\draw (-2.75,-0.75) node[anchor=center] {$8$};
\draw (-2.75,-1.25) node[anchor=center] {$8$};
\draw (-2.25,-1.25) node[anchor=center] {$9$};
\draw (-2.75,1.75) node[anchor=center] {$0$};
\draw (-2.25,2.25) node[anchor=center] {$1$};
\draw (1.25,2.25) node[anchor=center] {$9$};
\draw (1.75,1.75) node[anchor=center] {$9$};
\draw [line width=2pt] (-3,2.5)-- (-2,2.5);
\draw [line width=2pt] (-2,2.5)-- (-2,1.5);
\draw [line width=2pt] (-3,2.5)-- (-3,1.5);
\draw [line width=2pt] (-2.5,2.5)-- (-2.5,1.5);
\draw [line width=2pt] (-3,2)-- (-2,2);
\draw [line width=2pt] (1,2.5)-- (2,2.5);
\draw [line width=2pt] (2,2.5)-- (2,1.5);
\draw [line width=2pt] (1,1.5)-- (2,1.5);
\draw [line width=2pt] (1,2.5)-- (1,1.5);
\draw [line width=2pt] (1,2)-- (2,2);
\draw [line width=2pt] (1.5,2.5)-- (1.5,1.5);
\draw [line width=2pt] (1.5,-2)-- (2,-2);
\draw [line width=2pt] (1.5,-2.5)-- (2,-2.5);
\draw [line width=2pt] (2,-2.5)-- (2,-2);
\draw [line width=2pt] (-3,0.5)-- (-3,-2.5);
\draw [line width=2pt] (-3,-2.5)-- (0.5,-2.5);
\draw [line width=2pt] (-3,0.5)-- (0.5,0.5);
\draw (-3.35,-2.25) node[anchor=center] {$2n-2$};
\draw (-3.35,-1.75) node[anchor=center] {$2n-3$};
\draw (-3.35,-1.25) node[anchor=center] {$2n-4$};
\draw (-3.35,-0.75) node[anchor=center] {$2n-5$};
\draw (-3.35,-0.25) node[anchor=center] {$2n-6$};
\draw (-3.35,0.25) node[anchor=center] {$2n-7$};
\draw (-3.35,1.75) node[anchor=center] {$2$};
\draw (-3.35,2.25) node[anchor=center] {$i=1$};
\draw (-2.75,-2.75) node[anchor=center] {$j=1$};
\draw (-2.25,-2.75) node[anchor=center] {$2$};
\draw (-1.75,-2.75) node[anchor=center] {$3$};
\draw (-1.25,-2.75) node[anchor=center] {$4$};
\draw (-0.75,-2.75) node[anchor=center] {$5$};
\draw (-0.25,-2.75) node[anchor=center] {$6$};
\draw (0.25,-2.75) node[anchor=center] {$7$};
\draw (1.75,-2.75) node[anchor=center] {$2n-2$};
\draw [line width=1pt,dash pattern=on 2pt off 2pt] (-3,1.5)-- (-3,1.2);
\draw [line width=1pt,dash pattern=on 2pt off 2pt] (-2.5,1.5)-- (-2.5,1.2);
\draw [line width=1pt,dash pattern=on 2pt off 2pt] (-2,1.5)-- (-2,1.2);
\draw [line width=1pt,dash pattern=on 2pt off 2pt] (-3,0.5)-- (-3,0.8);
\draw [line width=1pt,dash pattern=on 2pt off 2pt] (-2.5,0.5)-- (-2.5,0.8);
\draw [line width=1pt,dash pattern=on 2pt off 2pt] (-2,0.5)-- (-2,0.8);
\draw [line width=1pt,dash pattern=on 2pt off 2pt] (-1.5,0.5)-- (-1.5,0.8);
\draw [line width=1pt,dash pattern=on 2pt off 2pt] (-1,0.5)-- (-1,0.8);
\draw [line width=1pt,dash pattern=on 2pt off 2pt] (-2,2.5)-- (-1.7,2.5);
\draw [line width=1pt,dash pattern=on 2pt off 2pt] (-2,2)-- (-1.7,2);
\draw [line width=1pt,dash pattern=on 2pt off 2pt] (-2,1.5)-- (-1.7,1.5);
\draw [line width=1pt,dash pattern=on 2pt off 2pt] (1,2.5)-- (0.7,2.5);
\draw [line width=1pt,dash pattern=on 2pt off 2pt] (1,2)-- (0.7,2);
\draw [line width=1pt,dash pattern=on 2pt off 2pt] (1,1.5)-- (0.7,1.5);
\draw [line width=1pt,dash pattern=on 2pt off 2pt] (1,1.5)-- (1,1.2);
\draw [line width=1pt,dash pattern=on 2pt off 2pt] (1.5,1.5)-- (1.5,1.2);
\draw [line width=1pt,dash pattern=on 2pt off 2pt] (2,1.5)-- (2,1.2);
\draw [line width=1pt,dash pattern=on 2pt off 2pt] (-0.5,0.5)-- (-0.5,0.8);
\draw [line width=1pt,dash pattern=on 2pt off 2pt] (0,0.5)-- (0,0.8);
\draw [line width=1pt,dash pattern=on 2pt off 2pt] (0.5,0.5)-- (0.5,0.8);
\draw [line width=1pt,dash pattern=on 2pt off 2pt] (0.5,0.5)-- (0.8,0.5);
\draw [line width=1pt,dash pattern=on 2pt off 2pt] (0.5,0)-- (0.8,0);
\draw [line width=1pt,dash pattern=on 2pt off 2pt] (0.5,-0.5)-- (0.8,-0.5);
\draw [line width=1pt,dash pattern=on 2pt off 2pt] (0.5,-1)-- (0.8,-1);
\draw [line width=1pt,dash pattern=on 2pt off 2pt] (0.5,-1.5)-- (0.8,-1.5);
\draw [line width=1pt,dash pattern=on 2pt off 2pt] (0.5,-2)-- (0.8,-2);
\draw [line width=1pt,dash pattern=on 2pt off 2pt] (0.5,-2.5)-- (0.8,-2.5);
\draw [line width=1pt,dash pattern=on 2pt off 2pt] (1.5,-2.5)-- (1.2,-2.5);
\draw [line width=1pt,dash pattern=on 2pt off 2pt] (1.5,-2)-- (1.5,-1.7);
\draw [line width=1pt,dash pattern=on 2pt off 2pt] (2,-2)-- (2,-1.7);
\draw [line width=1pt,dash pattern=on 2pt off 2pt] (1.5,-2)-- (1.2,-2);
\draw (-0.75,2) node[anchor=center] {......};
\draw (1.5,-1) node[anchor=center] {...};
\draw [->,line width=1pt] (-2.8,-2.1) -- (-2.25,-2.1);
\draw [->,line width=1pt] (-2.25,-2.1) -- (-1.75,-2.1);
\draw [->,line width=1pt] (-1.75,-2.1) -- (-1.75,-1.9);
\draw [->,line width=1pt] (-1.75,-1.9) -- (-1.1,-1.9);
\draw [->,line width=1pt] (-1.1,-1.9) -- (-1.1,-1.25);
\draw [line width=1pt] (-1.1,-1.25) -- (-1.1,-0.6);
\draw [->,line width=1pt] (-1.1,-0.6) -- (-0.6,-0.6);
\draw [line width=1pt] (-0.6,-0.6) -- (-0.6,-0.1);
\draw [->,line width=1pt] (-0.6,-0.1) -- (-1.1,-0.1);
\draw [->,line width=1pt] (-1.1,-0.1) -- (-1.1,0.4);
\draw [->,line width=1pt] (-1.1,0.4) -- (-0.6,0.4);
\draw [line width=1pt] (-0.6,0.4) -- (-0.1,0.4);
\draw [->,line width=1pt] (-0.1,0.4) -- (0.4,0.4);
\draw [->,line width=1pt,dash pattern=on 2pt off 2pt] (0.4,0.4) -- (0.4,0.75);
\end{tikzpicture}\caption{The path of Algorithm~\ref{algorithm:find10} on the matrix~$A^{(0)}$. The algorithm searches for all~$i,j$ pairs such that $A^{(0)}_{i,j}=10$ that appear in the lowest position (with maximum $i$) of each column. The algorithm proceeds right until the next term $A^{(0)}_{i,j}$ is greater than~$10$. Then, it proceeds up one step and repeats the process in the same manner.
}
\label{figure:matrix}
\end{figure}
\end{center}
The procedure to find such~$\boldd(i,j)$ is given in Algorithm~\ref{algorithm:find10}.
We search for all sequences~$\boldd(i,j)\ne \star$ with no adjacent $1$'s (satisfies \eqref{equation:checking11}) such that~$A^{(0)}_{i,j}=a_0$. This clearly amounts to a binary~search in a sorted matrix\footnote{The two~$\star$ entries in each row or column can simply be ignored.}.
We start from the bottom left corner of the matrix, proceed to the right in each step until reaching the rightmost entry such that~$A^{(0)}_{i,j}\le a_0$, and then go one step up.
~Figure~\ref{figure:matrix} illustrates an example of how Algorithm~\ref{algorithm:find10} runs on matrix~$A^{(0)}$.

To avoid the computation of the entire matrix, that would require~$O(n^2)$ time, each entry is computed from previously seen ones \textit{only} upon its discovery. To this end we prove the following lemma, that alongside Proposition~\ref{proposition:matrixAF}, provides a way of computing a newly discovered entry. 
\begin{lemma}
Whenever the~$(i,j)$-th and~$(i+1,j)$-th (resp.~$(i,j+1)$) entries of~$A^{(e)}$ are not~$\star$, we have that
\begin{alignat}{2}
A^{(e)}_{i,j}-A^{(e)}_{i+1,j}&=&\;&
b_i\boldm^{(e)}_{p_i}-b_{i+1}\boldm^{(e)}_{p_{i+1}}\nonumber\\
&~&& +\sum^{\min\{p_1,p_{i+1}\}}_{k=\min\{p_i,p_{i+1}\}-1}\1_{10}(\boldd)_k[(k+1)^e(\delta(\min\{p_i,p_j\}<k+1)-\delta(\min\{p_{i+1},p_j\}<k+1))\nonumber\\
&~&&+(k+2)^e(\delta(\max\{p_i,p_j\}<k+2)-\delta(\max\{p_{i+1},p_j\}<k+2))]\label{equation:columnupdate},\mbox{ and}\\
A^{(e)}_{i,j}-A^{(e)}_{i,j+1}&=&&
b_j\boldm^{(e)}_{p_j}-b_{j+1}\boldm^{(e)}_{p_{j+1}}\nonumber\\
&~&&+\sum^{\min\{p_j,p_{j+1}\}}_{k=\min\{p_j,p_{j+1}\}-1}\1_{10}(\boldd)_k[(k+1)^e(\delta(\min\{p_i,p_j\}<k+1)-\delta(\min\{p_i,p_{j+1}\}<k+1))\nonumber\\
&~&&+(k+2)^e(\delta(\max\{p_i,p_j\}<k+2)-\delta(\max\{p_i,p_{j+1}\}<k+2))]\label{equation:rowupdate}
\end{alignat}
\end{lemma}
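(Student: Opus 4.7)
The plan is to derive both identities by a direct application of Proposition~\ref{proposition:matrixAF} to the two entries involved and then cancelling the terms they share. Concretely, I would first expand
\[
A^{(e)}_{i,j} = b_i \boldm^{(e)}_{p_i} + b_j \boldm^{(e)}_{p_j} + \sum_{k=1}^{n-3} \1_{10}(\boldd)_k \bigl[(k+1)^e \delta(\min\{p_i,p_j\}<k+1) + (k+2)^e \delta(\max\{p_i,p_j\}<k+2)\bigr]
\]
and write the analogous expansion for $A^{(e)}_{i+1,j}$. Since the column index~$j$ is unchanged, the contribution $b_j \boldm^{(e)}_{p_j}$ is identical in both expansions and disappears upon subtraction, leaving only the residual $b_i \boldm^{(e)}_{p_i} - b_{i+1} \boldm^{(e)}_{p_{i+1}}$ together with a sum of indicator differences indexed by $k$.

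The next step is to observe that, for each fixed $k$, the indicator differences $\delta(\min\{p_i,p_j\}<k+1)-\delta(\min\{p_{i+1},p_j\}<k+1)$ and $\delta(\max\{p_i,p_j\}<k+2)-\delta(\max\{p_{i+1},p_j\}<k+2)$ both vanish unless the threshold $k+1$ (respectively $k+2$) separates $p_i$ from $p_{i+1}$. By the definition~\eqref{equation:pipj}, consecutive values $p_i$ and $p_{i+1}$ differ by at most one (monotonically decreasing on the first half and increasing on the second, meeting at $p_{n-1}=p_n=1$), so at most two values of $k$, namely $k \in \{\min\{p_i,p_{i+1}\}-1,\min\{p_i,p_{i+1}\}\}$, can contribute. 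Substituting and restricting the sum to this short window yields exactly~\eqref{equation:columnupdate}. For~\eqref{equation:rowupdate}, the argument is mirror-symmetric: it is the term $b_i \boldm^{(e)}_{p_i}$ that cancels, and the nontrivial indicator differences concentrate around $\min\{p_j,p_{j+1}\}$ instead.

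The only genuine bookkeeping obstacle is to verify the sign and range of the indicator differences case by case, according to whether $p_j$ lies to the left of, inside, or to the right of the short interval $[\min\{p_i,p_{i+1}\},\max\{p_i,p_{i+1}\}]$ (and the analogous case for the $\max$ indicator). One also has to handle the degenerate situation $i=n-1$, $i+1=n$, where $p_i=p_{i+1}=1$ so the summation window is empty and the identity reduces to $A^{(e)}_{n-1,j}-A^{(e)}_{n,j}=b_{n-1}\boldm^{(e)}_1-b_n\boldm^{(e)}_1=-\boldm^{(e)}_1$, matching the explicit computation used in the proof of Proposition~\ref{proposition:nondecreasing}. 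Beyond these case distinctions, the proof is a line-by-line subtraction that introduces no new ideas beyond those already embodied in Proposition~\ref{proposition:matrixAF}.
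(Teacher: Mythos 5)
Your proposal is correct and follows essentially the same route as the paper: both expand the two entries via Proposition~\ref{proposition:matrixAF}, cancel the common $b_j\boldm^{(e)}_{p_j}$ term, and use the fact that $p_i$ and $p_{i+1}$ differ by at most one so that the indicator differences vanish outside the two-term window $k\in\{\min\{p_i,p_{i+1}\}-1,\min\{p_i,p_{i+1}\}\}$. Your extra remarks on the sign bookkeeping and the degenerate case $p_{n-1}=p_n=1$ are consistent with (and slightly more explicit than) the paper's one-line justification.
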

\begin{proof}
Note that~if~$i$ increases by~$1$ or if~$j$ decreases by~$1$, then~$p_i$ or~$p_j$~changes by at most~$1$ (See \eqref{equation:pipj}). Hence,
\begin{align*}
&\delta(\min\{p_i,p_j\}<k+1)=\delta(\min\{p_{i+1},p_j\}<k+1),\\
&\delta(\max\{p_i,p_j\}<k+2)=\delta(\max\{p_{i+1},p_j\}<k+2)
\end{align*}
for~$k\le \min\{p_j,p_{i+1}\}-2$ and~$k\ge \min\{p_i,p_{i+1}\}+1$.
According to~\eqref{equation:increasematrix}, we have that~\eqref{equation:columnupdate} holds, and similarly,~\eqref{equation:rowupdate} holds as well.
\end{proof}
\begin{algorithm}[h]\label{algorithm:find10}

\KwIn{Subsequence~$\boldd\in\{0,1\}^{n-2}$ of~$\boldc$, and~$f(\boldc)$}
\KwOut{$i$ and~$j$ such that~$\boldd(i,j)=\1_{10}(c)$}
\textbf{Initialization:} $i=2n-2,j=1$\;
$x_e=A^{(e)}_{1,2n-2}$~for~$e\in\{0,1,2\}$\;
$a_e=f_e(\boldc)-\sum^{n-3}_{k=1}\boldm^{(e)}_k\1_{10}(\boldd)_k \bmod n_e$ for~$e\in\{0,1,2\}$\;
\While{ $i\ge 0$}{
	\eIf{$x_e=a_e$ for every~$e\in\{0,1,2\}$ and~$\boldd(i,j)\ne \star$ and has no adjacent~$1$s' (satisfies \eqref{equation:checking11})}
      {
      return $i,j$\;
     }{
Find~the maximum $j$ for which~$A^{(0)}_{i,j}\le a_0$.

\eIf{$p_i=p_j$ or~($x_0>a_0$)}{
$temp\_x_e= x_e+ A^{(e)}_{i,j-1}-A^{(e)}_{i,j}$ (using~\eqref{equation:rowupdate}), for~$e\in\{0,1,2\}$\;
    $temp\_j=j-1$\;
     \While{ $p_{temp\_j}=p_i$}
    {$temp\_x_e= x_e+ A^{(e)}_{i,temp\_j-1}-A^{(e)}_{i,temp\_j}$ (using~\eqref{equation:rowupdate}) for~$e\in\{0,1,2\}$\;
    $temp\_j=temp\_j-1$\;}
    \If{$temp\_j\ge 1$}{
    $j=temp\_j$\;
    $x_e=temp\_x_e$~for~$e\in\{0,1,2,\}$\;
    }
    }{

$temp\_x_e= x_e+ A^{(e)}_{i,j+1}-A^{(e)}_{i,j}$ (using~\eqref{equation:rowupdate}), for~$e\in\{0,1,2\}$\;
    $temp\_j=j+1$\;
     \While{ $p_{temp\_j}=p_i$}
    {$temp\_x_e= x_e+ A^{(e)}_{i,temp\_j+1}-A^{(e)}_{i,temp\_j}$ (using~\eqref{equation:rowupdate}) for~$e\in\{0,1,2\}$\;
    $temp\_j=temp\_j+1$;}

\eIf{$temp\_x_0\le a_0$}{
   $j=temp\_j$\;
   $x_e=temp\_x_e$~for~$e\in\{0,1,2,\}$\;

   }{
   $x_e= x_e+ A^{(e)}_{i-1,j}-A^{(e)}_{i,j}$ (using~\eqref{equation:columnupdate})\;
   $i=i-1$\;
  }
    }}

 }
return $(0,0)$\;
    \caption{{\bf Finding~$\1_{10}(\boldc)$.} \label{Algorithm}}

\end{algorithm}

We first show that Algorithm~\ref{algorithm:find10} outputs the~$(i,j)$ pair such that~$\boldd(i,j)=\1_{10}(\boldc)$. Note that by Lemma~\ref{lemma:10indicator} there exists a unique sequence~$\boldd(i,j)=\1_{10}(\boldc)$ for which~$\boldd(i,j)$ satisfies Eq. \eqref{equation:checking11} and for which $(i,j)$ satisfies Eq. \eqref{equation:parityincrease}. Since the algorithm terminates either when such a sequence $\boldd(i,j)=\1_{10}(\boldc)$ is found or no such sequence is found and~$i$ reaches~$0$, it suffices to show that the latter case does not occur.~We prove this by contradiction. Assuming that the latter case occurs, we show that $\boldd(i,j)\ne \1_{10}(\boldc)$ for all~$(i,j)$ pairs, which is a contradiction. 
For each~$i\in\{1,2,\ldots,2n-2\}$, let $j_i$ be the maximum~$j=j_i$ for which $A^{(0)}_{i,j_i}\le a_0$. If~$A^{(0)}_{i,j}>a_0$ for all~$j$, then~$j_i=1$. Note that each pair~$(i,j_i)$ is visited in Algorithm~\ref{algorithm:find10} and by assumption we have that~$\boldd(i,j_i)\ne\1_{10}(\boldc)$.
We consider the following two cases
\begin{enumerate}
\item [$(1)$.] $j>j_i$
\item [$(2)$.] $j<j_i$ 
\end{enumerate}
and conclude that no~$(i,j)$ pairs in these cases result in~$\boldd(i,j)= \1_{10}(\boldc)$.
For $j>j_i$, by Proposition~\ref{proposition:nondecreasing} we have that~$A^{(0)}_{i,j}\ge A^{(0)}_{i,j_i}$ or that~$\boldd(i,j)=\star$. Hence by definition of~$j_i$
we have that~$A^{(0)}_{i,j}>a_0$ or that $\boldd(i,j)=\star$ and hence~$\boldd(i,j)\ne \1_{10}(\boldc)$. 
For~$j<j_i$, by Proposition~\ref{proposition:nondecreasing} we have that~$A^{(0)}_{i,j}\le A^{(0)}_{i,j_i}$ or that~$\boldd(i,j)=\star$. 
If~$A^{(0)}_{i,j}<A^{(0)}_{i,j_i}$, then~$A^{(0)}_{i,j}\ne a_0$. If~$A^{(0)}_{i,j}=A^{(0)}_{i,j_i}$, then according to Proposition~\ref{proposition:nondecreasing}, we have that~$\boldd(i,j)=\boldd(i,j_i)\ne \1_{10}(\boldc)$. 

We now show that Algorithm~\ref{algorithm:find10} terminates in~$O(n)$ time. 
From~\eqref{equation:columnupdate} and~\eqref{equation:rowupdate} the~$(i,j)$-th entry of~$A^{(e)},e\in\{0,1,2\}$, can be computed by using the update rule $x_e+ A^{(e)}_{i-1,j}-A^{(e)}_{i,j}$ and~$x_e+ A^{(e)}_{i,j\pm 1}-A^{(e)}_{i,j}$ (see Algorithm~\ref{algorithm:find10}), that can be computed in constant time. In addition, one can verify in constant time that~\eqref{equation:checking11} holds.

 Note that in each round, either~$i$ decreases by~$1$ or~$j$ increases by~$1$, with the exception that~$j$ decreases when~$A^{(0)}_{i,j}=\star$ or~$A^{(0)}_{i,j}>a_0$. We prove by contradiction that the latter case, in which~$A^{(0)}_{i,j}>a_0$ and~$j>1$ is impossible. Notice that for each current pair~$(i,j)$, the value of next pair~$(i^*,j^*)$ falls into either one of the following three case:
\begin{enumerate}
\item [$(1)$.] $(i^*,j^*)=(i,j')$ for some~$j'>j$ with~$A^{(0)}_{i^*,j^*}\le a_0$
\item [$(2)$.] $(i^*,j^*)=(i-1,j)$
\item [$(3)$.] $(i^*,j^*)=(i-1,j')$ for some~$j'<j$ when~$A^{(0)}_{i-1,j}=\star$.
\end{enumerate}

Assume by contradiction that~$A_{i^*,j^*}^{(0)}>a_0$ and~$j^*>1$, and $(i^*,j^*)$ is the first pair for which this statement is true. In Case~$(1)$, we have that~$A^{(0)}_{i^*,j^*}\le a_0$, in contradiction to~$A^{(0)}_{i^*,j^*}>a_0$. In Case~$(2)$ or Case~$(3)$, Proposition~\ref{proposition:nondecreasing} implies that~$a_0<A^{(0)}_{i^*,j^*}\le A^{(0)}_{i,j}$, contradicting the assumption that~$(i^*,j^*)$ is the first visited pair which satisfies $A^{(0)}_{i^*,j^*}>a_0$.

Having proved that~$A_{i,j}^{(0)}\le a_0$ whenever~$j>1$, we have the Algorithm~\ref{algorithm:find10} proceeds to left only when it encounters a~$\star$-entry. We now show that the algorithm terminates in~$O(n)$ time. Notice that unless Algorithm~\ref{algorithm:find10} encounters a~$\star$-entry, it proceeds either up or to the right, for which case, it is clear that only~$O(n)$ many steps occur. In cases where Algorithm~\ref{algorithm:find10} encounters a~$\star$-entry, it proceeds to the \textit{left} until a non~$\star$-entry is found. Since the number of~$\star$-entries is~$4n-4$,
the number of left strides of the algorithm is at most this quantity, and therefore
the algorithm terminates in at most~$O(n)$ time. In the following, we provide a running example of Algorithm~\ref
{algorithm:find10}.
\begin{example}
Consider~a sequence~$\boldc=(1,1,0,0,1,0,1,0)$, where the first and the~$6$-th bits are deleted, resulting in~$\boldd=(1,0,0,1,1,0)$. Then~$n=8$,~$\1_{10}(\boldc)=(0,1,0,0,1,0,1)$,~$f(\boldc)=(14,46,200)$, and~$\1_{10}(\boldd)=(1,0,0,0,1)$. Hence~$a_0=8,a_1=30,a_2=144$.

Then, Algorithm~\ref{algorithm:find10} proceeds in the following manner.
\begin{align*}
&i=1,j=14 ,p_i=p_j,x_0=7,x_1=28,x_2=140\\
\rightarrow &
i=2,j=14 ,\boldd(i,j)=(1,0,0,0,1,\underline{0},\underline{1}),x_0=7,x_1=28,x_2=140  \\ \rightarrow &
i=3,j=14 ,\boldd(i,j)=(1,0,0,0,\underline{0},1,\underline{1}),x_0=8,x_1=34,x_2=176, \\ \rightarrow &
i=4,j=14 ,\boldd(i,j)=(1,0,0,\underline{0},0,1,\underline{1}),x_0=8,x_1=34,x_2=176, \\ \rightarrow &
i=5,j=14 ,\boldd(i,j)=(1,0,\underline{0},0,0,1,\underline{1}),x_0=8,x_1=34,x_2=176, \\ \rightarrow &
i=6,j=14 ,\boldd(i,j)=(1,\underline{0},0,0,0,1,\underline{1}),x_0=8,x_1=34,x_2=176, \\ \rightarrow &
i=7,j=14 ,\boldd(i,j)=(\underline{0},1,0,0,0,1,\underline{1}),x_0=9,x_1=36,x_2=180 \\ \rightarrow &
i=7,j=13 ,\boldd(i,j)=(\underline{0},1,0,0,0,\underline{1},1),x_0=9,x_1=36,x_2=180 \\ \rightarrow &
i=7,j=12 ,\boldd(i,j)=(\underline{0},1,0,0,\underline{1},0,1),x_0=8,x_1=30,x_2=144
\end{align*}
\end{example}
\subsection{Recover the~original sequence~$\boldc$}
Let~$(i,j)$ be the output of Algorithm~\ref{algorithm:find10}, for which~we have that~$\boldd(i,j)=\1_{10}(\boldc)$. Let~$\boldc'$ be a length~$n$ supersequence after two insertions to~$\boldd$ such that~$\1_{10}(\boldc')=\1_{10}(\boldc)$. If~$b_i=1$, then inserting~$b_i$ to~$\1_{10}(\boldd)$ corresponds to either inserting a~$0$ to~$\boldd$ as the~$p_i+1$-th bit in~$\boldc'$ or inserting a~$1$ to~$\boldd$ as the~$p_i$-th bit in~$\boldc'$ (see Table~\ref{table:indicatordeletion}). If~$b_i=0$, then inserting~$b_i$ to~$\1_{10}(\boldd)$ corresponds to inserting a~$0$ or~$1$ in the first~$0$ run or~$1$ run respectively after the~$k'$-th bit in~$\boldc'$, where~$k'=\max_{k}\{\boldd(i,j)_k=1,k<p_i\}$. The same arguments hold for the insertion of~$b_j$.

Therefore, given the~$(i,j)$ pair that Algorithm~\ref{algorithm:find10} returns, there are at most four possible~$\boldc'$ supersequences of~$\boldd$ such that~$\1_{10}(\boldc')=\1_{10}(\boldc)$. One can check if the~$\boldc'$ sequences satisfy~$h(\boldc)$. According to Theorem~\ref{theorem:main}, there is a unique such sequence, the original sequence~$\boldc$ that satisfies both~$f(\boldc)$ and~$h(\boldc)$ simultaneously.

\bibliographystyle{IEEEtran}

\section*{Appendix}
\underline{Proof of~\eqref{equation:g_m-Case(a)} (Case~(a))}:
\begin{align*}
&\;\phantom{=}(\mathbbm{1}_{10}(\boldc)-\mathbbm{1}_{10}(\boldc'))\cdot \boldm^{(e)}\\
&= \sum_{t=\ell_1}^{\ell_2}\left( \mathbbm{1}_{10}(\boldc)_t-\mathbbm{1}_{10}(\boldc')_t \right)\cdot (\boldm^{(e)})_t+\sum_{t=k_2}^{k_1}\left( \mathbbm{1}_{10}(\boldc)_t-\mathbbm{1}_{10}(\boldc')_t \right)\cdot (\boldm^{(e)})_t\\
&= (\mathbbm{1}_{10}(\boldc)_{\ell_2}-\mathbbm{1}_{10}(\boldc')_{\ell_2})\cdot(\boldm^{(e)})_{\ell_2} + (\mathbbm{1}_{10}(\boldc)_{k_1}-\mathbbm{1}_{10}(\boldc')_{k_1})\cdot(\boldm^{(e)})_{k_1}+\\
~&\phantom{\equiv} \sum_{t=\ell_1}^{\ell_2-1}\left( \mathbbm{1}_{10}(\boldc)_t-\mathbbm{1}_{10}(\boldc)_{t+1} \right)\cdot (\boldm^{(e)})_t+\sum_{t=k_2}^{k_1-1}\left( \mathbbm{1}_{10}(\boldc')_{t+1}-\mathbbm{1}_{10}(\boldc')_{t} \right)\cdot (\boldm^{(e)})_t\\
&= (\mathbbm{1}_{10}(\boldc)_{\ell_2}-\mathbbm{1}_{10}(\boldc')_{\ell_2})\cdot(\boldm^{(e)})_{\ell_2} + (\mathbbm{1}_{10}(\boldc)_{k_1}-\mathbbm{1}_{10}(\boldc')_{k_1})\cdot(\boldm^{(e)})_{k_1}+\\
~&\phantom{\equiv} \sum_{t=\ell_1}^{\ell_2-1}\mathbbm{1}_{10}(\boldc)_t\cdot(\boldm^{(e)})_t -\sum_{\ell_1+1}^{\ell_2}\mathbbm{1}_{10}(\boldc)_{t}\cdot (\boldm^{(e)})_{t-1}\\
~&\phantom{\equiv}+\sum_{t=k_2+1}^{k_1}\mathbbm{1}_{10}(\boldc')_{t}\cdot (\boldm^{(e)})_{t-1}-\sum_{t=k_2}^{k_1-1}\mathbbm{1}_{10}(\boldc')_{t} \cdot (\boldm^{(e)})_t\\
&= (\mathbbm{1}_{10}(\boldc)_{\ell_2}-\mathbbm{1}_{10}(\boldc')_{\ell_2})\cdot(\boldm^{(e)})_{\ell_2} + (\mathbbm{1}_{10}(\boldc)_{k_1}-\mathbbm{1}_{10}(\boldc')_{k_1})\cdot(\boldm^{(e)})_{k_1}+\\
~&\phantom{\equiv} \mathbbm{1}_{10}(\boldc)_{\ell_1}\cdot(\boldm^{(e)})_{\ell_1}-\mathbbm{1}_{10}(\boldc)_{\ell_2}\cdot(\boldm^{(e)})_{\ell_2-1}+\sum_{t=\ell_1+1}^{\ell_2-1}\mathbbm{1}_{10}(\boldc)_t\cdot t^e+\\
~&\phantom{\equiv} \mathbbm{1}_{10}(\boldc')_{k_1}\cdot(\boldm^{(e)})_{k_1-1}-\mathbbm{1}_{10}(\boldc')_{k_2}\cdot(\boldm^{(e)})_{k_2}-\sum_{t=k_2+1}^{k_1-1}\mathbbm{1}_{10}(\boldc')_t\cdot t^e\\
&= (-\mathbbm{1}_{10}(\boldc')_{\ell_2})\cdot(\boldm^{(e)})_{\ell_2} + (\mathbbm{1}_{10}(\boldc)_{k_1})\cdot(\boldm^{(e)})_{k_1}+\\
~&\phantom{\equiv} \mathbbm{1}_{10}(\boldc)_{\ell_1}\cdot(\boldm^{(e)})_{\ell_1}+\sum_{t=\ell_1+1}^{\ell_2}\mathbbm{1}_{10}(\boldc)_t\cdot t^e-\mathbbm{1}_{10}(\boldc')_{k_2}\cdot(\boldm^{(e)})_{k_2}-\sum_{t=k_2+1}^{k_1}\mathbbm{1}_{10}(\boldc')_t\cdot t^e\\
~&= \mathbbm{1}_{10}(\boldc)_{\ell_1}\cdot(\boldm^{(e)})_{\ell_1}+\mathbbm{1}_{10}(\boldc)_{k_1}\cdot(\boldm^{(e)})_{k_1}+\sum_{t=\ell_1+1}^{\ell_2}\mathbbm{1}_{10}(\boldc)_t\cdot t^e\\
~&\phantom{\equiv} -\sum_{t=k_2+1}^{k_1}\mathbbm{1}_{10}(\boldc')_t\cdot t^e-\left( \mathbbm{1}_{10}(\boldc')_{\ell_2}\cdot(\boldm^{(e)})_{\ell_2}+\mathbbm{1}_{10}(\boldc')_{k_2}\cdot(\boldm^{(e)})_{k_2} \right)\\
~&= g_{\boldm^{(e)},\ell_1}(\mathbbm{1}_{10}(\boldc)_{\ell_1},\ldots,\mathbbm{1}_{10}(\boldc)_{\ell_2},\mathbbm{1}_{10}(\boldc')_{\ell_2})- g_{\boldm^{(e)},k_2}(\mathbbm{1}_{10}(\boldc')_{k_2},\ldots,\mathbbm{1}_{10}(\boldc')_{k_1},\mathbbm{1}_{10}(\boldc)_{k_1})
\end{align*}

\underline{Proof of~\eqref{equation:g_m-Case(b)} (Case (b))}:
\begin{align*}
&\;\phantom{=}(\mathbbm{1}_{10}(\boldc)-\mathbbm{1}_{10}(\boldc'))\cdot \boldm^{(e)}\\
&= \sum_{t=\ell_1}^{\ell_2}\left( \mathbbm{1}_{10}(\boldc)_t-\mathbbm{1}_{10}(\boldc')_t \right)\cdot (\boldm^{(e)})_t+\sum_{t=k_1}^{k_2}\left( \mathbbm{1}_{10}(\boldc)_t-\mathbbm{1}_{10}(\boldc')_t \right)\cdot (\boldm^{(e)})_t\\
&= (\mathbbm{1}_{10}(\boldc)_{\ell_2}-\mathbbm{1}_{10}(\boldc')_{\ell_2})\cdot(\boldm^{(e)})_{\ell_2} + (\mathbbm{1}_{10}(\boldc)_{k_2}-\mathbbm{1}_{10}(\boldc')_{k_2})\cdot(\boldm^{(e)})_{k_2}+\\
~&\phantom{\equiv} \sum_{t=\ell_1}^{\ell_2-1}\left( \mathbbm{1}_{10}(\boldc)_t-\mathbbm{1}_{10}(\boldc)_{t+1} \right)\cdot (\boldm^{(e)})_t+\sum_{t=k_1}^{k_2-1}\left( \mathbbm{1}_{10}(\boldc)_t-\mathbbm{1}_{10}(\boldc)_{t+1} \right)\cdot (\boldm^{(e)})_t\\
&= (\mathbbm{1}_{10}(\boldc)_{\ell_2}-\mathbbm{1}_{10}(\boldc')_{\ell_2})\cdot(\boldm^{(e)})_{\ell_2} + (\mathbbm{1}_{10}(\boldc)_{k_2}-\mathbbm{1}_{10}(\boldc')_{k_2})\cdot(\boldm^{(e)})_{k_2}+\\
~&\phantom{\equiv} \sum_{t=\ell_1}^{\ell_2-1}\mathbbm{1}_{10}(\boldc)_t\cdot(\boldm^{(e)})_t -\sum_{\ell_1+1}^{\ell_2}\mathbbm{1}_{10}(\boldc)_{t}\cdot (\boldm^{(e)})_{t-1}+	 \sum_{t=k_1}^{k_2-1}\mathbbm{1}_{10}(\boldc)_{t}\cdot (\boldm^{(e)})_{t}\\
~&\phantom{\equiv}-\sum_{t=k_1+1}^{k_2}\mathbbm{1}_{10}(\boldc)_{t} \cdot (\boldm^{(e)})_{t-1}\\
&= (\mathbbm{1}_{10}(\boldc)_{\ell_2}-\mathbbm{1}_{10}(\boldc')_{\ell_2})\cdot(\boldm^{(e)})_{\ell_2} + (\mathbbm{1}_{10}(\boldc)_{k_2}-\mathbbm{1}_{10}(\boldc')_{k_2})\cdot(\boldm^{(e)})_{k_2}+\\
~&\phantom{\equiv} \mathbbm{1}_{10}(\boldc)_{\ell_1}\cdot(\boldm^{(e)})_{\ell_1}-\mathbbm{1}_{10}(\boldc)_{\ell_2}\cdot(\boldm^{(e)})_{\ell_2-1}+\sum_{t=\ell_1+1}^{\ell_2-1}\mathbbm{1}_{10}(\boldc)_t\cdot t^e+\\
~&\phantom{\equiv} \mathbbm{1}_{10}(\boldc)_{k_1}\cdot(\boldm^{(e)})_{k_1}-\mathbbm{1}_{10}(\boldc)_{k_2}\cdot(\boldm^{(e)})_{k_2-1}+\sum_{t=k_1+1}^{k_2-1}\mathbbm{1}_{10}(\boldc)_t\cdot t^e\\
&= (-\mathbbm{1}_{10}(\boldc')_{\ell_2})\cdot(\boldm^{(e)})_{\ell_2} + (-\mathbbm{1}_{10}(\boldc')_{k_2})\cdot(\boldm^{(e)})_{k_2}+\\
~&\phantom{\equiv} \mathbbm{1}_{10}(\boldc)_{\ell_1}\cdot(\boldm^{(e)})_{\ell_1}+\sum_{t=\ell_1+1}^{\ell_2}\mathbbm{1}_{10}(\boldc)_t\cdot t^e+\mathbbm{1}_{10}(\boldc)_{k_1}\cdot(\boldm^{(e)})_{k_1}+\sum_{t=k_1+1}^{k_2}\mathbbm{1}_{10}(\boldc)_t\cdot t^e\\
~&= \mathbbm{1}_{10}(\boldc)_{\ell_1}\cdot(\boldm^{(e)})_{\ell_1}+\mathbbm{1}_{10}(\boldc)_{k_1}\cdot(\boldm^{(e)})_{k_1}-\left( \mathbbm{1}_{10}(\boldc')_{\ell_2}\cdot(\boldm^{(e)})_{\ell_2}+\mathbbm{1}_{10}(\boldc')_{k_2}\cdot(\boldm^{(e)})_{k_2} \right)+\\
~&\phantom{\equiv} \sum_{t=\ell_1+1}^{\ell_2}\mathbbm{1}_{10}(\boldc)_t\cdot t^e+\sum_{t=k_1+1}^{k_2}\mathbbm{1}_{10}(\boldc)_t\cdot t^e\\
~&= g_{\boldm^{(e)},\ell_1}(\mathbbm{1}_{10}(\boldc)_{\ell_1},\ldots,\mathbbm{1}_{10}(\boldc)_{\ell_2},\mathbbm{1}_{10}(\boldc')_{\ell_2})+ g_{\boldm^{(e)},k_1}(\mathbbm{1}_{10}(\boldc)_{k_1},\ldots,\mathbbm{1}_{10}(\boldc)_{k_2},\mathbbm{1}_{10}(\boldc')_{k_2})
\end{align*}

\underline{Proof of~\eqref{equation:g_m-Case(c)} (Case (c))}:
\begin{align*}
&\;\phantom{=}(\mathbbm{1}_{10}(\boldc)-\mathbbm{1}_{10}(\boldc'))\cdot \boldm^{(e)}\\
&=\sum_{t=\ell_1}^{k_1-2}(\mathbbm{1}_{10}(\boldc)_t-\mathbbm{1}_{10}(\boldc')_t)\cdot(\boldm^{(e)})_t+\sum_{t=k_1-1}^{\ell_2-1}(\mathbbm{1}_{10}(\boldc)_t-\mathbbm{1}_{10}(\boldc')_t)\cdot(\boldm^{(e)})_t\\
&\phantom{+}+\sum_{t=\ell_2}^{k_2}(\mathbbm{1}_{10}(\boldc)_t-\mathbbm{1}_{10}(\boldc')_t)\cdot(\boldm^{(e)})_t\\
&= \sum_{t=\ell_1}^{k_1-2}(\mathbbm{1}_{10}(\boldc)_t-\mathbbm{1}_{10}(\boldc)_{t+1})\cdot(\boldm^{(e)})_{t}+ \sum_{t=k_1-1}^{\ell_2-1}(\mathbbm{1}_{10}(\boldc)_t-\mathbbm{1}_{10}(\boldc)_{t+2})\cdot(\boldm^{(e)})_t+\\
&\phantom{+} (\mathbbm{1}_{10}(\boldc)_{\ell_2}-\mathbbm{1}_{10}(\boldc')_{\ell_2})\cdot(\boldm^{(e)})_{\ell_2}+(\mathbbm{1}_{10}(\boldc)_{k_2}-\mathbbm{1}_{10}(\boldc')_{k_2})\cdot(\boldm^{(e)})_{k_2}+\\
&\phantom{+} \sum_{t=\ell_2+1}^{k_2-1}(\mathbbm{1}_{10}(\boldc)_t-\mathbbm{1}_{10}(\boldc)_{t+1})\cdot(\boldm^{(e)})_t\\
&= \sum_{t=\ell_1}^{k_1-2}\mathbbm{1}_{10}(\boldc)_t\cdot(\boldm^{(e)})_{t}-\sum_{t=\ell_1+1}^{k_1-1}\mathbbm{1}_{10}(\boldc)_{t}\cdot(\boldm^{(e)})_{t-1}+\sum_{t=k_1-1}^{\ell_2}\mathbbm{1}_{10}(\boldc)_t\cdot(\boldm^{(e)})_t\\
&\phantom{+} -\sum_{t=k_1+1}^{\ell_2+1}\mathbbm{1}_{10}(\boldc)_{t}\cdot(\boldm^{(e)})_{t-2}+ (-\mathbbm{1}_{10}(\boldc')_{\ell_2})\cdot(\boldm^{(e)})_{\ell_2}+(-\mathbbm{1}_{10}(\boldc')_{k_2})\cdot(\boldm^{(e)})_{k_2}+\\
&\phantom{+} \sum_{t=\ell_2+1}^{k_2}\mathbbm{1}_{10}(\boldc)_t\cdot(\boldm^{(e)})_t-\sum_{t=\ell_2+2}^{k_2}\mathbbm{1}_{10}(\boldc)_{t}\cdot(\boldm^{(e)})_{t-1}\\
&= \mathbbm{1}_{10}(\boldc)_{\ell_1}(\boldm^{(e)})_{\ell_1}-\mathbbm{1}_{10}(\boldc)_{k_1-1}(\boldm^{(e)})_{k_1-2}+\sum_{t=\ell_1+1}^{k_1-2}\mathbbm{1}_{10}(\boldc)_t\cdot t^e+\\
&\phantom{+}\mathbbm{1}_{10}(\boldc)_{k_1-1}(\boldm^{(e)})_{k_1-1}+\mathbbm{1}_{10}(\boldc)_{k_1}(\boldm^{(e)})_{k_1}-\mathbbm{1}_{10}(\boldc)_{\ell_2+1}(\boldm^{(e)})_{\ell_2-1}\\
&\phantom{+}+\sum_{t=k_1+1}^{\ell_2}\mathbbm{1}_{10}(\boldc)_t(t^e+(t-1)^e)+ (-\mathbbm{1}_{10}(\boldc')_{\ell_2})\cdot(\boldm^{(e)})_{\ell_2}+(-\mathbbm{1}_{10}(\boldc')_{k_2})\cdot(\boldm^{(e)})_{k_2}+\\
&\phantom{+} \mathbbm{1}_{10}(\boldc)_{\ell_2+1}(\boldm^{(e)})_{\ell_2+1}+\sum_{t=\ell_2+2}^{k_2}\mathbbm{1}_{10}(\boldc)_t t^e\\
&= \mathbbm{1}_{10}(\boldc)_{\ell_1}(\boldm^{(e)})_{\ell_1}+\mathbbm{1}_{10}(\boldc)_{k_1}(\boldm^{(e)})_{k_1}-(\mathbbm{1}_{10}(\boldc')_{\ell_2}\cdot(\boldm^{(e)})_{\ell_2}+\mathbbm{1}_{10}(\boldc')_{k_2}\cdot(\boldm^{(e)})_{k_2})+\\
&\phantom{+} \sum_{t=\ell_1+1}^{k_1-1}\mathbbm{1}_{10}(\boldc)_t\cdot t^e+\sum_{t=k_1+1}^{\ell_2+1}\mathbbm{1}_{10}(\boldc)_t(t^e+(t-1)^e)+\sum_{t=\ell_2+2}^{k_2}\mathbbm{1}_{10}(\boldc)_t t^e\\
&= \mathbbm{1}_{10}(\boldc)_{\ell_1}(\boldm^{(e)})_{\ell_1}+\mathbbm{1}_{10}(\boldc)_{k_1}(\boldm^{(e)})_{k_1}-(\mathbbm{1}_{10}(\boldc')_{\ell_2}\cdot(\boldm^{(e)})_{\ell_2}+\mathbbm{1}_{10}(\boldc')_{k_2}\cdot(\boldm^{(e)})_{k_2})+\\
&\phantom{+} \sum_{t=\ell_1+1}^{k_1-1}\mathbbm{1}_{10}(\boldc)_t\cdot t^e+\sum_{t=k_1}^{\ell_2}\mathbbm{1}_{10}(\boldc)_{t+1}t^e+\sum_{t=k_1+1}^{k_2}\mathbbm{1}_{10}(\boldc)_t t^e\\
&= g_{\boldm^{(e)},\ell_1}(\mathbbm{1}_{10}(\boldc)_{\ell_1},\ldots,\mathbbm{1}_{10}(\boldc)_{k_1-1},\mathbbm{1}_{10}(\boldc)_{k_1+1},\ldots,\mathbbm{1}_{10}(\boldc)_{\ell_2+1},\mathbbm{1}_{10}(\boldc')_{\ell_2}  )+\\
&\phantom{+}g_{\boldm^{(e)},k_1}( \mathbbm{1}_{10}(\boldc)_{k_1},\ldots,\mathbbm{1}_{10}(\boldc)_{k_2},\mathbbm{1}_{10}(\boldc')_{k_2})
\end{align*}
\end{document}